\newcommand{\bigast}{\mathop{\scalebox{1.5}{\raisebox{-0.2ex}{$*$}}}}%
\newcommand{\arr}{\normalfont\textsf{arr}}
\newcommand{\bararr}{\overline{\normalfont\textsf{arr}}}
\newcommand{\Arr}{\normalfont\textsf{Arr}}
\def\inter#1{{[\![#1]\!]}}
\def\termS{{\normalfont\textsf{termS}}}
\algnewcommand\algorithmicforeach{\textbf{for each}}
\begin{document}
\title{Incorrectness Separation Logic with Arrays and Pointer Arithmetic
}
%
%

\author{Yeonseok Lee\thanks{\email{lee.yeonseok.x2@s.mail.nagoya-u.ac.jp}}  \and 
Koji Nakazawa\thanks{\email{knak@i.nagoya-u.ac.jp}}}

\authorrunning{Y. Lee and K. Nakazawa}

\institute{Nagoya University, Nagoya, Japan 
}
\maketitle              

\begin{abstract}
Incorrectness Separation Logic (ISL) is a proof system designed to automate verification and detect bugs in programs manipulating heap memories. 
In this study, we extend ISL to support variable-length array predicates and pointer arithmetic. 
Additionally, we prove the relative completeness of this extended ISL by
 constructing the weakest postconditions. 
Relative completeness means that all valid ISL triples are provable, assuming an oracle capable of checking entailment between formulas; this property ensures the reliability of the proof system.
\keywords{Program Verification \and Separation Logic \and Incorrectness Logic \and Array  \and Completeness.}
\end{abstract}

\section{Introduction}
\subsection{Background}

The automation of software verification is a critical aspect of software engineering, as it helps ensure the reliability of software systems at an economical cost.
To achieve this, researchers and practitioners often rely on {formal methods} to verify software correctness and identify incorrectness.

A classical yet influential formal method is Hoare Logic (HL) \cite{hoare1969axiomatic}, which focuses on verifying \textit{correctness}. A Hoare triple is written as
$\{P\} \ \mathbb{C} \ \{Q\}$,
where \(P\) is a precondition and \(Q\) is a postcondition. This triple states that for every state \(s\) satisfying \(P\), if the command \(\mathbb{C}\) terminates in state \(s'\), then \(s'\) satisfies \(Q\). Equivalently, if we let \(\textsf{post}(\mathbb{C}, P)\) denote the set of possible states after executing \(\mathbb{C}\) from any initial state in \(P\), then \(Q\) \textit{over-approximates} \(\textsf{post}(\mathbb{C}, P)\), i.e., \(\textsf{post}(\mathbb{C}, P) \subseteq Q\).

Recently, research has also focused on formal methods that verify \textit{incorrectness}, aiming to identify the presence of bugs. 
Reverse Hoare Logic (RHL) \cite{de2011reverse} is one such method.
The RHL triple is written as
$[P] \ \mathbb{C} \ [Q]$\footnote{In some literature, the notation  
\([P] \ \mathbb{C} \ [Q]\) is used to denote a Hoare triple representing  
total correctness, which states that if \(\mathbb{C}\) is executed in a state satisfying \(P\),  
it always terminates, and the final state satisfies \(Q\).
However, in this paper, we adopt square brackets following the conventions established in prior works \cite{o2019incorrectness,raad2020local}.},
which indicates that for every state \(s'\) in \(Q\), \(s'\) can be reached by executing \(\mathbb{C}\) from some state \(s\) in \(P\). Here, \(Q\) \textit{under-approximates} \(\textsf{post}(\mathbb{C}, P)\), i.e., \(\textsf{post}(\mathbb{C}, P) \supseteq Q\). Note that the direction of inclusion is reversed from Hoare Logic (HL) to RHL.
If $Q$ represents an unexpected state and \([P] \ \mathbb{C} \ [Q]\) is valid, we can conclude that this triple implies $\mathbb{C}$ contains a bug.

To illustrate the difference between Hoare Logic (HL) and Reverse Hoare Logic (RHL), we present the following examples, which were also introduced in our previous work \cite{lee2024relative}.
\begin{itemize}
    \item \(\{\top\} \ x := y \ \{\top\}\) is valid in HL but \([\top] \ x := y \ [\top]\) is not valid in RHL. In the RHL setting, the state \((x=1 \land y=2)\) (which satisfies \(\top\) as a postcondition) cannot be reached from a state satisfying \(\top\) by executing \(x := y\).
    
    \item \(\{\top\} \ x := y \ \{ 0<x<10 \land x=y\}\) is not valid in HL, whereas \([\top] \ x := y \ [ 0<x<10 \land x=y]\) is valid in RHL. In HL, the final state \((x=10 \land y=10)\) can be reached from \(\top\) and does not satisfy \((0<x<10 \land x=y)\), hence the triple is not valid.
\end{itemize}

Another logic focusing on under-approximation is Incorrectness Logic (IL) \cite{o2019incorrectness}, which has a similar form to RHL but embeds an additional exit condition into the postcondition:
$[P] \ \mathbb{C} \ [\epsilon: Q]$,
where \(\epsilon\) can be \textit{ok} for normal termination or \textit{er} for erroneous termination.

Separation Logic (SL), introduced by Reynolds \cite{reynolds2002separation}, builds upon HL to enable modular reasoning about pointer-manipulating programs. Its key operator, the \textit{separating conjunction} (\(*\)), partitions the heap into disjoint parts, each of which can be reasoned about independently. A prominent feature of SL is the $\textsc{Frame}$ rule:
\[
\inferrule[]
{\{P\} \ \mathbb{C} \ \{Q\}}
{\{P * R\} \ \mathbb{C} \ \{Q * R\}},
\]
which states that if \(\mathbb{C}\) does not modify the portion of memory described by \(R\), then \(R\) remains unaffected and can be ``framed'' into the pre- and postconditions. 
This locality principle significantly improves the scalability of verifications for large  programs. 
Numerous verification tools (e.g., \textit{Infer} by Meta \cite{calcagno2011infer,calcagno2015open}) have adopted SL as their theoretical basis.
Consider the following C code to demonstrate the $\textsc{Frame}$ rule.

\lstset{
  language=C,           
  basicstyle=\ttfamily, 
  numbers=left,         
  stepnumber=1,         
  frame=single,         
  tabsize=4,            
  breaklines=true       
}
\begin{lstlisting}[language=C]
int *x = malloc(sizeof(int));  *x = 1; 
int *y = malloc(sizeof(int)); *y = 2;
free(x);
\end{lstlisting}
The $\textsc{Frame}$ rule can express the situation.
\[
\inferrule[]
{\{x \mapsto 1 \} \ \texttt{free($x$)} \ \{\textsf{emp}\}}
{\{x \mapsto 1 * y \mapsto 2 \} \ \texttt{free($x$)} \ \{\textsf{emp} * y \mapsto 2\}}
\]
When we call \texttt{free($x$)}, the block pointed to by $x$ is deallocated, so the postcondition becomes $\{\textsf{emp} * y \mapsto 2\}$. 
Crucially, since the operation \texttt{free(x)} does not affect the
block described by $y$, the $\textsc{Frame}$ rule ensures that $y$
remains unchanged.
This allows us to ``frame'' the $y$-part of the heap into the pre- and postconditions without requiring additional reasoning.

By drawing upon these ideas, Incorrectness Separation Logic (ISL) was introduced by Raad et al. \cite{raad2020local} to extend Incorrectness Logic (IL) with heap-manipulation features similar to those in Separation Logic (SL). 
They demonstrated that the naive application of the $\textsc{Frame}$ rule is unsound in the IL setting.
The following example from \cite{raad2020local} illustrates this problem.
\[
\inferrule*
{[x \mapsto - ] \ \texttt{free($x$)} \ [\textit{ok}:\textsf{emp}] }
{[x \mapsto - * x \mapsto - ] \ \texttt{free($x$)} \ [\textit{ok}:\textsf{emp} * x \mapsto -]}
\]
The premise is valid, but the conclusion is not,
since the conclusion means that every state satisfying the postcondition $\textsf{emp} * x \mapsto -$ can be reached from some state satisfying the precondition \(x \mapsto - * x \mapsto -\), which is unsatisfiable.
As a result, the inference is unsound.
However, this issue can be resolved by introducing the negative heap assertion \(x \not\mapsto\), as demonstrated in \cite{raad2020local}.
This assertion indicates that \(x\) was once part of the heap but has since been deallocated.
\[
\inferrule*
{[x \mapsto - ] \ \texttt{free($x$)} \ [ \textit{ok}: x \not\mapsto] }
{[x \mapsto - * x \mapsto - ] \ \texttt{free($x$)} \ [ \textit{ok}: x \not\mapsto * x \mapsto -]}
\]
In this corrected version, both triples are valid. The precondition and postcondition of the conclusion are false, making the triple trivially valid.

Both soundness and completeness are essential for logical proof systems to establish reliability.  
While \textit{soundness} guarantees that every provable statement is valid, \textit{completeness} ensures that all valid statements can be proven. 
However, Hoare Logic is famously known to be incomplete. 
To address this, Cook introduced the notion of \textit{relative completeness} \cite{cook1978soundness}.  
Relative completeness states that the logic is complete if an oracle exists to decide the validity of entailments.
Thus, if a proof system is relatively complete, it can have strong reliability.

\subsection{Our Contribution: ISL with Arrays and Pointer Arithmetic}
In this paper, we introduce an extension of ISL by adding support for arrays and pointer arithmetic. 
Consider the following C code, which contains an error related to array handling.
\begin{lstlisting}
int *array = (int *)malloc(10 * sizeof(int));
free(array + 1);    // Error: Free the memory at a non-head position
\end{lstlisting}
This code has an error because it attempts to free a heap cell that is not the head of the allocated block.  
Existing ISL systems \cite{raad2020local,lee2024relative} cannot describe this scenario due to their lack of support for array predicates.
To address this limitation, in this paper, we introduce an extension of ISL to handle such cases.  
Our contributions are as follows.
\begin{itemize}
    \item[$\bullet$] \textbf{Introducing ISL with arrays and pointer arithmetic (Section \ref{sec: body-ISL})}  
    
We extend the existing ISL framework \cite{raad2020local,lee2024relative} by incorporating array predicates and pointer arithmetic.  
Our framework builds upon two existing Separation Logics: traditional array-based SL \cite{brotherston2017biabduction,kimura2021decidability} and SL with arrays \cite{holik2022low}, where models incorporate memory blocks. 
Specifically, our approach utilizes array predicates with variable-size lengths \cite{brotherston2017biabduction,kimura2021decidability}, allowing for the description of more flexible arrays compared to predicates that specify only fixed lengths and starting points.

    \item[$\bullet$] \textbf{Defining weakest postcondition generating functions (Section \ref{sec: body-wpo})}  
    
Following the methods of \cite{de2011reverse,lee2024relative}, 
we define a function constructing the weakest postconditions
and prove the \textit{expressiveness}, a property that ensures these functions accurately describe the sets expressing the weakest postconditions.  
This procedure is essential for proving relative completeness.  
It is worth noting that while Cook used weakest \textit{preconditions} to prove the relative completeness of Hoare Logic (HL) \cite{cook1978soundness}, our approach focuses on weakest \textit{postconditions}. 
This is because HL relies on over-approximation, whereas Reverse HL and ISL use under-approximation.

Our work extends this previous approach by proving expressiveness for heap manipulation rules  
involving arrays and pointer arithmetic.
For other rules that do not involve heap manipulation, we rely on the results of \cite{lee2024relative}, as these rules yield the same outcomes in both settings.

    \item[$\bullet$] \textbf{Proving relative completeness (Section \ref{sec: body-completeness})}  
    
We prove the relative completeness of this extended ISL by following a method similar to that in \cite{lee2024relative}.

\end{itemize}

\section{Incorrectness Separation Logic}\label{sec: body-ISL}
We define a modified version of ISL, extending ISL \cite{raad2020local,lee2024relative}, to handle arrays and pointer arithmetic.
We assume 
a countably infinite set of variables, \(\textsc{Var}\). 
We use the metavariables \(x, y, z, \ldots\) to denote variables
and \(n\) to denote elements of natural numbers.

\subsection{Assertions}

First, we introduce assertions in ISL, which are potentially infinite disjunctions of existentially quantified symbolic heaps.  
Quantifier-free symbolic heaps consist of equalities (\(x \approx y\)) and inequalities (\(x \not\approx y\)) between terms,  
as well as weak and strict inequalities (\(x \leq y\) and \(x < y\), respectively).
They also include heap predicates such as singleton heaps (\(t \mapsto u\)), variable-length array predicates (\(\arr(t, t')\)), and negative array predicates (\(\bararr(t, t')\)).

The negative array predicates (\(\bararr(t, t')\)) represent a set of consecutive negative heaps with sequential addresses. 
This extends the concept of negative heaps (\(x \not\mapsto\)) from the original ISL \cite{raad2020local}, 
adapting it to handle arrays with consecutive memory locations.  
\(x \not\mapsto\) is an abbreviation of \(\bararr(x, x+1)\) in our setting.

Symbolic heaps are connected using the separating conjunction (\(*\)). The logical operators \(\lor\), \(\land\), and \(\lnot\) are not permitted within symbolic heaps.

\begin{definition}[Assertions of ISL]\normalfont 

An assertion $P$ is defined as follows, where $I$ is finite or
 countably infinite index set.

\begin{align*}
    P ::= \ & \bigvee_{i \in I} \exists \overrightarrow{x_i} . \psi_i
    & {\text{Top-level}}\\
    \psi ::= \ &  \psi *  \psi   & {\text{Symbolic Heaps}}\\
    \ &  | \ {\normalfont\textsf{emp}}  
    \ | \ t \mapsto u
    \ | \ \arr (t,t') 
    \ | \ \bararr (t,t') 
    & {\text{Spatial Formulas}}\\
    &| \ t \approx t' \ | \ t \not\approx t' \ | \ t \leq t' \ | \ t < t' & {\text{Pure Formulas}}\\
    t ::= \ & n \ | \ x  \ | \ t + t' \ | \ b (t) \ | \ e (t) \ | \ {\normalfont\texttt{null}}  & {\text{Terms}} 
\end{align*}

\end{definition}

Here, we employ vector notation to represent a sequence. For instance, \(\overrightarrow{x_i}\) denotes the sequence \(x_{i1}, \ldots, x_{in}\).  
When \(I = \emptyset\),  
\(\bigvee_{i \in I} \exists\overrightarrow{x_i}.\psi_i\) is denoted by \(\normalfont\textsf{false}\).

We ignore the order of disjuncts in assertions and atomic formulas in symbolic heaps.  
Specifically, we identify the formulas \(\psi_1 * \psi_2\) and \(\psi_2 * \psi_1\).  
Additionally, we refer to a formula connected by \( * \) and containing only atomic pure formulas 
as a pure formula.
We use the metavariable $\pi$ to denote pure formulas.
\( \psi_{\normalfont\text{pure}} \) denotes the symbolic heap obtained by extracting only the pure formulas from \( \psi \).

\( b(t) \) represents the head address of the memory block containing \( t \), and \( e(t) \) denotes the tail address of that block.  
We impose a constraint that \( t \) in \( b(t) \) or \( e(t) \) cannot  contain \( b \) or \( e \). For example, terms like \( b(b(t)) \) are not allowed in this setting.

For brevity, we represent the full expression  
\(t_0 \mathrel{R_0} t_1 * t_1 \mathrel{R_1} t_2 * \ldots * t_{n-1} \mathrel{R_{n-1}} t_n\),  
where \(R_i \in \{<, \approx\}\),  
by the shorter expression  
\(t_0 \mathrel{R_0} t_1 \mathrel{R_1} t_2 \ldots t_{n-1} \mathrel{R_{n-1}} t_n\).  
For example, \(x < y * y < z\) is written as \(x < y < z\).

\begin{definition}[$\normalfont\textsf{term}(t)$]
 \label{def: formal definition of term(t)}
 \normalfont
 We define the set ${\textsf{term}}(t)$ of terms for a term $t$ as follows.
\begin{align*}
  \textsf{term}(t + t') &= \{ t + t' \} \cup \textsf{term}(t) \cup \textsf{term}(t') \\ 
  \textsf{term}(b(t)) &= \{ b(t) \} \cup \textsf{term}(t) \\
  \textsf{term}(e(t)) &= \{ e(t) \} \cup \textsf{term}(t) \\
  \textsf{term}(n) &= \{ n \}, \quad
  \textsf{term}(x) = \{ x \}, \quad
  \textsf{term}(\texttt{null}) = \emptyset
\end{align*}

\end{definition}

\begin{definition}[$\normalfont\textsf{termS}(\psi)$ and $\normalfont\textsf{termS}^-(\psi)$]
\label{def: formal definition of terms(psi)}
 \normalfont
 We define the set $\textsf{termS}(\psi)$ of terms for a symbolic
 heap $\psi$ as follows.
  \begin{align*}
    \textsf{termS}(\psi_1 * \psi_2) &= 
    \textsf{termS}(\psi_1) \cup \textsf{termS}(\psi_2) \\
    \textsf{termS}(t \mapsto u ) &= \textsf{term}(t) \cup \textsf{term}(t+1) \cup \textsf{term}(u) \\
    \textsf{termS}(\arr(t, t')) &= \textsf{termS}(\bararr(t, t')) 
    = \textsf{termS}(t \approx t')   \\ 
    &=\textsf{termS}(t \not\approx t') 
    = \textsf{termS}(t \leq t') \\
    &= \textsf{termS}(t < t') 
    = \textsf{term}(t) \cup \textsf{term}(t') \\
        \textsf{termS}(\textsf{emp}) &= \emptyset
  \end{align*}
Furthermore, we define
\[
 \termS^-(\psi)
=\{r\in\termS(\psi)\mid \text{$r$ is contains neither $b$ nor $e$}\}.
\]
\end{definition}
 
Note that $\termS(\psi)$ contains not only terms that appear freely in $\psi$ but also related terms, such as $t+1$ in $t \mapsto u$.

For the semantics, we assume the set \(\textsc{Val}\) of values and the set \(\textsc{Loc}\) of locations, such that  
\(\textsc{Loc} \subseteq \textsc{Val}\) and \(\textsc{Val} = \mathbb{N}\) (non-negative integers).  
We assume \(0\) denotes the null pointer and \(0 \in \textsc{Val} \setminus \textsc{Loc}\).
Additionally, \(\bot \not\in \textsc{Val}\)\footnote{The concept of using \(\bot\) to signify deallocated locations was introduced by Raad et al. \cite{raad2020local}, specifically for negative heaps, expressed as \(x \not\mapsto\).  
It is important to note that \(\bot\) does not denote ``undefined'', as in other literature.} is used to track deallocated locations.

A \emph{state} $\normalfont \sigma \in \textsc{State}$ is defined as a triple  
$(s, h, B)$ (often called a heap model).  
A state consists of a store $\normalfont s \in \textsc{Store}$, a heap $\normalfont h \in \textsc{Heap}$,  
and a block set $\normalfont B \in \textsc{Blocks}$.

A \emph{store} is a total function from \(\normalfont \textsc{Var}\) to \(\normalfont \textsc{Val} \cup \{0\}\).  
For a store \(s\), the notation \(s[x \mapsto v]\) represents the store defined as  
$s[x \mapsto v](x) = v$  and  $s[x \mapsto v](y) = s(y)$ for $y \neq x$.

A \emph{heap} is a finite partial function from $\normalfont \textsc{Loc}$ to $\normalfont \textsc{Val} \cup \{\bot\}$.  
The update $h[l \mapsto v]$ for a heap $h$ is defined analogously to $s[x \mapsto v]$.  
Additionally, \(h[l \mapsto \uparrow]\) represents the heap defined by  
\(\textsf{dom}(h[l \mapsto \uparrow]) = \textsf{dom}(h) \setminus \{ l \}\),  
and \(h[l \mapsto \uparrow](l') = h(l')\) for \(l \neq l'\).
Two heaps \(h_1\) and \(h_2\) are \emph{disjoint} if  
${\normalfont{\textsf{dom}}}(h_1) \cap {\normalfont{\textsf{dom}}}(h_2) = \emptyset$.
The operation \(h_1 \circ h_2\) combines two disjoint heaps \(h_1\) and \(h_2\) and is defined as follows:
\[
(h_1 \circ h_2)(l) = 
\begin{cases} 
h_1(l) & \text{if } l \in {\normalfont{\textsf{dom}}}(h_1), \\ 
h_2(l) & \text{if } l \in {\normalfont{\textsf{dom}}}(h_2), \\ 
\text{undefined} & \text{if } l \notin {\normalfont{\textsf{dom}}}(h_1) \cup {\normalfont{\textsf{dom}}}(h_2).
\end{cases}
\]  
The operation \(h_1 \circ h_2\) is undefined if \(h_1\) and \(h_2\) are not disjoint.

For convenience, we sometimes denote \( \textsf{dom}_+ (h) \) as the set of locations \( l \) satisfying \( h(l) \in \textsc{Val} \), 
and  \( \textsf{dom}_- (h) \) as the set of locations \( l \) satisfying \( h(l) = \bot \). 
By definition, \( \textsf{dom}(h) = \textsf{dom}_+ (h) \cup \textsf{dom}_- (h) \) and 
\( \textsf{dom}_+ (h) \cap \textsf{dom}_- (h) = \emptyset \) hold.

We adopt the concept of a block from \cite{holik2022low}.  
An interval $[l, l')$ for $l,l' \in \textsc{Loc}$ representing consecutive memory locations, 
where the lower bound is included, and the upper bound is excluded.  
$[l, l')$ is defined as 
$[l, l') = \{ l'' \in \textsc{Loc} \mid l \leq l'' < l' \}$.
The set of intervals is defined as 
$\textsc{Int} = \{[l, l')   \mid
l, l' \in \textsc{Loc} \land l < l'   \}$.

A block set $B$ is a
finite collection of non-overlapping blocks, and 
the set of block sets is defined as follows.
\begin{align*}
\textsc{Blocks} = \{ B \subseteq_\text{fin} \textsc{Int} \mid \,
& \forall [l_1, {l'}_1), [l_2, {l'}_2) \in B .  \\ 
& [l_1, {l'}_1) \neq [l_2, {l'}_2) \rightarrow [l_1, {l'}_1) \cap [l_2, {l'}_2) = \emptyset \}
\end{align*}
For $B \in \textsc{Blocks}$, $\bigcup B$ means
$\bigcup_{[l, l') \in B} [l, l')$.

We define the functions \( b_B \) and \( e_B : \textsc{Loc} \to \textsc{Val} \), parameterized by  
\( B \in \textsc{Blocks} \), which return the base and end addresses, respectively, of the block to which a given location belongs.
Specifically, for  \( l \in \textsc{Loc} \), we define \( b_B(l) = u \) and \( e_B(l) = v \)
if \( l \in [u, v) \in B \). 
If no such interval exists, then we define \( b_B(l) = 0 \) and \( e_B(l) = 0 \).

\begin{definition}[Conditions on the models and exact models]
 \label{def: Conditions on the models of our ISL}
 \normalfont
We assume the following conditions for the states $( s,h,B )$. 
\begin{enumerate}
    \item 
    $\normalfont\textsf{dom}_+(h) \subseteq  \bigcup B$
    \item 
    $\normalfont\textsf{dom}_- (h) \cap  \left( \bigcup B \right) =
	  \emptyset$
\end{enumerate}

 Furthermore, if $(s,h,B)$ satisfies
 $\normalfont\textsf{dom}_+(h) = \bigcup B$, we call $(s,h,B)$
 \emph{exact}.
 We use $\textsc{State}^+$ to denote the set of the exact states.
\end{definition}
The first condition is slightly modified from the corresponding one in \cite{holik2022low}.  
In \cite{holik2022low}, it is sufficient to state that ``$h(l)$ is defined'',
but in ISL, due to the presence of negative heaps,  
we must distinguish between the cases where \(h(l) \in \textsc{Val}\) or \(h(l) = \bot\).

The second condition, similar to the first one, considers the case where
$h(l) = \bot$. This condition indicates that deallocated locations are not within a block; that is, locations within blocks must not be deallocated.

The denotational semantics of programs is defined for
exact states.

We  define the term interpretation $[\![ t ]\!]_{s,B}$ 
in the following.

\begin{definition}[Term interpretation]
\label{def: formal definition of term interpretation}
 {   \normalfont
\begin{align*}
  [\![ t + t' ]\!]_{s,B}   &=  [\![ t  ]\!]_{s,B} +  [\![  t' ]\!]_{s,B} \\ 
  [\![ b(t) ]\!]_{s,B}   &= b_B ([\![ t  ]\!]_{s,B}) \\
  [\![ e(t) ]\!]_{s,B}   &= e_B ([\![ t  ]\!]_{s,B}) \\
  [\![ n ]\!]_{s,B}   &=  n , \quad
  [\![ x ]\!]_{s,B} =  s(x) , \quad
  [\![ \texttt{null} ]\!]_{s,B}= 0
\end{align*}}
Note that if $t$ does not contain $b$ or $e$, 
 $[\![ t ]\!]_{s, B}$ does not depend on $B$.
\end{definition}


The semantics of the assertions of ISL is defined as follows.

\begin{definition}[Assertion semantics of ISL]
 \normalfont
 We define the relation $(s,h,B)\models P$ as 
{ 
\begin{align*}
    (s,h,B) \models \bigvee_{i \in I} \exists \overrightarrow{x_i} . \psi_i &\Leftrightarrow
   \exists i\in I.(s,h,B) \models 
     \exists \overrightarrow{x_i} . \psi_i, \\
    (s,h,B) \models \exists \overrightarrow{x_i}  . \psi_i  &\Leftrightarrow
    \exists \overrightarrow{v_i}  \in \textsc{Val} .  
    (s [\overrightarrow{x_i} \mapsto \overrightarrow{v_i}] ,h,B) \models \psi_i, \\
    (s,h,B) \models \psi_1 * \psi_2 &\Leftrightarrow 
    \exists h_1,h_2.\; h = h_1 \circ h_2 \text{\ and\ } \\
    & \ \ \ \ (s,h_1,B) \models \psi_1 \text{\ and\ } (s,h_2,B) \models \psi_2, \\ 
    (s,h,B) \models \textsf{emp} &\Leftrightarrow 
    \textsf{dom}(h) = \emptyset, \\
    (s,h,B) \models t \mapsto u  &\Leftrightarrow 
    \textsf{dom}(h) = 
     \{ [\![ t ]\!]_{s,B} \} \;\&\; h([\![ t ]\!]_{s,B}) = [\![ u ]\!]_{s,B},  \\
    (s,h,B) \models \textsf{arr}(t,t') &\Leftrightarrow 
    [\![ t ]\!]_{s,B}  < [\![ t' ]\!]_{s,B} \;\&\; \\
   & \ \ \ \ \ \ \  \textsf{dom}(h) = \textsf{dom}_+ (h) = 
     \{ [ [\![ t ]\!]_{s,B}  [\![ t' ]\!]_{s,B} ) \} ,
     \\
    (s,h,B) \models \bararr (t,t') &\Leftrightarrow 
    [\![ t ]\!]_{s,B}  < [\![ t' ]\!]_{s,B} \;\&\; \\
    & \ \ \ \ \ \ \  \textsf{dom}(h) = \textsf{dom}_- (h) = 
     \{ [ [\![ t ]\!]_{s,B}  [\![ t' ]\!]_{s,B} ) \} 
    , \\
    (s,h,B) \models t \approx t' &\Leftrightarrow 
    [\![ t ]\!]_{s,B}  = [\![ t' ]\!]_{s,B} \;\&\; \textsf{dom}(h) = \emptyset,\\
    (s,h,B) \models t \not\approx t' &\Leftrightarrow 
    [\![ t ]\!]_{s,B}  \neq [\![ t' ]\!]_{s,B} \;\&\; \textsf{dom}(h) = \emptyset, \\ 
    (s,h,B) \models t \leq t' &\Leftrightarrow 
    [\![ t ]\!]_{s,B}  \leq [\![ t' ]\!]_{s,B} \;\&\; \textsf{dom}(h) = \emptyset, \\ 
    (s,h,B) \models t < t' &\Leftrightarrow 
    [\![ t ]\!]_{s,B}  < [\![ t' ]\!]_{s,B} \;\&\; \textsf{dom}(h) = \emptyset ,
 \end{align*}}
where $s[\overrightarrow{x_i} \mapsto \overrightarrow{v_i}]$ means
$s[x_{i1} \mapsto v_{i1}] \ldots [x_{in} \mapsto v_{in}]$.
For a pure formula \(\psi\), \((s,B) \models \psi\) means that \((s, \emptyset , B) \models \psi \) for the empty heap \(\emptyset\).
Additionally, we define  entailments between formulas.  
The notation $\psi \models \varphi$ means that if $(s, h, B) \models \psi$, then $(s, h, B) \models \varphi$.  
We write $\psi \equiv \varphi$ to indicate that both $\psi \models \varphi$ and $\varphi \models \psi$ hold.

\end{definition}

\subsection{Program Language}

We recall the program language for ISL in \cite{raad2020local}.

\begin{definition}[Program Language $\normalfont\mathbb{C}\in \textsc{Comm}$]
{\normalfont
\begin{align*}
    \mathbb{C} ::= \ & \texttt{skip} \ | \ x:=t \ | \     x:= \texttt{*} \\
    &| \ \texttt{assume($\pi$)} \ | \ \texttt{local $x := t$ in $\mathbb{C}$} \\
    &| \
    \mathbb{C}_1 ; \mathbb{C}_2 \ | \ \mathbb{C}_1 + \mathbb{C}_2 \ | \ \mathbb{C}^\star⋆ 
    \ | \   \texttt{error()} \\
    &| \  x:= \texttt{alloc($t$)} \ | \ \texttt{free($t$)} \ | \ 
    x := [t] \ | \ [t]:= t'  
\end{align*}
In this definition, \(\pi\) is a pure formula, and \(t\) does not contain \(b\) or \(e\).
}
\end{definition}

We largely follow the command language traditions of Separation Logic (SL) \cite{reynolds2002separation}.  
The programming language includes standard constructs such as \texttt{skip},  
assignment \(x := t\), and nondeterministic assignment \(x := \texttt{*}\) (where \texttt{*} represents a nondeterministically selected value).  
It also features \texttt{assume($\pi$)},
local variable declarations \texttt{local $x$ in $\mathbb{C}$}, sequential composition \(\mathbb{C}_1 ; \mathbb{C}_2\),  
nondeterministic choice \(\mathbb{C}_1 + \mathbb{C}_2\), and loops \(\mathbb{C}^\star\).  
Additionally, it supports \texttt{error()} statements and various instructions for heap manipulation.

Heap manipulation instructions include operations such as allocation, deallocation, lookup, and mutation, described as follows:
\begin{itemize}
    \item \(x := \texttt{alloc}(t)\): Allocates \(t\) memory cells (from \(x\) to \(x + t - 1\)) on the heap, assigning the starting location to \(x\). It may reuse previously deallocated locations.
    \item \texttt{free($t$)}: Deallocates the memory location referred to by \(t\).
    \item \(x := [t]\): Reads the value stored at the heap location \(t\) and assigns it to \(x\).
    \item \([t] := t'\): Updates the heap location \(t\) with the value of \(t'\).
\end{itemize}

\begin{definition}[$\normalfont\textsf{termC}(\mathbb{C})$]
\label{def: formal definition of termC(command)}
 {\normalfont
 We define the set ${\textsf{termC}(\mathbb{C})}$ of terms for a program
 $\mathbb{C}$.
\begin{align*}    
    \textsf{termC}(\texttt{skip}) & = \textsf{termC}(\texttt{error()}) = \emptyset \\
    \textsf{termC}(x := t) & = \{ x \} \cup \textsf{term}(t) \\
    \textsf{termC}(x := \texttt{*}) & = \{ x \} \\
    \textsf{termC}(\texttt{assume($\psi$)}) & = \textsf{termS}(\psi) \\
    \textsf{termC}(\texttt{local } x := t \texttt{ in } \mathbb{C}) & = 
    (\textsf{termC}(\mathbb{C})  \setminus \{x\} ) \cup \textsf{term}(t) \\
    \textsf{termC}(x := \texttt{alloc($t$)}) & = \{x\}\cup\textsf{term}(t) \\
    \textsf{termC}(\texttt{free($t$)}) & = 
    \textsf{term}(t) \cup  \textsf{term}(b(t)) \cup \textsf{term}(e(t)) \\
    \textsf{termC}(x := [t]) & = 
    \{ x \} \cup \textsf{term}(b(t)) \\
    \textsf{termC}([t] := t') & = 
    \textsf{term}(t + 1) \cup \textsf{term}(b(t)) \cup \textsf{term}(t') \\
    \textsf{termC}(\mathbb{C}_1 ; \mathbb{C}_2) & = 
    \textsf{termC}(\mathbb{C}_1 + \mathbb{C}_2) = \\
    & \ \ \ \ \ \textsf{termC}(\mathbb{C}_1) \cup \textsf{termC}(\mathbb{C}_2) \\
    \textsf{termC}(\mathbb{C}^\star) & = \textsf{termC}(\mathbb{C})
\end{align*}}
\end{definition}
This function encompasses both the terms appearing in \(\mathbb{C}\) and the variables or heap locations referenced by \(\mathbb{C}\).
It is used to generate the weakest postconditions.

\begin{definition}[The set of terms modified by $\mathbb{C}$, $\normalfont\textsf{mod}(\mathbb{C})$]
\label{def: formal definition of mod(C)}
{    \normalfont
\begin{align*}
    \textsf{mod}(\texttt{skip}) &= \textsf{mod}(\texttt{assume($\psi$)}) = \textsf{mod}(\texttt{error()})  \\
    &= \textsf{mod}(\texttt{free($t$)}) = \textsf{mod}([t] := t') =  \emptyset \\
    \textsf{mod}(x := t) &= \textsf{mod}(x := \texttt{*}) = \textsf{mod}(x := \texttt{alloc($t$)})  \\ 
    & = \textsf{mod}(x := [t]) = \{ x \} \\
    \textsf{mod}(\texttt{local $x$ in $\mathbb{C}$}) &= \textsf{mod}(\mathbb{C}) \setminus \{ x \} \\
    \textsf{mod}(\mathbb{C}_1 ; \mathbb{C}_2) &= \textsf{mod}(\mathbb{C}_1 + \mathbb{C}_2) =  \textsf{mod}(\mathbb{C}_1) \cup \textsf{mod}(\mathbb{C}_2) \\
    \textsf{mod}(\mathbb{C}^\star) &= \textsf{mod}(\mathbb{C})
\end{align*}}
\end{definition}

\begin{definition}[Denotational semantics of ISL]\label{def:
 Denotational semantics of ISL} 
 \normalfont
 We define $[\![\mathbb{C}]\!]_\epsilon$
 for a program $\mathbb{C}$ and an \textit{exit condition}
 $\epsilon\in\{\textit{ok},\textit{er}\}$ as a binary relation on
 {\normalfont$\textsc{State}^+$}. We use the composition of binary relations: for
 binary relations $R_1$ and $R_2$, $R_1;R_2$ is the relation
 $\{(a,c)\mid \exists b.(a,b)\in R_1\text{\ and\ }(b,c)\in R_2\}$.
Here, \( \normalfont\mathbb{C}^0 = \texttt{skip} \) and \( \mathbb{C}^{m+1} = \mathbb{C};\mathbb{C}^{m} \).

 {\scriptsize\normalfont
 \begin{align*}
  [\![ \texttt{skip} ]\!]_{\it ok}
  &= \{ (\sigma,\sigma) \ | \ \sigma \in \textsc{State}^+ \}\\
  [\![ \texttt{skip} ]\!]_{\it er}
  &=\emptyset\\
   [\![ x := t ]\!]_{\it ok} &=
   \{ ( (s,h,B),  (s[x \mapsto [\![ t ]\!]_{s,B}], h, B)   )  \} \\
   [\![ x := t ]\!]_{\it er} &=
   \emptyset \\
 [\![ x := \texttt{*} ]\!]_{\it ok} &=
 \{ ( (s,h,B),  (s[x \mapsto v], h, B)   )  \ | \ v \in \textsc{Val} \}\\
 [\![ x := \texttt{*} ]\!]_{\it er} &=
 \emptyset\\
[\![ \texttt{assume($\psi$)} ]\!]_{\it ok} &=
\{ ((s,h,B),(s,h,B)) \ | \   (s, B ) \models \psi \} \\
[\![ \texttt{assume($\psi$)} ]\!]_{\it er} &=
\emptyset\\
[\![ \texttt{error()} ]\!]_{\it ok} &=
\emptyset \\
[\![ \texttt{error()} ]\!]_{\it er} &=
\{ (\sigma,\sigma) \ | \ \sigma \in \textsc{State}^+ \} \\ 
[\![ \mathbb{C}_1; \mathbb{C}_2 ]\!]_{\it ok} &=
   [\![ \mathbb{C}_1]\!]_\textit{ok};[\![ \mathbb{C}_2]\!]_\textit{ok}\\
[\![ \mathbb{C}_1; \mathbb{C}_2 ]\!]_{\it er} &=
   [\![ \mathbb{C}_1]\!]_\textit{er} \cup
   [\![ \mathbb{C}_1]\!]_\textit{ok};[\![ \mathbb{C}_2]\!]_\textit{er} \\
[\![ \texttt{local $x $ in $\mathbb{C}$} ]\!]_\epsilon &=
\{  (  (s ,h,B),  (s' ,h',B')   ) \mid \exists v , v' \in \textsc{Val} . \\
&\ \ \ \ ( (s [x \mapsto v ] ,h,B)  ,  (s' [x \mapsto v'] ,h',B')  ) 
\in [\![\mathbb{C}   ]\!]_\epsilon    \;\&\; s(x) = s'(x) \} \\
[\![ \mathbb{C}_1 + \mathbb{C}_2 ]\!]_\epsilon &= 
[\![ \mathbb{C}_1  ]\!]_\epsilon \cup [\![ \mathbb{C}_2  ]\!]_\epsilon \\
[\![ \mathbb{C}^\star ]\!]_\epsilon &=  
\bigcup_{m\in \mathbb{N}} [\![ \mathbb{C}^m ]\!]_\epsilon \\
[\![ x := \texttt{alloc($t$)} ]\!]_\textit{ok} &=
\{ (  (s,h,B) ,  (s [x \mapsto l] , 
h[i \mapsto v_i ]_{i=l}^{l+[\![ t ]\!]_{s,B} - 1 }  ,  
B \cup \{ [l , l + [\![ t ]\!]_{s,B}  )  \} ) )  \mid   \\
&l \in \textsc{Loc} \;\&\;
(\bigcup B) \cap  [ l , l + [\![ t ]\!]_{s,B}  )   = \emptyset
\;\&\; \text{ $v_i \in \textsc{Val}$ for $i \in [0, [\![ t ]\!]_{s,B} )$} \}  \\
[\![ x := \texttt{alloc($t$)} ]\!]_\textit{er} &= \emptyset \\
[\![\texttt{free($t$)} ]\!]_\textit{ok} &= 
\{ (  (s,h,B), 
(s, h [i  \mapsto \bot]_{i= b_B( [\![ t ]\!]_{s,B}  ) }^{e_B( [\![ t ]\!]_{s,B}) -1} , B \setminus \{ [  b_B( [\![ t ]\!]_{s,B}  ) , e_B( [\![ t ]\!]_{s,B})   ) \} ) )  \mid \\
& [\![ t ]\!]_{s,B} =  b_B( [\![ t ]\!]_{s,B}  ) \;\&\;  b_B( [\![ t ]\!]_{s,B}  ) > 0  \}  \\
[\![\texttt{free($t$)} ]\!]_\textit{er} &= 
\{ (  (s,h,B), (s,h,B) ) \mid 
 [\![ t ]\!]_{s,B} \neq  b_B( [\![ t ]\!]_{s,B}  ) \;\text{or}\; b_B( [\![ t ]\!]_{s,B}  ) = 0 \}
 \\
[\![ x := [t] ]\!]_\textit{ok} &= 
\{ (  (s,h,B), (s [x \mapsto h([\![ t ]\!]_{s,B})] , h , B) ) \mid  b_B( [\![ t ]\!]_{s,B} ) > 0  \} 
 \\
 [\![ x := [t] ]\!]_\textit{er} &= 
\{ (  (s,h,B), (s,h,B) ) \mid 
 b_B( [\![ t ]\!]_{s,B} ) = 0  \} 
 \\
 [\![ [t] := t' ]\!]_\textit{ok} &= 
\{ (  (s,h,B), 
(s , h [[\![ t ]\!]_{s,B} \mapsto [\![ t' ]\!]_{s,B} ] , B) ) \mid  b_B( [\![ t ]\!]_{s,B} ) > 0   \} 
 \\
 [\![ [t] := t' ]\!]_\textit{er} &= 
\{ (  (s,h,B), (s,h,B) ) \mid 
 b_B( [\![ t ]\!]_{s,B} ) = 0   \} 
 \end{align*}
 }

\end{definition}

Based on the previous definitions, the validity of ISL triples of the form \([P]\ \mathbb{C}\ [\epsilon: Q]\) is defined as follows.

\begin{definition}[Validity of ISL Triples]
    $$\models [P] \ \mathbb{C} \ [\epsilon: Q] \overset{\text{def}}{\iff}
    \forall \sigma'\models Q  , \exists \sigma  \models P. 
    (\sigma,\sigma') \in [\![ \mathbb{C} ]\!]_\epsilon$$
\end{definition}

\subsection{Canonicalization}\label{subsec: Canonicalization}

We need to define canonical forms for the inference rules of certain heap-manipulating proof rules.
Before defining our canonical forms, we introduce two functions $\textsf{perm}(T)$ and $\textsf{cases}(T)$. 
Let \( \textsf{perm}(T) \) denote the set of all permutations of a finite set \( T = \{t_1, t_2, \ldots , t_n \} \) of terms as follows.
\begin{align*}
\textsf{perm}(T) = \big\{ \,
& (t_{\rho(1)}, t_{\rho(2)}, \dots, t_{\rho(n)}) \mid \, \\
& \rho: \{1, 2, \dots, n\} \to \{1, 2, \dots, n\} \text{ is a bijection} 
\big\}
\end{align*}

For  a finite set $T$ of terms, the case analysis function \( \textsf{cases}(T) \) generates
a pure formulas representing
every possible sequence of strict inequalities and equalities between the elements of \( T \) for each permutation.  
\begin{align*}
\textsf{cases}(T) = \bigcup_{(t_1, t_2, \dots, t_n) \in \textsf{perm}(T)} 
\big\{ \,
& \texttt{null} \mathrel{R_0} t_1 \mathrel{R_1} t_2 \dots t_{n-1} \mathrel{R_{n-1}} t_n \mid \\
& R_i \in \{<, \approx\} \text{ for } 0 \leq i \leq n-1 
\big\}
\end{align*}

The following lemma holds by the definition of $\textsf{cases}(T)$.

 \begin{lemma}
  $ \psi \equiv 
  \bigvee_{\normalfont\pi \in \textsf{cases}(T)}( \pi * \psi) $
  holds for any $\psi$ and $T$.
 \end{lemma}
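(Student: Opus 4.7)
The claim is a purely semantic equivalence, so the plan is to show entailment in both directions by unfolding the definition of the semantics of $*$ together with the semantics of pure atomic formulas.

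For the $(\supseteq)$ direction, fix $\pi\in\textsf{cases}(T)$ and a state $(s,h,B)\models \pi * \psi$. By the assertion semantics, there is a split $h=h_1\circ h_2$ with $(s,h_1,B)\models \pi$ and $(s,h_2,B)\models\psi$. Because $\pi$ is a $*$-conjunction of atomic pure formulas, each of its atoms forces $\textsf{dom}(h_1)=\emptyset$ by the clauses for $t\approx t'$ and $t<t'$; hence $h_1=\emptyset$ and $h_2=h$, giving $(s,h,B)\models\psi$. Taking the disjunction over $\pi\in\textsf{cases}(T)$ yields the first direction.

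For the $(\subseteq)$ direction, suppose $(s,h,B)\models\psi$. I would build a witness $\pi\in\textsf{cases}(T)$ from the values of the terms in $T$ under the current state. Let $T=\{t_1,\dots,t_n\}$ and consider the list of interpretations $[\![t_i]\!]_{s,B}\in\textsc{Val}=\mathbb{N}$, which are well-defined because Definition~\ref{def: formal definition of term interpretation} assigns a value to every term under any $(s,B)$. Together with $[\![\texttt{null}]\!]_{s,B}=0$, these values are totally preordered by $\leq$ on $\mathbb{N}$. Pick a permutation $\rho$ of $\{1,\dots,n\}$ such that $[\![t_{\rho(1)}]\!]_{s,B}\leq\cdots\leq[\![t_{\rho(n)}]\!]_{s,B}$, and for each adjacent pair (including the initial pair with $\texttt{null}$) choose $R_i\in\{<,\approx\}$ according to whether the corresponding values are strictly less than or equal. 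Denote the resulting pure formula by $\pi$; by construction $\pi\in\textsf{cases}(T)$ and $(s,\emptyset,B)\models\pi$.

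It remains to assemble $(s,h,B)\models\pi*\psi$ from the split $h=\emptyset\circ h$, using $(s,\emptyset,B)\models\pi$ and $(s,h,B)\models\psi$; the semantics of $*$ gives this immediately. Taking the disjunction then yields $(s,h,B)\models\bigvee_{\pi\in\textsf{cases}(T)}(\pi*\psi)$, completing the proof.

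\textbf{Main obstacle.} There is no serious obstacle: the argument is a routine application of the definitions. The only point requiring a brief justification is that each atomic pure formula carries $\textsf{dom}(h)=\emptyset$, which is exactly what makes $\pi*\psi$ behave like ordinary conjunction on the pure part, and that the chosen permutation is legitimate even in the presence of ties (handled by inserting $\approx$ in place of $<$). Terms containing $b$ or $e$ cause no difficulty, because the interpretation is always taken with respect to the same $B$ on both sides of the equivalence.
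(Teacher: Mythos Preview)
Your proof is correct and is precisely the argument the paper has in mind: the paper states the lemma without proof, remarking only that it ``holds by the definition of $\textsf{cases}(T)$,'' and your two directions spell out exactly that. The only minor refinement is that the $(\supseteq)$ direction technically uses a short induction on the number of pure atoms in $\pi$ to conclude $\textsf{dom}(h_1)=\emptyset$, but this is routine.
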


The canonical forms of quantifier-free symbolic heaps concerning a
set  $T$ of terms are defined as follows.

\begin{definition}[Canonical forms]
{\normalfont
\begin{align*}
\textsf{CF}_{\text{sh}}(T) = \{ \psi \mid   \psi_{\text{pure}} \models \pi \text{ for some } \pi \in \textsf{cases}(T) \}
\end{align*}}
\normalfont
If \( \psi \in \textsf{CF}_\text{sh}(T) \) for a finite set of terms \( T \), then \( \psi \) is called \textit{canonical} for \( T \).  
\end{definition}

The canonicalization
 divides into the cases for each possible sequence of strict
 inequalities and equalities individually to the input symbolic heap.

\begin{definition}[Canonicalization $\normalfont\textsf{cano}$]
 \label{def: Case Analysis for all possible aliasing and non-aliasing}
 \normalfont
 The set $\textsf{cases}_{\rm sh}(\psi,{\mathbb C})$ of cases
 for canonicalization is defined as
 \[
  {\normalfont\textsf{cases}_{\rm sh}}(\psi, \mathbb{C}) =
  \{\pi * \psi \mid
  \pi \in {\normalfont\textsf{cases}} ({\normalfont \textsf{termS}(\psi) \cup \textsf{termC}(\mathbb{C})})
  \}.
\]
For $P = \bigvee_{i \in I} \exists \overrightarrow{x_i} . \psi_i$ and 
 $\normalfont  \overrightarrow{x_i} \notin \textsf{termC}(\mathbb{C})$.
$$\normalfont
 \textsf{cano}(P,\mathbb{C}) =  
 \bigvee_{i \in I}
 \bigvee_{\varphi \in \textsf{cases}(\psi_i,\mathbb{C})}
 \exists \overrightarrow{x_i} .\varphi$$
\end{definition}

\begin{lemma}\label{lma: cano and P is equivalent}
\(\normalfont  P \equiv \textsf{cano}(P, \mathbb{C}) \) and  
 \(  \normalfont \psi_i \in \textsf{CF}_\text{sh} (\textsf{termS}(\psi_i) \cup \textsf{termC} (\mathbb{C})) \)  
for any \(i \in I\) hold for any \(P\)  
and \(\mathbb{C}\), with \(\normalfont \textsf{cano}(P, \mathbb{C}) = \bigvee_{i \in I} \exists \overrightarrow{x_i} . \psi_i\).
\end{lemma}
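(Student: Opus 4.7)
The plan is to prove both claims by unfolding Definition~\ref{def: Case Analysis for all possible aliasing and non-aliasing} and then applying the preceding lemma disjunct by disjunct. Writing the input as $P = \bigvee_{i \in I_0} \exists \overrightarrow{x_i}.\psi'_i$ and setting $T_i = \textsf{termS}(\psi'_i) \cup \textsf{termC}(\mathbb{C})$, the definition of $\textsf{cano}$ gives
\[
\textsf{cano}(P,\mathbb{C}) \;=\; \bigvee_{i \in I_0}\;\bigvee_{\pi \in \textsf{cases}(T_i)} \exists \overrightarrow{x_i}.(\pi * \psi'_i),
\]
so every disjunct of $\textsf{cano}(P,\mathbb{C})$ has the form $\exists \overrightarrow{x_i}.(\pi * \psi'_i)$ for some $i \in I_0$ and $\pi \in \textsf{cases}(T_i)$; in the notation of the statement, the outer index $I$ ranges over pairs $(i,\pi)$ and the associated $\psi$ is $\pi * \psi'_i$.

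For the equivalence $P \equiv \textsf{cano}(P,\mathbb{C})$, I would invoke the preceding lemma with $T := T_i$ and $\psi := \psi'_i$ to obtain $\psi'_i \equiv \bigvee_{\pi \in \textsf{cases}(T_i)} (\pi * \psi'_i)$ for every $i \in I_0$. Because the assertion semantics validates $\exists \overrightarrow{x}.(\varphi_1 \vee \varphi_2) \equiv (\exists \overrightarrow{x}.\varphi_1) \vee (\exists \overrightarrow{x}.\varphi_2)$ and because disjunction is a congruence for $\equiv$, this lifts termwise to $\exists \overrightarrow{x_i}.\psi'_i \equiv \bigvee_{\pi} \exists \overrightarrow{x_i}.(\pi * \psi'_i)$, and taking the outer disjunction over $i \in I_0$ delivers $P \equiv \textsf{cano}(P,\mathbb{C})$.

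For the canonical-form property, pick any disjunct $\exists \overrightarrow{x_i}.(\pi * \psi'_i)$ of $\textsf{cano}(P,\mathbb{C})$ and set $\psi := \pi * \psi'_i$. Its pure part is $\pi * (\psi'_i)_{\text{pure}}$, which trivially entails $\pi$, so by the definition of $\textsf{CF}_{\text{sh}}$ it suffices to show $\pi \in \textsf{cases}(\textsf{termS}(\psi) \cup \textsf{termC}(\mathbb{C}))$, for which I only need $\textsf{termS}(\psi) \cup \textsf{termC}(\mathbb{C}) = T_i$. A straightforward induction on Definitions~\ref{def: formal definition of term(t)}, \ref{def: formal definition of terms(psi)}, and \ref{def: formal definition of termC(command)} shows that $\textsf{term}(\cdot)$, $\textsf{termS}(\cdot)$, and $\textsf{termC}(\cdot)$ are all closed under taking subterms, so $\textsf{term}(t) \subseteq T_i$ for every $t \in T_i$. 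Since $\pi$ is a $*$-conjunction of atomic inequalities whose terms lie in $\{\texttt{null}\} \cup T_i$, this gives $\textsf{termS}(\pi) \subseteq T_i$, and therefore $\textsf{termS}(\psi) \cup \textsf{termC}(\mathbb{C}) = \textsf{termS}(\pi) \cup T_i = T_i$, as required. The main obstacle I anticipate is exactly this subterm-closure check: without it, prepending $\pi$ could in principle enlarge the relevant term set and push $\pi$ outside $\textsf{cases}$ of the new set; once the closure property is dispatched, both claims follow directly from the preceding lemma together with the distributivity of existential quantification over disjunction.
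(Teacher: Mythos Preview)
The paper states this lemma without proof, so your approach is necessarily the intended one, and both parts are set up correctly: the equivalence $P \equiv \textsf{cano}(P,\mathbb{C})$ follows from the preceding lemma applied disjunct-wise together with distributivity of $\exists$ over finite $\lor$, and the canonical-form claim reduces, exactly as you say, to showing $\textsf{termS}(\pi) \subseteq T_i$ via subterm closure of $T_i$.

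The one place where your argument is too quick is the blanket claim that $\textsf{termC}(\mathbb{C})$ is subterm-closed. This holds for every atomic command (each $\textsf{termC}$ clause there is a finite union of $\textsf{term}(\cdot)$ sets) and is preserved by $;$, $+$, and ${}^\star$, but the clause
\[
\textsf{termC}(\texttt{local } x := t \texttt{ in } \mathbb{C}) \;=\; (\textsf{termC}(\mathbb{C}) \setminus \{x\}) \cup \textsf{term}(t)
\]
can destroy it: if $\textsf{termC}(\mathbb{C})$ contains a compound term with $x$ as a proper subterm, say $x+z$, then removing $x$ alone leaves $x+z$ in the set without its subterm $x$. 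In that situation $T_i$ is not subterm-closed, $\textsf{termS}(\pi)$ strictly enlarges it, and $\pi * (\psi'_i)_{\text{pure}}$ need not determine a total preorder on the enlarged set (different models satisfying $\pi$ can place $x$ in different positions), so no $\pi' \in \textsf{cases}(T'_i)$ is entailed and $\pi * \psi'_i \notin \textsf{CF}_{\text{sh}}(T'_i)$. This is a corner case in the paper's definitions rather than a flaw in your strategy; for the heap-manipulating atomic commands where the canonical-form premise is actually used in the proof rules and in $\textsf{wpo}_{\text{sh}}$, $\textsf{termC}$ is visibly a union of $\textsf{term}(\cdot)$ sets and your argument goes through verbatim. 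You should flag the $\texttt{local}$ case explicitly rather than folding it into the ``straightforward induction.''
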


\subsection{Proof Rules}

In this section, we give the inference rules of our system.

For rules regarding memory deallocation, we propose the notation \( \Arr(t, t')_\alpha \) serves as a unifying representation for \( t \mapsto \alpha \) and \( \arr(t, t') \), simplifying the rules for heap manipulation proofs. 
In this notation, \( \alpha \) represents either a term or the symbol \( \bullet \),
which is explicitly not a term. Define:  

\begin{itemize}
    \item   \( \Arr(t, t+1)_u \) denotes \( t \mapsto u \), and
    \item  \( \Arr(t, t')_\bullet \) denotes \( \arr(t, t') \).
\end{itemize}

Furthermore, for a simpler notation, we use the following
convention. 
Consider a symbolic heap \( \psi \) and a set of terms \( T = \{t_1, t_2, \dots, t_n\} \), where no two terms in \( T \) are in the subterm relation.
The replacement of all elements in \( T
\) with a term \( u \) is denoted as $\psi[T := u]$.
We further define the following sets:  
\begin{align*}
T_b(\psi, t) &= \{ b(t') \in \textsf{termS}(\psi) \mid \psi_\text{pure} \models b(t')
\approx b(t) \},\ \text{and}\\
T_e(\psi, t) &= \{ e(t') \in \textsf{termS}(\psi ) \mid \psi_\text{pure} \models e(t') \approx e(t) \}.
\end{align*}

\begin{definition}[ISL Proof Rules]
We define our ISL proof rules in Figures \ref{fig:generic proof rules of ISL}, \ref{fig: rules for Memory Allocation}, \ref{fig: rules for Memory Deallocation}, and \ref{fig: rules for Memory Manipulation}, where the
 axioms of the form $[P]\ \mathbb{C}\
 [\textit{ok}:Q_1][\textit{er}:Q_2]$ means that both $[P]\ \mathbb{C}\
 [\textit{ok}:Q_1]$ and $[P]\ \mathbb{C}\ [\textit{er}:Q_2]$.
\end{definition}

Note that we employ infinitary syntax and incorporate an infinitary version of the
\textsc{Disj} rule following a similar approach used in RHL
\cite{de2011reverse}.

The heap manipulation rules require symbolic heaps to be in canonical
form as a premise.  
We provide an intuitive explanation of the rules for memory allocation in Figure \ref{fig: rules for Memory Allocation} and deallocation rules in Figure \ref{fig: rules for Memory Deallocation}.

The rules for allocation depend on the relationship between the
locations to be allocated and the locations indicated by \(\bararr\) in
the precondition.  As the block information changes, the values of
\(b(t')\) and \(e(t')\) for \(t'\), which are in the newly allocated
block locations, also change.  Therefore, the values of \(b(t')\) and
\(e(t')\) are replaced with \(0\) (using a function $\textsf{rep}_\alpha^\beta$), since
before allocation, \(t'\) is not allocated.

The rules for deallocation perform a case analysis on the range of the
freed locations and the locations of \(\mapsto\) and \(\arr\) in the
precondition.  Since block information changes, the values of \(b(t')\)
and \(e(t')\) for \(t'\) in the deallocated block 
also change.  Therefore, such terms in the precondition are replaced with
the values before the deallocation operation (using a replacement
\(\tau\)).

Regarding the \textsc{Frame} rule, we consider only the \textit{ok} case.  
This is because, if we include the \textit{er} case in the \textsc{Frame} rule,  
we find that soundness is broken in our setting.  
The following is an example.
$$
\inferrule[]
{[x \not\approx \texttt{null} * \textsf{emp}] \ \texttt{free($x$)} \  
[\textit{er}: x \not\approx \texttt{null} * \textsf{emp}]}
{[x \not\approx \texttt{null} * \textsf{emp} * x \mapsto 1] \ \texttt{free($x$)} \  
[\textit{er}: x \not\approx \texttt{null} * \textsf{emp} * x \mapsto 1 ]  }
$$
If we use $x \mapsto 1$ as a frame, the conclusion triple becomes invalid, while the premise triple remains valid,  
indicating that soundness does not hold in this case.


\begin{figure}
{\scriptsize\begin{mathpar}
\inferrule[\textsc{Skip}]
{ \ }
{ [P] \ \texttt{skip} \ [\textit{ok}: P] \
[\textit{er}:\textsf{false}] }
\and
\inferrule[\textsc{Error}]
{ \ }
{ [ P ] \ \texttt{error()} \ [\textit{ok}:\textsf{false}] \
[\textit{er}: P]}
\and
\inferrule[\textsc{Seq1}]
{ [P] \ \mathbb{C}_1 \ 
[\textit{er}: Q] }
{ [P] \ \mathbb{C}_1;\mathbb{C}_2 \ 
[\textit{er}: Q] }
\and
\inferrule[\textsc{Seq2}]
{ [P] \ \mathbb{C}_1 \ [\textit{ok}:R] \ \ \ 
 [R] \ \mathbb{C}_2 \ [\epsilon:Q]
}
{ [P] \ \mathbb{C}_1;\mathbb{C}_2 \ 
[ \epsilon : Q] }
\and    
\inferrule[\textsc{Loop zero} \normalfont\text{(equal to \textsc{Skip})}]
{ \ }
{ [P] \ \mathbb{C}^\star \ [\textit{ok}: P] \ [\textit{er}:\textsf{false}] }
\and
\inferrule[\textsc{Loop non-zero}]
{  [P] \ \mathbb{C}^\star;\mathbb{C} \ [\epsilon: Q]  }
{  [ P ] \ \mathbb{C}^\star \ [\epsilon: Q] }
\and
\inferrule[\textsc{Cons}]
{P' \models P \ \ \ 
 [P'] \ \mathbb{C} \ [\epsilon: Q']
\ \ \ Q \models Q'}
{ [P] \ \mathbb{C} \ [\epsilon: Q] }
\and
\inferrule[\textsc{Disj}]
{ [  P_i ] \ \mathbb{C} \
    [\epsilon:  Q_i  ]  \quad \text{for all $i \in I$}}
{ [   \bigvee_{i \in I} P_i ] \ \mathbb{C} \ 
[ \epsilon :  \bigvee_{i \in I} Q_i] }
\and
\inferrule[\textsc{Choice}]
{ [P] \ \mathbb{C}_1 \ [ \epsilon : Q] \ \ \ 
 [P] \ \mathbb{C}_2 \ [\epsilon:Q]
}
{ [P] \ \mathbb{C}_1 + \mathbb{C}_2 \ 
[ \epsilon : Q] }
\and
\inferrule[\textsc{Exist}]
{ [  \psi ] \ \mathbb{C}  \ 
[ \epsilon :   \varphi  ]  \ \ \  x \notin \textsf{termS}(\mathbb{C})  }
{ [  \exists  x  . \psi  ] \ \mathbb{C}  \ 
[ \epsilon :  \exists  x  . \varphi  ] }
\and
\inferrule[\textsc{Assign}]
{\ }
{ [ \psi ] \ x:=t \ [\textit{ok}:
\exists  x' . \psi \theta * x \approx  t  \theta   ] \
[\textit{er}:\textsf{false}] }
\and
\inferrule[\textsc{Havoc}]
{ \ }
{ [ \psi ] \ x:=\texttt{*} \ [\textit{ok}: \exists x'. \psi \theta ] \
[\textit{er}:\textsf{false}] }
\and
\inferrule[\textsc{Assume} \normalfont\text{($\pi$ is a pure formula)}]
{ \ }
{ [\psi] \ \texttt{assume($\pi$)} \ [\textit{ok}: \pi * \psi] \
[\textit{er}:\textsf{false}] }
\and
\inferrule*[lab=\textsc{Local}]
{ [ \psi   ] \ \mathbb{C}  \ 
[ \epsilon :  \varphi]   \ \ \ 
{x \notin \textsf{termS}(\psi)}
}
{ [  \psi  ] 
\ \texttt{local $x $ in $\mathbb{C}$}  \ 
[ \epsilon : \exists  x. \varphi] }
\and
\inferrule[\textsc{Frame Ok}]
{ [ \psi] \ \mathbb{C} \ [ \textit{ok} :  \varphi] \ \ \ 
\textsf{mod}(\mathbb{C}) \cap
\textsf{termS}( \phi) = \emptyset }
{ [ \psi * \phi] \ \mathbb{C} \ 
[ \textit{ok}  : \varphi * \phi  ] }
\end{mathpar}}
Here, $\theta$ denotes $[x:=x']$.
\caption{Proof Rules of ISL}
    \label{fig:generic proof rules of ISL}
\end{figure}

\begin{figure}
{\scriptsize\begin{mathpar}
\inferrule[\textsc{Alloc 1}] 
{ \psi \in \textsf{CF}_\text{sh}
 (\textsf{termS}(\psi)\cup\textsf{termC}(x:=\texttt{alloc($t$)})) \\ 
0\le\alpha\le\beta\le N \\
 \psi_{\text{pure}}\models{\tt null}\ R_1\ u_1\ R_2\ 
       \cdots\ R_N u_N \text{ ($R_i\in\{<,\approx\}$)} \\
       }
{ [\psi] \ x:=\texttt{alloc($t$)} \ 
[\textit{ok}:  
\exists x'.  \textsf{rep}_\alpha^\beta \bigg(
\psi\theta * \arr(x,x+t\theta) * \varphi
 \bigg)
]}
\and
\inferrule[\textsc{Alloc 2}]  
{ \psi \in \textsf{CF}_\text{sh} (\textsf{termS}(\psi)\cup\textsf{termC}(x:=\texttt{alloc($t$)})  )   \\
\psi= \bigast_{i=1}^{m}\bararr(t_i,t'_i) * \psi'
       \text{ ($\psi'$ contains no $\bararr$)} \\
       0\le\alpha\le\beta\le N \\
\psi_{\text{pure}} \models 
 \bigast_{i=1}^{m-1} t_i < t_{i+1}  \\ 
 1 \leq j \leq k \leq m \\
 \psi_{\text{pure}}\models{\tt null}\ R_1\ u_1\ R_2\ 
       \cdots\ R_N u_N \text{ ($R_i\in\{<,\approx\}$)} \\
              }
{ [  \psi ]
\ x:=\texttt{alloc($t$)} \
[\textit{ok}:  
\exists x'.   \textsf{rep}_\alpha^\beta \bigg(
\psi'\theta \\
 * \bigast_{i=1}^{j-1} \bararr(t_i\theta,t'_i\theta)
 * \bararr(t_j\theta,x)
 * \arr(x,x+t\theta)
 * \bararr(x+t\theta,t'_k\theta)
 * \bigast_{i=k+1}^{m} \bararr(t_i\theta,t'_i\theta)\\
 * t_j\theta<x<t'_j\theta * t_k\theta<x+t\theta< t'_k\theta * \varphi \bigg)
] }
\and
\inferrule[\textsc{Alloc 3}]  
{ \psi \in \textsf{CF}_\text{sh} (\textsf{termS}(\psi)\cup\textsf{termC}(x:=\texttt{alloc($t$)})  )   \\
\psi= \bigast_{i=1}^{m}\bararr(t_i,t'_i) * \psi'
       \text{ ($\psi'$ contains no $\bararr$)} \\
       0\le\alpha\le\beta\le N \\
\psi_{\text{pure}} \models 
 \bigast_{i=1}^{m-1} t_i < t_{i+1}  \\ 
 1 \leq j \leq k \leq m \\
 \psi_{\text{pure}}\models{\tt null}\ R_1\ u_1\ R_2\ 
       \cdots\ R_N u_N \text{ ($R_i\in\{<,\approx\}$)} \\
       }
{ [  \psi ]
\ x:=\texttt{alloc($t$)} \
[\textit{ok}:  
\exists x'.   \textsf{rep}_\alpha^\beta \bigg(
\psi'\theta\\
 * \bigast_{i=1}^{j-1} \bararr(t_i\theta,t'_i\theta)
 * \arr(x,x+t\theta)
 * \bararr(x+t\theta,t'_k\theta)
 * \bigast_{i=k+1}^{m} \bararr(t_i\theta,t'_i\theta)\\
 * t'_{j-1}\theta\le x\le t_j\theta 
 * t_k\theta<x+t\theta< t'_k\theta * \varphi
 \bigg)
]  }
\and

\inferrule[\textsc{Alloc 4}]  
{ \psi \in \textsf{CF}_\text{sh} (\textsf{termS}(\psi)\cup\textsf{termC}(x:=\texttt{alloc($t$)})  )  \\
\psi= \bigast_{i=1}^{m}\bararr(t_i,t'_i) * \psi'
       \text{ ($\psi'$ contains no $\bararr$)} \\
          0\le\alpha\le\beta\le N \\
\psi_{\text{pure}} \models 
 \bigast_{i=1}^{m-1} t_i < t_{i+1}  \\ 
 1 \leq j \leq k \leq m \\
 \psi_{\text{pure}}\models{\tt null}\ R_1\ u_1\ R_2\ 
       \cdots\ R_N u_N \text{ ($R_i\in\{<,\approx\}$)} \\
       }
{ [  \psi ]
\ x:=\texttt{alloc($t$)} \
[\textit{ok}:  
\exists x'.    \textsf{rep}_\alpha^\beta \bigg(
\psi'\theta\\
 * \bigast_{i=1}^{j-1} \bararr(t_i\theta,t'_i\theta)
 * \bararr(t_j\theta,x) 
 * \arr(x,x+t\theta)
 * \bigast_{i=k+1}^{m} \bararr(t_i\theta,t'_i\theta)\\
 * t_j\theta<x<t'_j\theta 
 * t'_k\theta\le x+t\theta\le t_{k+1}\theta * \varphi
 \bigg)
] }
\and
\inferrule[\textsc{Alloc 5}]  
{ \psi \in \textsf{CF}_\text{sh} (\textsf{termS}(\psi)\cup\textsf{termC}(x:=\texttt{alloc($t$)})  )  \\
\psi= \bigast_{i=1}^{m}\bararr(t_i,t'_i) * \psi'
       \text{ ($\psi'$ contains no $\bararr$)} \\
          0\le\alpha\le\beta\le N \\
\psi_{\text{pure}} \models 
 \bigast_{i=1}^{m-1} t_i < t_{i+1}  \\ 
 1 \leq j \leq k \leq m \\
 \psi_{\text{pure}}\models{\tt null}\ R_1\ u_1\ R_2\ 
       \cdots\ R_N u_N \text{ ($R_i\in\{<,\approx\}$)} \\
       }
{ [  \psi ]
\ x:=\texttt{alloc($t$)} \
[\textit{ok}:  
\exists x'.   \textsf{rep}_\alpha^\beta \bigg(
\psi'\theta\\
 * \bigast_{i=1}^{j-1} \bararr(t_i\theta,t'_i\theta)
 * \arr(x,x+t\theta)
 * \bigast_{i=k+1}^{m} \bararr(t_i\theta,t'_i\theta)\\
 * t'_{j-1}\theta\le x\le t_{j}\theta 
 * t'_k\theta\le x+t\theta\le t_{k+1}\theta * \varphi
 \bigg)
] }
\and
\inferrule[\textsc{AllocEr}]
{ \ }
{ [\psi] \ x:=\texttt{alloc($t$)} \ [\textit{er}: \textsf{false}]}
\end{mathpar}}

Here, $\theta$ denotes $[x := x']$ and $\varphi$ denotes the formula
 $b(x) \approx x * e(x) \approx x+t\theta * u_\alpha\theta< x \le
 u_{\alpha+1}\theta * u_\beta\theta<x+t\theta\le u_{\beta+1}\theta$.
Suppose $\{u_i\mid 1\le i\le
N\}=\termS^-(\psi)$. $\textsf{rep}_\alpha^\beta(\chi)$ is the formula
obtained from $\chi$ by replacing $b(u_i\theta)$ and $e(u_i\theta)$ with
${\tt null}$ for $\alpha<i\le\beta$ where $\termS^-(\psi) = \{ u_i \mid
1 \leq i \leq N \} $.

\caption{Rules for Memory Allocation}
    \label{fig: rules for Memory Allocation}
\end{figure}

\begin{figure} 

{\scriptsize\begin{mathpar}

\inferrule[FreeArr 1]
{ \psi \in \textsf{CF}_\text{sh} (\textsf{termS}(\psi) \cup
	     \textsf{termC}(\texttt{free($t$)}))  
\\
\psi = \arr (t_1, {t'}_1 ) *  \bigast_{i=2}^{k-1} \Arr (t_i, {t'}_i)_{\alpha_i}
 *\arr (t_k, {t'}_k ) *  \psi'
   \\
   \psi_{\text{pure}} \models 
t_1 < b(t) < {t'}_1 * t_k < e(t) < {t'}_k 
*
b(t) \approx t *  
\bigast_{i=1}^{k-1} {t'}_i \approx t_{i+1}
 * \bigast_{i=1}^{k-1} {t}_i < t_{i+1}
}
{ [ \psi ] \ \texttt{free($t$)} \ 
[\textit{ok}: 
\exists y . 
(\arr (t_1 , t)   * \bararr ( t , y ) * 
\arr(y, {t'}_k )  * \psi'  )
\tau
]  
 \ [\textit{er}:\textsf{false}]
}
\and
\inferrule[FreeArr 2]
{ \psi \in \textsf{CF}_\text{sh} (\textsf{termS}(\psi) \cup \textsf{termC}(\texttt{free($t$)}))   \\
\psi = \bigast_{i=1}^{k-1} \Arr (t_i, {t'}_i)_{\alpha_i} * \arr(t_k , {t'}_k) * \psi'
\\ 
\psi_{\text{pure}} \models 
t_1 \approx b(t) * t_k < e(t) < {t'}_k 
*
b(t) \approx t * 
\bigast_{i=1}^{k-1} {t'}_i \approx t_{i+1}   
 * \bigast_{i=1}^{k-1} {t}_i < t_{i+1}
}
{ [ \psi ] \ \texttt{free($t$)} \ 
[\textit{ok}: \exists y . 
( \bararr ( t , y ) * \arr(y, {t'}_k )  * \psi' ) 
\tau
]   
 \ [\textit{er}:\textsf{false}]
}
\and
\inferrule[FreeArr 3]
{ \psi \in \textsf{CF}_\text{sh} (\textsf{termS}(\psi) \cup \textsf{termC}(\texttt{free($t$)}))   \\
\psi = \bigast_{i=1}^{k} \Arr (t_i, {t'}_i)_{\alpha_i} * \psi'
\\
\psi_{\text{pure}} \models 
t_1 \approx  b(t) *   e(t) \approx {t'}_k 
*
b(t) \approx t * 
\bigast_{i=1}^{k-1} {t'}_i \approx t_{i+1} 
 * \bigast_{i=1}^{k-1} {t}_i < t_{i+1}
}
{ [ \psi ] \ \texttt{free($t$)} \ 
[\textit{ok}: \exists y . 
( \bararr ( t , y ) *   \psi' ) 
\tau
]    
 \ [\textit{er}:\textsf{false}]
}
\and
\inferrule[FreeArr 4]
{ \psi \in \textsf{CF}_\text{sh} (\textsf{termS}(\psi) \cup \textsf{termC}(\texttt{free($t$)})) \\
\psi = \arr(t_1, {t'}_1 ) *   \bigast_{i=2}^{k} \Arr (t_i, {t'}_i)_{\alpha_i} * \psi'
\\
\psi_{\text{pure}} \models 
t_1 < b(t) < {t'}_1 *   e(t) \approx  {t'}_k 
*
b(t) \approx t * 
\bigast_{i=1}^{k-1} {t'}_i \approx t_{i+1}  
 * \bigast_{i=1}^{k-1} {t}_i < t_{i+1}
}
{ [ \psi ] \ \texttt{free($t$)} \ 
[\textit{ok}: \exists y . 
(  \arr (t_1 , t)   * \bararr (t , y )  *   \psi' ) 
\tau
]   
 \ [\textit{er}:\textsf{false}]
}
\and
\inferrule[\textsc{FreeEr}]
{ \psi \in \textsf{CF}_\text{sh} (\textsf{termS}(\psi) \cup \{x, b(t) , e(t) \}) \ \ 
\psi_{\text{pure}} \models  b(t) \not\approx t \lor b(t) \approx \texttt{null}
}
{ [\psi ] \ \texttt{free($t$)} \ 
[\textit{ok}:\textsf{false}] \ [\textit{er}:\psi]}
\end{mathpar}}
Here, \( \tau \) denotes the replacement
\( [T_b(\psi, t) := t] [T_e(\psi, t) := y] \).

\caption{Rules for Memory Deallocation}
    \label{fig: rules for Memory Deallocation}
\end{figure}

\begin{figure}

{\scriptsize\begin{mathpar}
\inferrule[\textsc{LoadPtr}]
{ \psi \in \textsf{CF}_\text{sh} (\textsf{termS}(\psi) \cup \textsf{termC}(x:=[t]))\\
\psi_{\text{pure}} \models   t_a \approx t 
\\
\psi = t_a \mapsto u * \psi' 
}
{ [\psi ] \ x := [t] \ 
[\textit{ok}: \exists x'. \psi \theta * x \approx u \theta   ]
\ 
[\textit{er}:\textsf{false}]}
\and
\inferrule[\textsc{LoadArr}]
{ \psi \in \textsf{CF}_\text{sh} (\textsf{termS}(\psi) \cup
\textsf{termC}(x:=[t])) \\
\psi_{\text{pure}} \models   t_a \leq t < {t'}_a
 \\
 \psi =\arr (t_a , {t'}_a)  * \psi' 
}
{ [\psi ] \ x := [t] \ 
[\textit{ok}: \exists x' . \psi \theta  ]
\ 
[\textit{er}:\textsf{false}]}
\and
\inferrule[\textsc{LoadEr}]
{ \psi \in \textsf{CF}_\text{sh} (\textsf{termS}(\psi) \cup \textsf{termC}(x:=[t])) \\
\psi_{\text{pure}} \models   b(t) \approx \texttt{null} 
}
{ [ \psi] \ x := [t] \  
[\textit{ok}:\textsf{false}] \
[\textit{er}: \psi ]}
\and
\inferrule[\textsc{StorePtr}]
{ \psi \in \textsf{CF}_\text{sh} (\textsf{termS}(\psi) \cup  \textsf{termC}([t]:=t'))  \\
\psi_{\text{pure}} \models   t_a \approx t  
\\
 \psi = t_a \mapsto -  * \psi' 
}
{ [ \psi ] \  [t] := t' \ 
[\textit{ok}:    t \mapsto t'  * \psi'   ] \ 
[\textit{er}:\textsf{false}]
}
\and 
\inferrule[\textsc{StoreArr 1}]
{ \psi \in \textsf{CF}_\text{sh} (\textsf{termS}(\psi) \cup  \textsf{termC}(x:=[t]))  \\
\psi_{\text{pure}} \models   t_a \approx t 
\\
 \psi = \arr (t_a , {t'}_a)  * \psi' 
}
{ [ \psi ] \  [t] := t' \ 
[\textit{ok}:  
 t \mapsto t' * \arr (t+1, {t'}_a)   * \psi'   ] \ 
[\textit{er}:\textsf{false}]
}
\and
\inferrule[\textsc{StoreArr 2}]
{ \psi \in \textsf{CF}_\text{sh} (\textsf{termS}(\psi) \cup  \textsf{termC}(x:=[t]))  \\
\psi_{\text{pure}} \models   t_a < t * t  + 1 < {t'}_a 
\\
 \psi = \arr (t_a , {t'}_a)  * \psi' 
}
{ [ \psi ] \  [t] := t' \ 
[\textit{ok}: 
\arr (t_a, t) * t \mapsto t' * \arr (t+1, {t'}_a)  * \psi'    ] \ 
[\textit{er}:\textsf{false}]
}
\and
\inferrule[\textsc{StoreArr 3}]
{ \psi \in \textsf{CF}_\text{sh} (\textsf{termS}(\psi) \cup  \textsf{termC}(x:=[t]))   \\
\psi_{\text{pure}} \models    t + 1 \approx {t'}_a 
\\
 \psi = \arr (t_a , {t'}_a)  * \psi' 
}
{ [ \psi ] \  [t] := t' \ 
[\textit{ok}:  
\arr (t_a, t) * t \mapsto t'    * \psi'   ] \ 
[\textit{er}:\textsf{false}]
}
\and
\inferrule[\textsc{StoreEr}]
{ \psi \in \textsf{CF}_\text{sh} (\textsf{termS}(\psi) \cup  \textsf{termC}(x:=[t]))  \\
\psi_{\text{pure}} \models  b(t) \approx \texttt{null} 
}
{ [ \psi] \  [t] := t' \  
[\textit{ok}:\textsf{false}] \
[\textit{er}: \psi ]}
\end{mathpar}}
Here, $\theta$ denotes $[x := x']$.
\caption{Rules for the other Memory Manipulation}
    \label{fig: rules for Memory Manipulation}
\end{figure}

\section{Weakest Postcondition}\label{sec: body-wpo}
In this section, we describe the definition of the weakest postcondition $\text{\normalfont{WPO}}[\![P,\mathbb{C},\epsilon]\!] $
and introduce a function $\textsf{wpo}(P,\mathbb{C},\epsilon)$ that computes weakest
postconditions.
We begin by formally defining the weakest postcondition as follows.
\begin{definition}[Weakest Postcondition]\normalfont

The \emph{weakest postcondition} of a precondition $P$, a program $\mathbb{C}$,
 and an exit condition $\epsilon$ is defined as follows.
$$
\text{\normalfont{WPO}}[\![P,\mathbb{C},\epsilon]\!] = \{ \sigma'\in\textsc{State}^+ \mid \exists \sigma. 
\sigma \models P \land
(\sigma,\sigma') \in [\![\mathbb{C}]\!]_\epsilon \}
$$
\end{definition}

\subsection{Weakest Postcondition Generating Function \(\normalfont\textsf{wpo}\)}

We define the weakest postcondition generating function $\textsf{wpo}$ as follows.

 \begin{definition}[$\normalfont\textsf{wpo}$]
\label{def: def of wpo}\normalfont 
We simultaneously define $\textsf{wpo}$ and $\textsf{wpo}_{\rm sh}$ as
  follows.
  \begin{itemize}
   \item For an assertion $P$ such that $\textsf{cano} (P,\mathbb{C}) = \bigvee_{i \in I} \exists
	 \overrightarrow{x_i}. \psi_i$, define
	  $$
	 \textsf{wpo}( P , \mathbb{C}, \epsilon) =  
	 \bigvee_{i \in I} \normalfont \exists \overrightarrow{x_i}.
	 \textsf{wpo}_\text{sh} ( \psi_i,\mathbb{C}, \epsilon).
	 $$
   \item For $\psi\in\textsf{CF}_{\rm sh}(\textsf{termS}(\psi) \cup
	 \textsf{termC} (\mathbb{C}))$,
	 $\textsf{wpo}_\text{sh}(\psi,\mathbb{C},\epsilon)$ is defined
	 in Figures \ref{fig: wpo for clauses} and \ref{fig: wpo for
	 clauses heap manipulate}.
  \end{itemize}
 \end{definition}
Note that infinite disjunctions are required only when calculating \(\textsf{wpo}\) for \(\mathbb{C}^\star\), following the setting of \cite{lee2024relative}.  
Additionally, the cases for commands that do not manipulate heaps and blocks are the same as those in \cite{lee2024relative}.

\begin{figure}
\normalfont
\scriptsize
\begin{align*}
\textsf{wpo}_\text{sh}  (\psi ,\texttt{skip},\epsilon) &= 
\begin{cases}
\psi & \epsilon= \textit{ok} \\
\textsf{false} &\epsilon= \textit{er}
\end{cases}   \\
\textsf{wpo}_\text{sh}  ( \psi ,\texttt{error},\epsilon) &= 
\begin{cases}
\textsf{false}  & \textit{ok} \\
\psi & \textit{er}
\end{cases} \\
\textsf{wpo}_\text{sh} ( \psi , \texttt{local $x $ in $\mathbb{C}$} ,\epsilon) &=
\bigvee_{j \in J} \exists   x'' , \overrightarrow{x_j}  . \varphi_j  \\
& \text{$\Big( \textsf{wpo} (  \psi [x:=x']  ,  \mathbb{C}  ,\epsilon)
[x:=x''] [x' := x] = \bigvee_{j \in J} \exists \overrightarrow{x_j} . \varphi_j \Big)$ }\\
\textsf{wpo}_\text{sh} (\psi,\texttt{assume($B$)},\epsilon) &=
\begin{cases}
\psi * B  & \epsilon= \textit{ok} \\
\textsf{false} & \epsilon= \textit{er}
\end{cases} \\
\textsf{wpo}_\text{sh} (\psi,\mathbb{C}_1;\mathbb{C}_2,\textit{ok}) &= 
\textsf{wpo}
(\textsf{wpo}_\text{sh} (\psi ,\mathbb{C}_1, \textit{ok}), \mathbb{C}_2,\textit{ok}) 
\\
\textsf{wpo}_\text{sh} (\psi,\mathbb{C}_1;\mathbb{C}_2,\textit{er}) &= 
 \textsf{wpo}_\text{sh}(\psi,\mathbb{C}_1, \textit{er})  \lor
\textsf{wpo}
(\textsf{wpo}_\text{sh} (\psi,\mathbb{C}_1, \textit{ok}), \mathbb{C}_2,\textit{er})
\\
\textsf{wpo}_\text{sh} (\psi,\mathbb{C}^\star,\textit{ok}) &= 
\bigvee_{n \in \mathbb{N}} \Upsilon (n) 
\quad\text{($\Upsilon (0) =\psi$  and 
$\Upsilon (n+1) = \textsf{wpo} (\Upsilon (n),\mathbb{C},\textit{ok})$)}\\
\textsf{wpo}_\text{sh} (\psi,\mathbb{C}^\star,\textit{er}) &= 
\bigvee_{n \in \mathbb{N}} \textsf{wpo} (\Upsilon (n), \mathbb{C}, \textit{er}) \\
\textsf{wpo}_\text{sh} (\psi ,\mathbb{C}_1 + \mathbb{C}_2,\epsilon) &= 
\textsf{wpo}_\text{sh}(\psi ,\mathbb{C}_1,\epsilon) \lor \textsf{wpo}_\text{sh}(\psi,\mathbb{C}_2,\epsilon) \\
\textsf{wpo}_\text{sh} (  \psi , x := t, \epsilon) &= 
\begin{cases}
\exists  x' .  \psi \theta  * x \approx  t  \theta   & \epsilon=\textit{ok} \\
\textsf{false} & \epsilon= \textit{er}
\end{cases} \\
\textsf{wpo}_\text{sh} (\psi ,x:=\texttt{*},\epsilon) &=
\begin{cases}
\exists x'. \psi \theta & \epsilon=\textit{ok} \\
\textsf{false} & \epsilon= \textit{er}
\end{cases} 
\end{align*}
\caption{$\normalfont\textsf{wpo}_\text{sh}$ for commands that do not manipulate heaps}
\label{fig: wpo for clauses}
\end{figure}

\newgeometry{left=1cm, right=1cm, top=3cm, bottom=3cm} 

\begin{figure*}
\normalfont

{
\scriptsize
\begin{align*}
&\textsf{wpo}_\text{sh} (\psi ,x:=\texttt{alloc($t$)},\textit{ok}) = \\ 
& \exists x'.\bigvee_{0\le\alpha\le\beta\le N} \textsf{rep}_\alpha^\beta \bigg(  (\psi\theta * \arr(x,x+t\theta) * \varphi) \lor\\
& \bigvee_{1\le j\le k\le m} \bigg( (\psi'\theta
 * \bigast_{i=1}^{j-1} \bararr(t_i\theta,t'_i\theta)
 * \bararr(t_j\theta,x)
 * \arr(x,x+t\theta)  * \bararr(x+t\theta,t'_k\theta)
 * \bigast_{i=k+1}^{m} \bararr(t_i\theta,t'_i\theta) 
 * t_j\theta<x<t'_j\theta * t_k\theta<x+t\theta< t'_k\theta * \varphi) \lor \\
 & \qquad\qquad (\psi'\theta
 * \bigast_{i=1}^{j-1} \bararr(t_i\theta,t'_i\theta)
 * \arr(x,x+t\theta)
 * \bararr(x+t\theta,t'_k\theta)
 * \bigast_{i=k+1}^{m} \bararr(t_i\theta,t'_i\theta) 
 * t'_{j-1}\theta\le x\le t_j\theta * t_k\theta<x+t\theta< t'_k\theta * \varphi) \lor \\
 & \qquad\qquad (\psi'\theta
 * \bigast_{i=1}^{j-1} \bararr(t_i\theta,t'_i\theta)
 * \bararr(t_j\theta,x)
 * \arr(x,x+t\theta)
 * \bigast_{i=k+1}^{m} \bararr(t_i\theta,t'_i\theta) 
 * t_j\theta<x<t'_j\theta * t'_k\theta\le x+t\theta\le t_{k+1}\theta * \varphi) \lor \\
 & \qquad\qquad (\psi'\theta
 * \bigast_{i=1}^{j-1} \bararr(t_i\theta,t'_i\theta)
 * \arr(x,x+t\theta)
 * \bigast_{i=k+1}^{m} \bararr(t_i\theta,t'_i\theta) 
 * t'_{j-1}\theta\le x\le t_{j}\theta * t'_k\theta\le x+t\theta\le t_{k+1}\theta * \varphi) \bigg)
 \bigg)\\
& \text{// $\psi = \bigast_{i=1}^{m} \bararr(t_i, {t'}_i) * \psi'$ with
 $\psi'$ containing no $\bararr$, $\psi_{\text{pure}} \models 
 \bigast_{i=1}^{m-1} t_i < t_{i+1}$, and
 $\psi_{\rm pure}\models {\tt null}R_1 u_1 R_2\cdots R_N u_N$ for $\textsf{termS}^-(\psi)=\{u_1,\ldots,u_N\}$.
 }  \\
 & \text{// $\varphi=b(x) \approx x * e(x) \approx
 x+t\theta * u_\alpha\theta< x \le u_{\alpha+1}\theta *
u_\beta\theta<x+t\theta\le u_{\beta+1}\theta$}  \\
&\textsf{wpo}_\text{sh} (\psi ,x:=\texttt{alloc($t$)},\textit{er}) = \textsf{false} \\
&\textsf{wpo}_\text{sh} ( \psi , \texttt{free($t$)},\textit{ok}) = \\
&\begin{cases}
\exists y . 
(\arr (t_1 , t)   * \bararr ( t , y ) * 
\arr(y, {t'}_k )  * \psi'  )
\tau 
& 
\text{if $\psi = \arr (t_1, {t'}_1 ) *  \bigast_{i=2}^{k-1} \Arr (t_i, {t'}_i)_{\alpha_i}
 *\arr (t_k, {t'}_k ) *  \psi'$ } \\
& 
\ \ \ \ \& \   \psi_{\text{pure}} \models 
t_1 < b(t) < {t'}_1 * t_k < e(t) < {t'}_k 
*
b(t) \approx t *  
\bigast_{i=1}^{k-1} {t'}_i \approx t_{i+1} * \bigast_{i=1}^{k-1} {t}_i <
  t_{i+1} \text{ for some  \( k \) and $\psi'$} \\ 
\exists y . 
( \bararr ( t , y ) * \arr(y, {t'}_k )  * \psi' ) 
\tau 
& 
\text{if $\psi = \bigast_{i=1}^{k-1} \Arr (t_i, {t'}_i)_{\alpha_i} * \arr(t_k , {t'}_k) * \psi'$} \\
& 
\ \ \ \ \& \   \psi_{\text{pure}} \models 
t_1 \approx b(t) * t_k < e(t) < {t'}_k 
*
b(t) \approx t * 
\bigast_{i=1}^{k-1} {t'}_i \approx t_{i+1}  * \bigast_{i=1}^{k-1} {t}_i
  < t_{i+1} \text{ for some  \( k \) and $\psi'$}   \\ 
\exists y .  ( \bararr ( t , y ) *   \psi' ) \tau 
& 
\text{if $\psi = \bigast_{i=1}^{k} \Arr (t_i, {t'}_i)_{\alpha_i} * \psi'$} \\
& 
\ \ \ \ \& \   \psi_{\text{pure}} \models 
t_1 \approx  b(t) *   e(t) \approx {t'}_k 
*
b(t) \approx t * 
\bigast_{i=1}^{k-1} {t'}_i \approx t_{i+1}  * \bigast_{i=1}^{k-1} {t}_i
  < t_{i+1} \text{ for some  \( k \) and $\psi'$} \\
\exists y . (  \arr (t_1 , t)   * \bararr (t , y )  *   \psi' ) 
\tau 
& 
\text{if $\psi = \arr(t_1, {t'}_1 ) *   \bigast_{i=2}^{k} \Arr (t_i, {t'}_i)_{\alpha_i} * \psi' $} \\
& 
\ \ \ \ \& \   \psi_{\text{pure}} \models 
t_1 < b(t) < {t'}_1 *   e(t) \approx  {t'}_k 
*
b(t) \approx t * 
\bigast_{i=1}^{k-1} {t'}_i \approx t_{i+1}  * \bigast_{i=1}^{k-1} {t}_i
  < t_{i+1} \text{ for some  \( k \) and $\psi'$} \\
\textsf{false}
&
\text{otherwise}
\end{cases}\\
&\textsf{wpo}_\text{sh} (\psi , \texttt{free($t$)},\textit{er}) =
\begin{cases}
\psi 
& \text{if $\psi_{\text{pure}} \models  b(t) \not\approx t \lor b(t) \approx \texttt{null}  $ } \\
\textsf{false}  &   \text{otherwise}
\end{cases} \\
&\textsf{wpo}_\text{sh} ( \psi , x:=[t] ,\textit{ok}) =  
\begin{cases}
\exists x'. 
\psi \theta * x \approx u \theta  
& 
\text{if $  \psi = t_a \mapsto u * \psi'  $
}  \ \& \   \psi_{\text{pure}} \models   t_a \approx t \text{ for some
  $t_a$,  \( u \) and $\psi'$}\\ 
\exists x' . \psi \theta 
& 
\text{if $\psi =\arr (t_a , {t'}_a)  * \psi'$}  
  \ \& \   \psi_{\text{pure}} \models   t_a \leq t < {t'}_a
  \text{ for some $t_a$, ${t'}_a$ and $\psi'$}\\ 
\textsf{false}
&
\text{otherwise}
\end{cases}\\
&\textsf{wpo}_\text{sh} ( \psi , x:=[t]  ,\textit{er}) =
\begin{cases}
\psi 
 & 
\text{if $\psi_{\text{pure}} \models   b(t) \approx \texttt{null} $ } \\
\textsf{false} & \text{otherwise}
\end{cases} \\
&\textsf{wpo}_\text{sh} ( \psi , [t]:=t'  ,\textit{ok}) =  
\begin{cases}
  t \mapsto t'  * \psi'  
& 
  \text{if $\psi = t_a \mapsto -  * \psi'   $}
  \ \& \   \psi_{\text{pure}} \models    t_a \approx t
  \text{ for some $t_a$  and $\psi'$}\\ 
  t \mapsto t' * \arr (t+1, {t'}_a)   * \psi'
& 
\text{if $\psi = \arr (t_a , {t'}_a)  * \psi'    $} 
  \ \& \ \psi_{\text{pure}} \models   t_a \approx t
  \text{ for some $t_a$, ${t'}_a$ and $\psi'$}\\ 
\arr (t_a, t) * t \mapsto t' * \arr (t+1, {t'}_a)  * \psi'   
& 
\text{if $\psi = \arr (t_a , {t'}_a)  * \psi'$} 
  \ \& \ \psi_{\text{pure}} \models    t_a < t * t  + 1 < {t'}_a
  \text{ for some $t_a$, ${t'}_a$ and $\psi'$}\\ 
\arr (t_a, t) * t \mapsto t'    * \psi'   
& 
\text{if $\psi =  \arr (t_a , {t'}_a)  * \psi'$} 
  \ \& \ \psi_{\text{pure}} \models  t + 1 \approx {t'}_a
  \text{ for some $t_a$, ${t'}_a$ and $\psi'$}\\ 
\textsf{false} 
&
\text{otherwise}
\end{cases}\\
&\textsf{wpo}_\text{sh} ( \psi ,[t]:=t'  ,\textit{er}) =
\begin{cases}
\psi 
 & 
\text{if $\psi_{\text{pure}} \models   b(t) \approx \texttt{null} $ } \\
\textsf{false} & \text{otherwise}
\end{cases} \\
\end{align*}
}
 Here, $\theta$ denotes $[x := x']$.
 $\textsf{rep}_\alpha^\beta(\chi)$ is the formula obtained from $\chi$ by
 replacing $b(u_i\theta)$ and $e(u_i\theta)$ with ${\tt null}$ for
 $\alpha<i\le\beta$.
 $\tau$ denotes $[T_b(\psi, t) := t][T_e(\psi, t) := y]$.

\caption{$\normalfont\textsf{wpo}_\text{sh}$ for commands manipulating heaps}
\label{fig: wpo for clauses heap manipulate}
\end{figure*}

\restoregeometry 

\section{Soundness and Relative Completeness}\label{sec: body-completeness}

In this section, we prove the soundness and relative completeness of ISL.  

For soundness, we use the proof of existing work \cite{raad2020local,lee2024relative}
for commands that do not
manipulate heaps.
For commands that manipulate
heaps, we utilize
the expressiveness of the weakest postconditions
(Proposition \ref{prop: expressiveness of WPO calculus}), which states
that the set of exact states satisfying $\textsf{wpo}(P, \mathbb{C},
\epsilon)$ is exactly $\text{WPO}[\![P, \mathbb{C}, \epsilon]\!]$.

The relative completeness is proven by the expressiveness and the property
that the triple $[P] \ \mathbb{C} \ [\epsilon :
\textsf{wpo}(P, \mathbb{C}, \epsilon)] $ is always derivable in our
proof system (Proposition \ref{prop: for all p, c, epsilon, we have
wpo}).

To prove expressiveness, we present the following lemmas.
Lemmas \ref{lma: Substitution for assignment}, \ref{lma: quantifier is free in and out of WPO}, \ref{lma: from bigvee to bigcup}, and \ref{lma: sensei's one shot lemma}  
are the same as those used in the expressiveness proof in \cite{lee2024relative}, except that our versions use models with blocks.
Since the results remain consistent with our current setting, we will directly apply these lemmas to prove the expressiveness of our system.

\begin{lemma}[Substitution with models with blocks]\label{lma: Substitution for assignment} 
\begin{align*}
     (s,h,B) \models  \psi [x:=t] 
   \iff (s[x \mapsto [\![ t  ]\!]_{s,B}  ], h, B) \models  \psi
\end{align*}

\end{lemma}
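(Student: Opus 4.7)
The proof proceeds by structural induction on $\psi$, with a preliminary term-level substitution lemma as the base of the argument. The plan is to first establish that, for any term $u$,
\[
\llbracket u[x:=t] \rrbracket_{s, B} \;=\; \llbracket u \rrbracket_{s[x \mapsto \llbracket t \rrbracket_{s,B}], B},
\]
by induction on $u$. The cases $u = n$, $u = \texttt{null}$, and $u = y$ with $y \neq x$ are immediate from the definitions of substitution and term interpretation. The case $u = x$ follows because $x[x:=t] = t$ and $s[x \mapsto \llbracket t \rrbracket_{s,B}](x) = \llbracket t \rrbracket_{s,B}$. For compound terms $u = u_1 + u_2$, $u = b(u')$, and $u = e(u')$, I would invoke the inductive hypothesis on the subterms and use the fact that $b_B$ and $e_B$ depend only on $B$, which is unchanged by the substitution.

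Next, I would extend this to symbolic heaps by structural induction on $\psi$. For pure atomic formulas $t_1 \approx t_2$, $t_1 \not\approx t_2$, $t_1 \leq t_2$, and $t_1 < t_2$, both sides of the biconditional reduce to the same numerical relation on the $\llbracket \cdot \rrbracket$ values by the term-level lemma, while the empty-heap side condition $\textsf{dom}(h) = \emptyset$ is independent of the store. For the spatial atoms $t \mapsto u$, $\arr(t,t')$, and $\bararr(t,t')$, I would apply the term-level lemma to each term, noting that the conditions on $\textsf{dom}(h)$, $\textsf{dom}_+(h)$, and $\textsf{dom}_-(h)$ likewise depend on the interpreted values rather than on the syntactic form. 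The case $\psi = \textsf{emp}$ is trivial since $\textsf{emp}[x:=t] = \textsf{emp}$.

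For the separating conjunction $\psi_1 * \psi_2$, I would use that substitution distributes, $(\psi_1 * \psi_2)[x:=t] = \psi_1[x:=t] * \psi_2[x:=t]$, and then apply the inductive hypothesis to each conjunct. The key point is that the heap decomposition $h = h_1 \circ h_2$ does not mention the store, so the same splitting witnesses both sides.

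The main obstacle is the well-formedness constraint that arguments of $b$ and $e$ must not themselves contain $b$ or $e$: a naive substitution into $b(u')$ could in principle violate this. This is resolved by observing that in all places this lemma is invoked, the substituted term $t$ itself contains neither $b$ nor $e$ (in the proof rules such as \textsc{Assign} and \textsc{Havoc}, $\theta = [x:=x']$ with $x'$ a fresh variable; in program semantics, terms appearing in commands are likewise restricted). Hence the substitution preserves the syntactic restriction, and the inductive step for $b(u')$ and $e(u')$ goes through without complication.
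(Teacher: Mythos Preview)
Your proposal is correct and follows exactly the standard route the paper has in mind: the paper does not spell out a proof but merely remarks that this is ``a traditional substitution lemma'' whose extension to block models is immediate ``since substitution does not alter the heap or block.'' Your two-stage induction (first on terms, then on symbolic heaps) is precisely that traditional argument, and your handling of the $b$/$e$ well-formedness side condition---observing that in every actual invocation the substituted term $t$ is $b,e$-free---matches the paper's implicit usage (e.g.\ $\theta=[x:=x']$ in the appendix proofs).
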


This lemma is a traditional substitution lemma, also used to prove expressiveness in \cite{lee2024relative}.  
Since substitution does not alter the heap or block, it can naturally be extended to the expanded models with blocks.

\begin{lemma}\label{lma: quantifier is free in and out of WPO}
If $x \notin \normalfont\textsf{termC}(\mathbb{C})$, then the following holds for any 
$(s', h',B')$, $\psi$, $\mathbb{C}$ and $\epsilon$.
{\normalfont\begin{align*}
&(s', h',B') \in \text{\normalfont{WPO}}[\![\exists x. \psi,\mathbb{C},\epsilon]\!] \iff \\
&\exists v \in \textsc{Val} . 
(s' [x \mapsto v] , h' , B') \in \text{\normalfont{WPO}}[\![\psi,\mathbb{C},\epsilon]\!] 
\end{align*}}
\end{lemma}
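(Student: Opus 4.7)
The plan is to unfold the definition of $\text{WPO}$ on both sides and reduce the claim to a \emph{freshness} property of the denotational semantics: if $x\notin\textsf{termC}(\mathbb{C})$, then $[\![\mathbb{C}]\!]_\epsilon$ is invariant under pointwise re-assignment of $x$ on both endpoints, and moreover it preserves the value of $x$.

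More precisely, after unfolding, the left-hand side becomes ``there exist $(s,h,B)$ and $w\in\textsc{Val}$ with $(s[x\mapsto w],h,B)\models\psi$ and $((s,h,B),(s',h',B'))\in[\![\mathbb{C}]\!]_\epsilon$'', while the right-hand side becomes ``there exist $v\in\textsc{Val}$ and $(s_0,h,B)$ with $(s_0,h,B)\models\psi$ and $((s_0,h,B),(s'[x\mapsto v],h',B'))\in[\![\mathbb{C}]\!]_\epsilon$''. The auxiliary freshness lemma I will establish is: for all $v\in\textsc{Val}$,
\[
((s,h,B),(s',h',B'))\in[\![\mathbb{C}]\!]_\epsilon \iff ((s[x\mapsto v],h,B),(s'[x\mapsto v],h',B'))\in[\![\mathbb{C}]\!]_\epsilon,
\]
and additionally every such pair satisfies $s(x)=s'(x)$.

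Granting this lemma, the two directions are short book-keeping. For $(\Rightarrow)$, take the witness $w$ supplied by $(s[x\mapsto w],h,B)\models\psi$ and apply the freshness lemma with $v:=w$ to the given execution, obtaining $((s[x\mapsto w],h,B),(s'[x\mapsto w],h',B'))\in[\![\mathbb{C}]\!]_\epsilon$; then $v:=w$ witnesses the right-hand side. For $(\Leftarrow)$, the witness $(s_0,h,B)\models\psi$ satisfies $s_0(x)=v$ by the $x$-preservation part, so re-applying the freshness lemma with value $s'(x)$ converts the execution into one from $(s_0[x\mapsto s'(x)],h,B)$ to $(s',h',B')$; the initial state still models $\exists x.\psi$ via the witness $v=s_0(x)$, which establishes the left-hand side.

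The main obstacle is the freshness lemma itself, which I would prove by structural induction on $\mathbb{C}$. The atomic cases require inspecting each clause of Definition~\ref{def: Denotational semantics of ISL} and checking, using the defining equations of $\textsf{termC}$, that every occurrence of a term interpretation $[\![t]\!]_{s,B}$ reads only variables inside $\textsf{termC}(\mathbb{C})$ and that every store update on the target state writes only variables in $\textsf{termC}(\mathbb{C})$; since $x$ is excluded, such updates commute with $[x\mapsto v]$ and such reads are unchanged by $[x\mapsto v]$. Note that term interpretations also depend on $B$, but $B$ is untouched by the substitution, so the $b$/$e$ cases pose no issue. The inductive cases propagate the hypothesis through the $\textsf{termC}$ clauses for $;$, $+$, $\texttt{local}$, and $\mathbb{C}^\star$. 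The only subtle point is $\texttt{local }y\text{ in }\mathbb{C}'$ when $y=x$, where $x$ may genuinely occur in $\textsf{termC}(\mathbb{C}')$; there the outer semantics already restores $s(x)=s'(x)$ and discards any intermediate $x$-update, so the invariance holds at the outer level regardless of the inner behavior. For $\mathbb{C}^\star$ I first establish by induction on $m$ that $\textsf{termC}(\mathbb{C}^m)\subseteq\textsf{termC}(\mathbb{C})$, and then apply the inductive hypothesis uniformly across the union over $m$.
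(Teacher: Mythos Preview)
Your proposal is correct and follows the standard route for this kind of lemma. The paper itself does not spell out a proof here: it simply remarks that Lemmas~\ref{lma: Substitution for assignment}--\ref{lma: sensei's one shot lemma} are straightforward extensions of the corresponding lemmas in \cite{lee2024relative}, the only change being that models now carry block information $B'$, and that this change does not affect the argument. Your freshness lemma (invariance of $[\![\mathbb{C}]\!]_\epsilon$ under simultaneous $[x\mapsto v]$ on both endpoints, together with $s(x)=s'(x)$) is exactly the semantic fact underpinning such results, and your structural induction, including the handling of $\texttt{local}\ y\ \texttt{in}\ \mathbb{C}'$ when $y=x$, is the expected one. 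So your approach is in line with what the paper implicitly relies on; you have simply made explicit what the paper delegates to prior work.
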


\begin{lemma}\label{lma: from bigvee to bigcup}
$$
\text{\normalfont{WPO}}[\![ \bigvee_{i \in I}  P ,\mathbb{C},\epsilon ]\!] \iff 
\bigcup_{i \in I} \text{\normalfont{WPO}}[\![ P,\mathbb{C},\epsilon]\!]
$$
\end{lemma}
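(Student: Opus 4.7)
The plan is to prove this by straightforward unfolding of the definition of $\text{WPO}$ together with the assertion semantics of disjunction. The statement is a set equality (the $\iff$ should be read as set equality between the two sides, where membership on the left holds iff membership on the right holds).

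First I would fix an arbitrary exact state $\sigma' \in \textsc{State}^+$ and show the two directions of the membership equivalence. By definition of $\text{WPO}$, we have
\begin{align*}
\sigma' \in \text{WPO}[\![ \textstyle\bigvee_{i \in I} P_i, \mathbb{C}, \epsilon ]\!]
&\iff \exists \sigma.\; \sigma \models \textstyle\bigvee_{i \in I} P_i \;\land\; (\sigma, \sigma') \in [\![\mathbb{C}]\!]_\epsilon.
\end{align*}
By the semantic clause for disjunction, $\sigma \models \bigvee_{i \in I} P_i$ is equivalent to $\exists i \in I.\; \sigma \models P_i$. Substituting this in and swapping the two existential quantifiers (which is valid because the choice of $i$ does not depend on $\sigma$), the right-hand side becomes
\begin{align*}
&\exists i \in I.\; \exists \sigma.\; \sigma \models P_i \;\land\; (\sigma, \sigma') \in [\![\mathbb{C}]\!]_\epsilon \\
&\iff \exists i \in I.\; \sigma' \in \text{WPO}[\![ P_i, \mathbb{C}, \epsilon ]\!] \\
&\iff \sigma' \in \textstyle\bigcup_{i \in I} \text{WPO}[\![ P_i, \mathbb{C}, \epsilon ]\!].
\end{align*}

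Since this chain of equivalences holds for every $\sigma' \in \textsc{State}^+$, the two sets coincide. There is no substantive obstacle here: the argument is purely propositional manipulation of the definitions, and unlike Lemma \ref{lma: quantifier is free in and out of WPO} it does not require any side condition on free variables or any interaction between the precondition and the shape of $\mathbb{C}$. The only point to be careful about is that in the statement the same symbol $P$ appears under $\bigvee_{i \in I}$, which I read as a family $(P_i)_{i \in I}$ indexed by $i$; the proof above is exactly what is needed in this reading.
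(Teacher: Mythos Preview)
Your proof is correct and is exactly the routine definitional unfolding one expects for this lemma. The paper itself does not give a proof of Lemma~\ref{lma: from bigvee to bigcup}; it simply cites the corresponding lemma in~\cite{lee2024relative} and remarks that the argument carries over unchanged to models with blocks, so there is nothing further to compare.
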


\begin{lemma}\label{lma: sensei's one shot lemma}
Suppose $\normalfont\textsf{cano} (P, \mathbb{C}) = \bigvee_{i \in I} \exists \overrightarrow{x_i} . \psi_i$.
If for all $i \in I$, $(s',h',B')$ and $\epsilon$, 
$\normalfont
(s', h',B') \models \textsf{wpo}_\text{sh} (\psi_i, \mathbb{C}, \epsilon) \iff 
(s', h',B') \in \text{WPO} [\![ \psi_i ,  \mathbb{C}, \epsilon  ]\!]$,
then for any $(s',h')$ and $\epsilon$, 
$\normalfont
\text{\textnormal{(}}s', h',B'\text{\textnormal{)}} \models \textsf{wpo} (P, \mathbb{C}, \epsilon) \iff 
\text{\textnormal{(}}s', h',B'\text{\textnormal{)}} \in \text{WPO} [\![ P ,  \mathbb{C}, \epsilon  ]\!] $ holds.
\end{lemma}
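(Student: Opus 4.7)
The plan is to prove the lemma by a direct calculation that chains together the three preceding lemmas: the equivalence $P \equiv \textsf{cano}(P,\mathbb{C})$ from Lemma~\ref{lma: cano and P is equivalent}, the distribution of $\text{WPO}$ over disjunctions from Lemma~\ref{lma: from bigvee to bigcup}, and the commutation of $\text{WPO}$ with existential quantification from Lemma~\ref{lma: quantifier is free in and out of WPO}. Since $\textsf{wpo}(P,\mathbb{C},\epsilon)$ is defined to be $\bigvee_{i\in I}\exists\overrightarrow{x_i}.\textsf{wpo}_{\rm sh}(\psi_i,\mathbb{C},\epsilon)$, the proof reduces to showing that the semantic operators line up with the syntactic ones on both sides.

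First, I would unfold both sides. On the semantic side, using Lemma~\ref{lma: cano and P is equivalent}, $\text{WPO}[\![P,\mathbb{C},\epsilon]\!] = \text{WPO}[\![\bigvee_{i\in I}\exists\overrightarrow{x_i}.\psi_i,\mathbb{C},\epsilon]\!]$, since the validity of $P \equiv \textsf{cano}(P,\mathbb{C})$ means the two assertions define the same set of models and hence the same weakest-postcondition set. Then Lemma~\ref{lma: from bigvee to bigcup} gives $\text{WPO}[\![\bigvee_{i\in I}\exists\overrightarrow{x_i}.\psi_i,\mathbb{C},\epsilon]\!] = \bigcup_{i\in I}\text{WPO}[\![\exists\overrightarrow{x_i}.\psi_i,\mathbb{C},\epsilon]\!]$. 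Next, since the definition of $\textsf{cano}$ imposes $\overrightarrow{x_i}\notin\textsf{termC}(\mathbb{C})$, Lemma~\ref{lma: quantifier is free in and out of WPO} (applied once for each component of $\overrightarrow{x_i}$) gives $(s',h',B')\in\text{WPO}[\![\exists\overrightarrow{x_i}.\psi_i,\mathbb{C},\epsilon]\!]$ iff there is $\overrightarrow{v_i}\in\textsc{Val}$ with $(s'[\overrightarrow{x_i}\mapsto\overrightarrow{v_i}],h',B')\in\text{WPO}[\![\psi_i,\mathbb{C},\epsilon]\!]$.

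On the syntactic side, unfolding the semantics of disjunction and existential quantification gives that $(s',h',B')\models\textsf{wpo}(P,\mathbb{C},\epsilon)$ iff there exist $i\in I$ and $\overrightarrow{v_i}\in\textsc{Val}$ such that $(s'[\overrightarrow{x_i}\mapsto\overrightarrow{v_i}],h',B')\models\textsf{wpo}_{\rm sh}(\psi_i,\mathbb{C},\epsilon)$. The hypothesis of the lemma is precisely the pointwise equivalence $\textsf{wpo}_{\rm sh}(\psi_i,\mathbb{C},\epsilon) \leftrightarrow \text{WPO}[\![\psi_i,\mathbb{C},\epsilon]\!]$ at the state $(s'[\overrightarrow{x_i}\mapsto\overrightarrow{v_i}],h',B')$, so the two descriptions coincide and the equivalence follows.

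The routine part is the bookkeeping with quantifiers and disjunctions; the only subtlety to watch for is the side condition $\overrightarrow{x_i}\notin\textsf{termC}(\mathbb{C})$ required by Lemma~\ref{lma: quantifier is free in and out of WPO}, which is guaranteed by the $\alpha$-renaming convention built into Definition~\ref{def: Case Analysis for all possible aliasing and non-aliasing}. A minor point is that Lemma~\ref{lma: quantifier is free in and out of WPO} is stated for a single variable; iterating it over the vector $\overrightarrow{x_i}$ is straightforward since each successive quantifier is still disjoint from $\textsf{termC}(\mathbb{C})$. No obstacle beyond this careful accounting is expected; the lemma is essentially a packaging result that lets the expressiveness theorem be proved one symbolic-heap case at a time.
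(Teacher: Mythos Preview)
Your proposal is correct and is precisely the argument the paper has in mind: the paper does not spell out a proof but simply states that this lemma is the block-extended version of the corresponding lemma in \cite{lee2024relative}, whose proof proceeds exactly by combining Lemmas~\ref{lma: cano and P is equivalent}, \ref{lma: from bigvee to bigcup}, and \ref{lma: quantifier is free in and out of WPO} in the way you describe. Your attention to the side condition $\overrightarrow{x_i}\notin\textsf{termC}(\mathbb{C})$ and to iterating Lemma~\ref{lma: quantifier is free in and out of WPO} over the vector is appropriate and matches the intended reasoning.
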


These lemmas are extended versions of the lemmas in \cite{lee2024relative},  
with the only difference being that the models in this version include block information \(B'\).  
However, this difference does not affect the validity of the lemmas.

\begin{proposition}[Expressiveness]\label{prop: expressiveness of WPO calculus}
\begin{center}
$\normalfont\forall \sigma'\in\textsc{State}^+. \sigma' \in \text{\normalfont{WPO}}[\![P,\mathbb{C},\epsilon]\!] \iff
\normalfont \sigma' \models 
\textsf{wpo} (
P ,\mathbb{C},\epsilon)$
\end{center}
\end{proposition}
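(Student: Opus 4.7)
The plan is to first reduce the statement to the level of canonical symbolic heaps via Lemma~\ref{lma: sensei's one shot lemma}, and then proceed by structural induction on $\mathbb{C}$. Applying that lemma, it suffices to prove, for every canonical symbolic heap $\psi \in \textsf{CF}_\text{sh}(\textsf{termS}(\psi) \cup \textsf{termC}(\mathbb{C}))$, every exact state $(s', h', B')$, and every exit condition $\epsilon$, the biconditional
\[
(s', h', B') \models \textsf{wpo}_\text{sh}(\psi, \mathbb{C}, \epsilon) \iff (s', h', B') \in \text{WPO}[\![\psi, \mathbb{C}, \epsilon]\!].
\]
Structural induction on $\mathbb{C}$ then splits into two kinds of cases.

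For the commands that touch neither the heap nor the block set --- namely $\texttt{skip}$, $\texttt{error()}$, $x:=t$, $x:=\texttt{*}$, $\texttt{assume}(\pi)$, $\texttt{local}\ x\ \texttt{in}\ \mathbb{C}$, $\mathbb{C}_1;\mathbb{C}_2$, $\mathbb{C}_1+\mathbb{C}_2$, and $\mathbb{C}^\star$ --- the clauses of $\textsf{wpo}_\text{sh}$ in Figure~\ref{fig: wpo for clauses} coincide with those of \cite{lee2024relative}, and the semantic clauses leave the block component $B$ untouched. I therefore reuse the argument from that paper, relying on Lemmas~\ref{lma: Substitution for assignment}, \ref{lma: quantifier is free in and out of WPO}, and \ref{lma: from bigvee to bigcup} as the block-enriched versions of substitution, existential externalization, and disjunction needed to transport that proof into the present setting.

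The substantive work is the heap-manipulating cases. For $x := [t]$ and $[t] := t'$, I would unfold the corresponding clause of Definition~\ref{def: Denotational semantics of ISL}, split on whether $b_{B'}([\![ t ]\!]_{s', B'}) = 0$ (the $\textit{er}$ branch) and, in the $\textit{ok}$ branch, use canonicity of $\psi$ to place $[\![ t ]\!]$ either on an exact singleton $\mapsto$ cell or strictly inside an $\arr(t_a, t'_a)$ predicate; these positional alternatives are precisely the sub-cases listed in Figure~\ref{fig: wpo for clauses heap manipulate}. For $\texttt{free}(t)$, canonicity lets me align the deallocated block $[b_B([\![ t ]\!]), e_B([\![ t ]\!]))$ with a maximal contiguous chain of $\Arr$ predicates in $\psi$; the four sub-cases in the \texttt{free} clause correspond to whether that chain is flush with the block endpoints on the left and on the right, and the replacement $\tau$ records that all terms whose $b$ or $e$ once pointed into this block must now refer back to $t$ and a fresh witness $y$ of the block end.

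The main obstacle will be $x := \texttt{alloc}(t)$, where the block set strictly grows from $B$ to $B \cup \{[l, l + [\![ t ]\!]_{s, B})\}$ and the functions $b_B, e_B$ and $b_{B'}, e_{B'}$ disagree on every location that enters the new block. I would enumerate, through the indices $j \le k$, every way the new interval can straddle or sit among the $\bararr$ predicates of $\psi$, and simultaneously track how $b$ and $e$ act on each $u_i \in \textsf{termS}^-(\psi)$: the pure constraints $u_\alpha\theta < x \le u_{\alpha+1}\theta$ and $u_\beta\theta < x + t\theta \le u_{\beta+1}\theta$ pin down exactly the indices whose value falls inside the new block, while $\textsf{rep}_\alpha^\beta$ rewrites $b(u_i\theta)$ and $e(u_i\theta)$ to $\texttt{null}$ precisely for those indices --- reflecting the fact that, in the pre-state, these $u_i$ were not yet in any block and so their base/end addresses were $0$. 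Verifying that the large disjunction produced by $\textsf{wpo}_\text{sh}$ captures every possible pre-state and nothing more, while keeping the old $B$, the new $B' = B \cup \{[l, l + [\![ t ]\!]_{s, B})\}$, and the $\bararr$ decomposition of $\psi$ consistent, is the delicate bookkeeping step; once that case is settled, the remaining clauses follow by the same template.
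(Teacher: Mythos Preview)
Your proposal is correct and follows essentially the same route as the paper: reduce via Lemma~\ref{lma: sensei's one shot lemma} to canonical symbolic heaps, proceed by induction on $\mathbb{C}$, defer the non-heap-manipulating commands to \cite{lee2024relative}, and handle the heap-manipulating commands by the positional case analysis you describe (the paper carries out exactly these sub-cases in the Appendix, including the $\textsf{rep}_\alpha^\beta$ bookkeeping for $\texttt{alloc}$ and the $\tau$ replacement for $\texttt{free}$).
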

\begin{proof}
With Lemma \ref{lma: sensei's one shot lemma}, to prove this proposition, it is sufficient to show that
\[
\forall \sigma'. \sigma' \in \text{\normalfont{WPO}}[\![\psi, \mathbb{C}, \epsilon]\!] 
\iff 
\sigma' \models \textsf{wpo}_\text{sh}(\psi, \mathbb{C}, \epsilon)
\]
holds for canonical symbolic heaps $\psi$.
For the cases where $\textsf{wpo}_\text{sh}$ recursively generates
$\textsf{wpo}$, we assume as an induction hypothesis that the above
holds.

The cases for atomic commands that do not manipulate heaps share the
 same proofs as in \cite{lee2024relative}.
Therefore, we provide proofs only for heap-manipulating commands in the Appendix.
\end{proof}

By applying expressiveness (Proposition \ref{prop: expressiveness of WPO calculus}),  
we can demonstrate the soundness of our proof system.

\begin{theorem}[Soundness of ISL]\label{lma: local soundness}
\begin{center}
For any $P,\mathbb{C},\epsilon,Q$, if 
$\vdash  [P] \ \mathbb{C} \ [\epsilon: Q]$, then
$\models [P] \ \mathbb{C} \ [\epsilon: Q]$.
\end{center}
\end{theorem}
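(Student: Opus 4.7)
The plan is to proceed by structural induction on the derivation of $\vdash [P] \ \mathbb{C} \ [\epsilon: Q]$, splitting the rules into two groups. First, for the structural and control-flow rules (\textsc{Skip}, \textsc{Error}, \textsc{Seq1}, \textsc{Seq2}, both \textsc{Loop} rules, \textsc{Cons}, \textsc{Disj}, \textsc{Choice}, \textsc{Exist}, \textsc{Local}, \textsc{Frame Ok}) together with the non-heap atomic rules (\textsc{Assign}, \textsc{Havoc}, \textsc{Assume}), the soundness arguments from \cite{raad2020local,lee2024relative} carry over essentially verbatim. The only new ingredient in our states is the block component $B$, but none of these rules inspects or modifies $B$, so their proofs lift directly. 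For \textsc{Frame Ok}, the argument specifically relies on the restriction to the $\textit{ok}$ exit condition, so the counterexample given in Section~\ref{sec: body-ISL} for the $\textit{er}$ case does not arise.

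Second, for the heap-manipulation axioms (the five \textsc{Alloc} rules together with \textsc{AllocEr}, the four \textsc{FreeArr} rules with \textsc{FreeEr}, and the families \textsc{LoadPtr}/\textsc{LoadArr}/\textsc{LoadEr} and \textsc{StorePtr}/\textsc{StoreArr 1--3}/\textsc{StoreEr}), I would appeal to expressiveness. Each such axiom has the form $[\psi] \ \mathbb{C} \ [\epsilon: Q]$, with a canonical-form premise $\psi \in \textsf{CF}_{\text{sh}}(\cdots)$ and pure-formula entailments that pin down exactly one case of the $\textsf{wpo}_{\text{sh}}$ case analysis. By direct inspection of Figures \ref{fig: wpo for clauses heap manipulate}, \ref{fig: rules for Memory Allocation}, \ref{fig: rules for Memory Deallocation}, and \ref{fig: rules for Memory Manipulation}, the postcondition $Q$ of each axiom is (modulo logical equivalence) one of the disjuncts of the corresponding branch of $\textsf{wpo}_{\text{sh}}(\psi, \mathbb{C}, \epsilon)$. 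Hence $Q \models \textsf{wpo}(\psi, \mathbb{C}, \epsilon)$, and by Proposition~\ref{prop: expressiveness of WPO calculus} every exact $\sigma' \models Q$ lies in $\text{WPO}[\![\psi, \mathbb{C}, \epsilon]\!]$; unfolding the definition of WPO then supplies the witness $\sigma \models \psi$ with $(\sigma, \sigma') \in [\![\mathbb{C}]\!]_\epsilon$, which is validity.

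The main obstacle is making the syntactic correspondence rigorous for the \textsc{Alloc} family, where five separate rules partition the possible relative positions of the newly allocated block against the $\bararr$ segments in $\psi$, and the $\textsf{rep}_\alpha^\beta$ replacement together with the abbreviation $\varphi$ must be tracked carefully so that the postcondition of each individual \textsc{Alloc} axiom is captured by the big disjunction over $0 \le \alpha \le \beta \le N$ and $1 \le j \le k \le m$ in $\textsf{wpo}_{\text{sh}}$. The \textsc{FreeArr} cases are analogous but milder, since each clause in $\textsf{wpo}_{\text{sh}}(\psi, \texttt{free}(t), \textit{ok})$ is in one-to-one correspondence with a single \textsc{FreeArr} rule, and the replacement $\tau = [T_b(\psi,t):=t][T_e(\psi,t):=y]$ is identical on both sides. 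With the axioms dispatched via expressiveness, the inductive steps for \textsc{Seq2}, \textsc{Loop non-zero}, \textsc{Cons}, \textsc{Disj}, \textsc{Choice}, \textsc{Exist}, \textsc{Local}, and \textsc{Frame Ok} follow by routine reasoning on the composition of the relational semantics $[\![\mathbb{C}]\!]_\epsilon$, again as in \cite{lee2024relative}.
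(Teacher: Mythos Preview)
Your proposal is correct and follows essentially the same approach as the paper: induction on the derivation, deferring the non-heap rules to \cite{raad2020local,lee2024relative}, and handling each heap-manipulation axiom by observing that its postcondition entails $\textsf{wpo}(\psi,\mathbb{C},\epsilon)$ and then invoking expressiveness (Proposition~\ref{prop: expressiveness of WPO calculus}) to obtain the witnessing pre-state. The paper's proof is terser about the syntactic correspondence you flag for the \textsc{Alloc} family, simply asserting that $Q\models\textsf{wpo}(\psi,\mathbb{C},\epsilon)$ is ``easy to see'' from the definitions, but your more explicit discussion of that point is not a different route, just a more careful accounting of the same step.
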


\begin{proof}
 By induction on $\vdash  [P] \ \mathbb{C} \ [\epsilon: Q]$.
 We follow 
 the previous work \cite{lee2024relative} for commands that do not
 manipulate heaps.

 For the cases of heap-manipulating rules, we use the expressiveness
 (Proposition \ref{prop: expressiveness of WPO calculus}).
 Assume $\vdash [\psi] \ \mathbb{C} \ [\epsilon: Q]$ and $\sigma'
 \models Q$.  By definition of the inference rules and
 $\textsf{wpo}_{\rm sh}$,
 it is easy to see $Q\models
 \textsf{wpo}(\psi,\mathbb{C},\epsilon)$. Hence, $\sigma'\models
 \textsf{wpo}(\psi,\mathbb{C},\epsilon)$.  
By expressiveness, \(\sigma' \in \text{WPO} \llbracket \psi, \mathbb{C}, \epsilon \rrbracket\) holds,  
which states that there exists \(\sigma \models \psi\) such that \((\sigma, \sigma') \in \llbracket \mathbb{C} \rrbracket_\epsilon\).

 Thus, we conclude $\models [P] \ \mathbb{C} \ [\epsilon: Q]$.
\end{proof}

Next, we show the following lemma to prove Proposition \ref{prop: for
all p, c, epsilon, we have wpo}.

\begin{lemma}\label{lma: for all derivation if and then}
For any $P$, $\mathbb{C}$, $\epsilon$, let us say $
\normalfont\textsf{cano}(P, \mathbb{C}) = \bigvee_{i \in I} \exists
\overrightarrow{x_i}. \psi_i$.  Then, the following holds.
$$\normalfont \forall i \in I. \vdash [\psi_i] \ \mathbb{C} \ [ \epsilon
: \textsf{wpo}_\text{sh} (\psi_i,\mathbb{C},\epsilon) ] \
\Longrightarrow \ \vdash [P] \ \mathbb{C} \ [ \epsilon :
\textsf{wpo}(P,\mathbb{C},\epsilon) ]$$
\end{lemma}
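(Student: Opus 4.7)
The plan is to derive the conclusion from the given per-disjunct derivations by three successive structural applications: the \textsc{Exist} rule to reintroduce the existential quantifiers, the (infinitary) \textsc{Disj} rule to combine the disjuncts, and the \textsc{Cons} rule to replace the canonicalized precondition with $P$ itself. The overall argument is essentially syntactic bookkeeping, driven by Definition~\ref{def: Case Analysis for all possible aliasing and non-aliasing} (which tells us what $\textsf{cano}(P,\mathbb{C})$ looks like) and Definition~\ref{def: def of wpo} (which tells us that $\textsf{wpo}(P,\mathbb{C},\epsilon)$ is literally obtained by pushing $\textsf{wpo}_\text{sh}$ under each existential disjunct of $\textsf{cano}(P,\mathbb{C})$).

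Concretely, starting from the assumed derivations
\[
\vdash [\psi_i]\ \mathbb{C}\ [\epsilon : \textsf{wpo}_\text{sh}(\psi_i, \mathbb{C}, \epsilon)]
\]
for each $i \in I$, I would apply the \textsc{Exist} rule once for each variable in the vector $\overrightarrow{x_i}$ to derive
\[
\vdash [\exists \overrightarrow{x_i}.\, \psi_i]\ \mathbb{C}\ [\epsilon : \exists \overrightarrow{x_i}.\, \textsf{wpo}_\text{sh}(\psi_i, \mathbb{C}, \epsilon)].
\]
The infinitary \textsc{Disj} rule, collecting these derivations across $I$, then yields
\[
\vdash \left[\bigvee_{i \in I} \exists \overrightarrow{x_i}.\, \psi_i\right]\ \mathbb{C}\ \left[\epsilon : \bigvee_{i \in I} \exists \overrightarrow{x_i}.\, \textsf{wpo}_\text{sh}(\psi_i, \mathbb{C}, \epsilon)\right],
\]
whose precondition is $\textsf{cano}(P,\mathbb{C})$ and whose postcondition is $\textsf{wpo}(P,\mathbb{C},\epsilon)$ by Definition~\ref{def: def of wpo}. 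Finally, Lemma~\ref{lma: cano and P is equivalent} gives $P \equiv \textsf{cano}(P,\mathbb{C})$, so a single application of \textsc{Cons} converts the precondition to $P$, producing the required triple.

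The main obstacle I expect is discharging the side condition of \textsc{Exist}, which strips one quantifier at a time under the hypothesis that the bound variable does not occur in the terms associated with $\mathbb{C}$. Iterating the rule over $\overrightarrow{x_i}$ requires that every variable in the vector remain fresh with respect to $\mathbb{C}$ as the outer quantifiers are peeled off the postcondition. This is precisely guaranteed by the freshness convention built into Definition~\ref{def: Case Analysis for all possible aliasing and non-aliasing}, which stipulates $\overrightarrow{x_i} \cap \textsf{termC}(\mathbb{C}) = \emptyset$; one only has to check that $\textsf{wpo}_\text{sh}(\psi_i,\mathbb{C},\epsilon)$ does not smuggle any $x_{ij}$ into a position that would clash with this side condition, which is immediate from the fact that $\textsf{wpo}_\text{sh}$ only introduces genuinely fresh existential witnesses (the primed copies $x'$ and auxiliary $y$ in the heap rules). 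Once this bookkeeping is settled, no semantic reasoning is needed and the derivation is purely rule-based.
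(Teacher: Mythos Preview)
Your proposal is correct and mirrors the paper's own approach: the paper simply defers to \cite{lee2024relative}, noting that the argument goes through unchanged because the \textsc{Exist} and \textsc{Disj} rules are identical here, which is exactly the two-step derivation you spell out (followed by \textsc{Cons} via Lemma~\ref{lma: cano and P is equivalent}). One small remark: the side condition of \textsc{Exist} constrains only the bound variable against the \emph{command} terms, not against the postcondition, so your final check about $\textsf{wpo}_\text{sh}$ ``smuggling in'' some $x_{ij}$ is unnecessary---the freshness stipulation in Definition~\ref{def: Case Analysis for all possible aliasing and non-aliasing} already discharges the side condition outright.
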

\begin{proof}
The lemma proven here is identical to the lemma in \cite{lee2024relative},  
and the proof is also identical since the $\textsc{Exist}$ and $\textsc{Disj}$ rules are the same  
in both \cite{lee2024relative} and this paper.
\end{proof}

\begin{proposition}\label{prop: for all p, c, epsilon, we have wpo}
For any $P,\mathbb{C},\epsilon$,
$\normalfont\vdash [P] \ \mathbb{C} \ 
[ \epsilon :  
\textsf{wpo}(P,\mathbb{C},\epsilon) ]$.
\end{proposition}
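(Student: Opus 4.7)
The plan is to reduce the proposition, via Lemma~\ref{lma: for all derivation if and then}, to the symbolic-heap version: it suffices to show that for every canonical $\psi$ (i.e., $\psi\in\textsf{CF}_{\rm sh}(\textsf{termS}(\psi)\cup\textsf{termC}(\mathbb{C}))$) one has $\vdash[\psi]\ \mathbb{C}\ [\epsilon:\textsf{wpo}_{\rm sh}(\psi,\mathbb{C},\epsilon)]$. Since Lemma~\ref{lma: cano and P is equivalent} guarantees that every disjunct produced by $\textsf{cano}(P,\mathbb{C})$ is canonical, and $\textsf{wpo}$ is defined disjunct-wise using $\textsf{wpo}_{\rm sh}$, the lemma then lets us glue the derivations using the \textsc{Exist} and infinitary \textsc{Disj} rules to obtain the full triple for $P$.

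I would then proceed by induction on the structure of $\mathbb{C}$. The atomic non-heap commands \texttt{skip}, \texttt{error()}, $x:=t$, $x:=\texttt{*}$, and \texttt{assume($\pi$)} are immediate: the corresponding axioms of Figure~\ref{fig:generic proof rules of ISL} produce exactly the postcondition that appears as $\textsf{wpo}_{\rm sh}$ in Figure~\ref{fig: wpo for clauses}. The compound constructs $\mathbb{C}_1;\mathbb{C}_2$ and $\mathbb{C}_1+\mathbb{C}_2$ use the IH together with \textsc{Seq1}, \textsc{Seq2}, \textsc{Choice}, and \textsc{Disj}, mirroring the equations defining $\textsf{wpo}_{\rm sh}$ for these cases. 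The \texttt{local $x$} case additionally uses the \textsc{Local} and \textsc{Cons} rules together with the substitution convention of $\textsf{wpo}_{\rm sh}$, noting that the renaming $[x:=x'']$ and $[x':=x]$ can be absorbed by \textsc{Exist}. For $\mathbb{C}^\star$, one first shows $\vdash[\psi]\ \mathbb{C}^\star\ [\textit{ok}:\Upsilon(n)]$ for every $n\in\mathbb{N}$ by induction on $n$, using \textsc{Loop zero}/\textsc{Loop non-zero} and the IH on $\mathbb{C}$; the infinitary \textsc{Disj} then yields the required $\bigvee_{n}\Upsilon(n)$ postcondition, and the same pattern gives the \textit{er} case.

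For the heap-manipulating commands, the pure part $\psi_{\rm pure}$ of a canonical $\psi$ fully determines which inference rule applies, because canonicalization fixes the ordering and aliasing of all terms in $\textsf{termS}(\psi)\cup\textsf{termC}(\mathbb{C})$. For \texttt{free($t$)}, each of the four situations in the definition of $\textsf{wpo}_{\rm sh}(\psi,\texttt{free($t$)},\textit{ok})$ matches exactly one of \textsc{FreeArr~1}--\textsc{FreeArr~4}; otherwise $\psi_{\rm pure}\models b(t)\not\approx t\lor b(t)\approx\texttt{null}$ and \textsc{FreeEr} applies, whereas $\textsf{wpo}_{\rm sh}(\psi,\texttt{free($t$)},\textit{ok})=\textsf{false}$ (and $\vdash[\psi]\ \mathbb{C}\ [\textit{ok}:\textsf{false}]$ follows from \textsc{FreeEr} together with \textsc{Cons}). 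An analogous case analysis handles $x:=[t]$ and $[t]:=t'$ using the rules in Figure~\ref{fig: rules for Memory Manipulation}.

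The main obstacle is the allocation case, where $\textsf{wpo}_{\rm sh}(\psi,x:=\texttt{alloc($t$)},\textit{ok})$ is a large disjunction ranging over $0\le\alpha\le\beta\le N$, over the indices $1\le j\le k\le m$ of the $\bararr$-chunks, and over the four possible ``placement'' shapes (in-middle/left-boundary/right-boundary on each side). I would prove each disjunct individually by selecting the matching instance of \textsc{Alloc~1}--\textsc{Alloc~5} with the corresponding choice of $\alpha,\beta,j,k$, and then use the infinitary \textsc{Disj} rule to combine them into the full disjunction; the existential $\exists x'$ and the inert pure part that fixes $\psi_{\rm pure}\models{\tt null}\,R_1 u_1\cdots R_N u_N$ are already in canonical form, so no further weakening is needed beyond one application of \textsc{Cons} to reshape the conclusion. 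Careful bookkeeping of the $\textsf{rep}_\alpha^\beta$ substitution and of the side conditions $t_j\theta<x<t'_j\theta$ versus $t'_{j-1}\theta\le x\le t_j\theta$ is where the argument is most delicate, but it is otherwise mechanical once the one-to-one correspondence between axioms and disjuncts is established.
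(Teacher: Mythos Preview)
Your proposal is correct and follows essentially the same strategy as the paper's proof: reduce via Lemma~\ref{lma: for all derivation if and then} to canonical symbolic heaps, proceed by induction on $\mathbb{C}$, handle the non-heap cases as in \cite{lee2024relative}, and for the heap-manipulating commands observe that each disjunct of $\textsf{wpo}_{\rm sh}$ is exactly the postcondition of one of the corresponding axioms, then combine them with \textsc{Disj}. The paper's proof is much terser (it does not spell out the loop or allocation bookkeeping), but your elaboration is faithful to it; the only minor slip is the claim that the ``otherwise'' branch of $\textsf{wpo}_{\rm sh}(\psi,\texttt{free}(t),\textit{ok})$ forces $\psi_{\rm pure}\models b(t)\not\approx t\lor b(t)\approx\texttt{null}$, which need not hold, but since $\vdash[\psi]\ \mathbb{C}\ [\epsilon:\textsf{false}]$ is always derivable (empty \textsc{Disj} plus \textsc{Cons}) this does not affect the argument.
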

\begin{proof}
With Lemma \ref{lma: for all derivation if and then}, it is sufficient
 to show that for any $\psi, \mathbb{C}, \epsilon$, we have $\vdash
 [\psi] \ \mathbb{C} \ [ \epsilon : \textsf{wpo}_\text{sh}(\psi,
 \mathbb{C}, \epsilon)]$.  This statement is proven by induction on
 $\mathbb{C}$ and follows the same proof for rules that do not
 manipulate heaps in this paper and \cite{lee2024relative}.
 For rules that manipulate heaps, we can directly prove this lemma
from the definition of $\textsf{wpo}_{\rm sh}$ using
 the $\textsc{Disj}$ rule.
\end{proof}

By Proposition \ref{prop: expressiveness of WPO calculus}
and Proposition \ref{prop: for all p, c, epsilon, we have wpo},
we prove the relative completeness of ISL.

\begin{theorem}[Relative Completeness]
$$\text{For any $P,\mathbb{C},\epsilon,Q$, if 
$\models [P] \ \mathbb{C} \ [\epsilon: Q]$, then
$\vdash  [P] \ \mathbb{C} \ [\epsilon: Q]$.}$$
\end{theorem}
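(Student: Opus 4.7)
The plan is to chain together the two main propositions already established, namely the Expressiveness of $\textsf{wpo}$ (Proposition \ref{prop: expressiveness of WPO calculus}) and the derivability of $[P]\ \mathbb{C}\ [\epsilon:\textsf{wpo}(P,\mathbb{C},\epsilon)]$ (Proposition \ref{prop: for all p, c, epsilon, we have wpo}), with a single application of the $\textsc{Cons}$ rule to bridge the gap between $\textsf{wpo}(P,\mathbb{C},\epsilon)$ and the given postcondition $Q$.

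First, I would unfold the assumption $\models [P]\ \mathbb{C}\ [\epsilon: Q]$: by the definition of validity, for every exact state $\sigma' \models Q$ there exists $\sigma \models P$ with $(\sigma,\sigma') \in [\![\mathbb{C}]\!]_\epsilon$. This is precisely the statement that $\sigma' \in \text{WPO}[\![P,\mathbb{C},\epsilon]\!]$ whenever $\sigma' \models Q$. Next, I would invoke Proposition \ref{prop: expressiveness of WPO calculus} to rewrite the WPO semantically: for every exact $\sigma'$, $\sigma' \in \text{WPO}[\![P,\mathbb{C},\epsilon]\!] \iff \sigma' \models \textsf{wpo}(P,\mathbb{C},\epsilon)$. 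Combining these two facts yields the entailment $Q \models \textsf{wpo}(P,\mathbb{C},\epsilon)$.

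Then I would apply Proposition \ref{prop: for all p, c, epsilon, we have wpo} to obtain $\vdash [P]\ \mathbb{C}\ [\epsilon : \textsf{wpo}(P,\mathbb{C},\epsilon)]$. Finally, with $P' := P$, $Q' := \textsf{wpo}(P,\mathbb{C},\epsilon)$, the trivial entailment $P \models P$, and the derived entailment $Q \models \textsf{wpo}(P,\mathbb{C},\epsilon)$, the $\textsc{Cons}$ rule gives $\vdash [P]\ \mathbb{C}\ [\epsilon : Q]$, completing the proof.

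I do not expect any significant obstacle here: all the heavy machinery (the case analysis for heap-manipulating commands in the expressiveness proof, and the structural induction used to derive triples for the weakest postcondition) has been absorbed into the two cited propositions. The only subtlety to keep in mind is that validity and expressiveness are stated over $\textsc{State}^+$ (exact states), so the entailment $Q \models \textsf{wpo}(P,\mathbb{C},\epsilon)$ should be read over exact states; but since the denotational semantics itself is defined only over $\textsc{State}^+$ (Definition \ref{def: Denotational semantics of ISL}), this matches the semantics used by $\models$ and poses no real issue when invoking $\textsc{Cons}$.
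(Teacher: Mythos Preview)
Your proposal is correct and follows essentially the same approach as the paper: derive $Q \models \textsf{wpo}(P,\mathbb{C},\epsilon)$ from validity via Expressiveness (Proposition~\ref{prop: expressiveness of WPO calculus}), obtain $\vdash [P]\ \mathbb{C}\ [\epsilon:\textsf{wpo}(P,\mathbb{C},\epsilon)]$ from Proposition~\ref{prop: for all p, c, epsilon, we have wpo}, and conclude by a single application of \textsc{Cons}. The paper presents exactly this argument as a proof tree, and your remark about exact states is a harmless clarification.
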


\begin{proof}
The proof of this theorem is identical to that in \cite{lee2024relative}, as shown below.
$$
\inferrule*[right=\textsc{Cons}]
{
{\inferrule*[]
{\normalfont\text{Applying Proposition \ref{prop: for all p, c, epsilon, we have wpo}} }
{\normalfont\vdash [P] \ \mathbb{C} \ 
[ \epsilon :  
\textsf{wpo}(P,\mathbb{C},\epsilon) ]}} 
\ \
{
\inferrule*[right=\normalfont\text{Proposition \ref{prop: expressiveness of WPO calculus}}]
{ \models [P] \ \mathbb{C} \ [\epsilon: Q] }
{\normalfont Q \models 
\textsf{wpo}(P,\mathbb{C},\epsilon)}
}
}
{\vdash  [P] \ \mathbb{C} \ [\epsilon: Q]}
$$
\end{proof}

\section{Related Work}
\subsection{Separation Logic with array predicates}
Separation logic with array predicates has been extensively studied in the context of decidability in entailments and biabduction for symbolic heaps with arrays, with notable commonalities in the works of \cite{brotherston2017biabduction,kimura2021decidability}, and \cite{holik2022low}. 
The approaches in \cite{brotherston2017biabduction,kimura2021decidability} share a foundational reliance on a variable-length representation for arrays, using the $\arr(x, y)$ style, and incorporate arithmetic.
Despite differences in the specific conditions of their decision
procedures, their aims and settings are closely aligned. 
Based on these foundations, our work extends this line of research by exploring relative completeness in Incorrectness Separation Logic with array predicates.

We are also inspired by the work of \cite{holik2022low}, particularly their use of models incorporating detailed block information for memory manipulation, especially for allocation and deallocation. While \cite{holik2022low} define array predicates based on start points and sizes, our approach adopts the variable-length $\arr(x, y)$ style, consistent with the settings of \cite{brotherston2017biabduction} and \cite{kimura2021decidability}, to handle arrays flexibly.

\subsection{Relative completeness of under-approximation logic}
Relative completeness results have been established for various
under-approximation logics, including Incorrectness Logic (IL)
\cite{o2019incorrectness} and Reverse Hoare Logic (RHL)
\cite{de2011reverse}. In IL, the proof of relative completeness relies
on using {semantic predicates}, which assume that expressiveness
holds. In contrast, RHL \cite{de2011reverse}, using syntactic
predicates, achieves relative completeness by generating the weakest postconditions via the function $\textsf{wpo}$ and demonstrating its expressiveness.

Building on these foundational methods, our prior work
\cite{lee2024relative} established the relative completeness of
Incorrectness Separation Logic (ISL) in a system limited to points-to
predicates and negative heap predicates. Both \cite{lee2024relative} and
the current work follow the approach of \cite{de2011reverse} for proving
relative completeness.
However, this work takes a different approach, unlike
\cite{lee2024relative}, which restricts its canonical form to equalities
and inequalities due to the absence of arithmetic. It extends the scope by incorporating arithmetic into the logical framework, similar to \cite{brotherston2017biabduction} and \cite{kimura2021decidability}. This enhancement allows the canonical form to account for all possible sequences of strict inequalities and equalities.

\section{Conclusions and Future work}\label{sec: Conclusions and Future work}

We introduced an extended Incorrectness Separation Logic (ISL) with
support for variable-length array predicates and pointer
arithmetic. Additionally, we established the relative completeness of
this enhanced ISL by constructing functions $\textsf{wpo}$ generating
the weakest postcondition and demonstrating their expressiveness.

For future work, while infinitary syntax aligns well with our goals of proving relative completeness, it is not ideally suited for the developing a practical automatic theorem prover, a limitation shared by our earlier work \cite{lee2024relative}. 
Additionally, we aim to simplify ISL with arrays by introducing more straightforward proof rules. 
The current heap-manipulation rules are numerous, and their postconditions, which correspond to the output of the \textsf{wpo} function, are notably complex. 
We acknowledge that this complexity is a trade-off made to achieve simpler proofs for relative completeness. 
Therefore, our goal is to develop a version of ISL with arrays that retains the expressiveness and relative completeness properties of this work while employing simpler proof rules.

%
%
%
\bibliographystyle{splncs04}

\bibliography{bib}

\appendix
\newpage

\section{Proof of Proposition \ref{prop: expressiveness of WPO calculus} for $x:={\tt alloc}(t)$}
\label{sec: expressiveness proof of atomic command in appendix (alloc)}

We prove that $(s', h', B') \models \textsf{wpo}( \psi , x :=
\texttt{alloc($t$)} , \textit{ok} ) \iff (s', h', B') \in \text{WPO}
[\![ \psi ,x := \texttt{alloc($t$)} , \textit{ok} ]\!]  $ holds.  By
definition, $(s', h', B') \in \text{WPO} [\![ \psi , x :=
\texttt{alloc($t$)} , \textit{ok} ]\!]  $ is equivalent to $\exists
(s,h,B) \models \psi , l \in \textsc{Val}.  (\bigcup_{[u,v) \in B}
[u,v)) \cap [ l , l + [\![ t ]\!]_{s,B} ) = \emptyset \land (s', h', B')
= (s [x \mapsto l] , h[l + i \mapsto - ]_{i=0}^{[\![ t ]\!]_{s,B} - 1 }
, B \cup \{ [l , l + [\![ t ]\!]_{s,B} ) \} ) $.  
In this case, we suppose the following.
\begin{itemize}
    \item $\psi = \left( \bigast_{i=1}^{m} \bararr(t_i, {t'}_i) \right) * \psi'$, where  
          $\psi'$ contains no $\bararr$.
    \item $\psi_{\text{pure}} \models t_i < t_{i+1}$ for $1 \leq i < m$.
    \item $\psi_{\text{pure}} \models {\tt null}\ R_1\ u_1\ R_1\ u_1\ R_2\ \cdots\ R_N u_N$,  
          where $R_i \in \{<,\approx\}$ and $\termS^-(\psi) = \{u_i \mid 1 \leq i \leq N\}$.
\end{itemize}

Firstly, we prove 
$(s', h', B') \models \textsf{wpo}(  \psi , x := \texttt{alloc($t$)}  , \textit{ok} ) \Rightarrow
(s', h', B') \in \text{WPO} [\![  \psi , x := \texttt{alloc($t$)}  , \textit{ok} ]\!]  $.

\subsection{$(s', h', B') \models \textsf{wpo}(  \psi , x := \texttt{alloc($t$)}  , \textit{ok} ) \Rightarrow
(s', h', B') \in \text{WPO} [\![  \psi , x := \texttt{alloc($t$)}  , \textit{ok} ]\!]  $}

By $(s',h',B')\models\textsf{wpo}( \psi , x := \texttt{alloc($t$)},
\textit{ok} )$, which is of the form $\exists x'.\chi$, there exists
$v'\in{\normalfont\textsc{Val}}$ such that $(s'[x'\mapsto
v'],h',B')\models\chi$, and hence it satisfies one of disjuncts in
$\chi$. Let $s''$ be $s'[x'\mapsto v']$.  
For brevity, we use the notation $\theta$ for $[x := x']$ and  
$\tilde{t}$ for $\textsf{rep}_\alpha^\beta(t\theta)$ for some $t$, respectively.

\subsubsection{Case 1}

We demonstrate the case of
{
\begin{align*}
 (s'',h',B')\models &
  \tilde{\psi}'
 * \bigast_{i=1}^{j-1} \bararr(\tilde{t_i},\tilde{t}'_i)
 * \bararr(\tilde{t}_j,x)
 * \arr(x,x+\tilde{t}) \\
 & * \bararr(x+\tilde{t},\tilde{t}'_k)
 * \bigast_{i=k+1}^{m} \bararr(\tilde{t}_i,\tilde{t}'_i)\\
 & * \tilde{t}_j<x<\tilde{t}'_j * \tilde{t}_k<x+\tilde{t}< \tilde{t}'_k \\
 & * b(x) \approx x * e(x) \approx x+\tilde{t} \\
 & * \tilde{u}_\alpha < x \le \tilde{u}_{\alpha+1} *
 \tilde{u}_\beta<x+\tilde{t}\le \tilde{u}_{\beta+1},
\end{align*}
}
for some $\alpha,\beta,j,k$.
$[\inter{x}_{s'',B},\inter{x+\tilde{t}}_{s'',B})\in B'$ holds, since
$(s'',B')\models b(x)\approx x * e(x)\approx x+\tilde{t}$.
Define $s=s'[x\mapsto v']$ and $B= B' \setminus \{ [ [\![ x
]\!]_{s'',B'} , [\![ x + \tilde{t} ]\!]_{s'',B'} ) \}$.

We can say that $\inter{\tilde{t}}_{s'',B'} = \inter{t}_{s,B}$  
since the following derivation holds.  

Since $t$ is $b,e$-free, the following is true:
\[
\inter{\tilde{t}}_{s'',B'} = \inter{t \theta}_{s'',B'} = \inter{t \theta}_{s'',B}.
\]
Additionally, by using Lemma \ref{lma: Substitution for assignment} and the fact that $x'$ is fresh, the following holds:
\[
\inter{t \theta}_{s'',B} = \inter{t [x := x']}_{s'[x' \mapsto v'],B}
= \inter{t }_{s'[x' \mapsto v'][x \mapsto v'],B}
= \inter{t }_{s' [x \mapsto v'],B}
= \inter{t }_{s,B}.
\]
For any $u_i\in\termS^-(\psi)$, since $u_i$ contains neither $b$ nor $e$,
$\tilde{u_i}=u_i \theta$ and its semantics does not depend on $B'$.
Hence,
$\inter{\tilde{u_i}}_{s'',B'}=\inter{u_i}_{s,B}$ by $x'\not\in \textsf{fv}(u_i)$
similarly to the case of $\inter{\tilde{t}}_{s'',B'} = \inter{t}_{s,B}$.
Furthermore, we have $\inter{x}_{s'',B'}=s'(x)$ since $x$ and $x'$ are
distinct. Hence, $B=B'\setminus\{[s'(x),s'(x)+\inter{t}_{s,B})\}$ since
$t\in\termS^-(\psi)$.
Then, for any $u_i\in\termS^-(\psi)$, we have the following:
\begin{itemize}
 \item If $\alpha<i\le \beta$, since $(s'',B')$ satisfies $x \le
       \tilde{u_i} < x+\tilde{t}$, 
       the value
       $\inter{\tilde{u_i}}_{s'',B'}=
       \inter{u_i\theta}_{s'',B'}=\inter{u_i}_{s,B}$ 
       is in the block
       $[s'(x),s'(x)+\inter{t}_{s,B})$, and hence
       $\inter{b(u_i)}_{s,B}=\inter{e(u_i)}_{s,B}=0$ 
       by $B=B'\setminus\{[s'(x),s'(x)+\inter{t}_{s,B})\}$.
 \item Otherwise, 
       $\inter{b(u_i\theta)}_{s'',B'}=\inter{b(u_i)}_{s,B}$ and
       $\inter{e(u_i\theta)}_{s'',B'}=\inter{e(u_i)}_{s,B}$.
\end{itemize}
Therefore, by the definition of $\textsf{rep}_\alpha^\beta$,
$\inter{\tilde{r}}_{s'',B'} = \inter{r}_{s,B}$ holds for any $r\in{\sf termS}(\psi)$.
Hence, for any pure formula $\pi$ in $\psi$, $(s'', B')\models \tilde{\pi}$ iff
$(s,B)\models \pi$.

Define $h$ as{
\begin{align*}
h&=h'[l \mapsto \bot]_
 { l = [\![ x ]\!]_{s'', B'} }
 ^{ [\![ \tilde{{t'}}_j ]\!]_{s'', B'} - 1 }
  [l \mapsto \ \uparrow ]_
 { l = [\![ \tilde{{t'}}_j ]\!]_{s'', B'}  }
 ^{ [\![ \tilde{t}_{j+1} ]\!]_{s'', B'} - 1 }\\
 &
 \Bigl(
 [l \mapsto \bot]_
 { l = [\![ \tilde{t}_i ]\!]_{s'', B'}   }
 ^{ [\![ \tilde{{t'}}_i ]\!]_{s'', B'} - 1   }
  [l \mapsto \ \uparrow ]_
 { l = [\![ \tilde{{t'}}_i ]\!]_{s'', B'}   }
 ^{ [\![ \tilde{t}_{i+1} ]\!]_{s'', B'}  - 1  }
 \Bigr)_{i=j+1}^{k-1} \\
 &   [l \mapsto \bot]_
 { l =  [\![ \tilde{t}_k ]\!]_{s'', B'}    }
 ^{ [\![  x + \tilde{t} ]\!]_{s'', B'} - 1   },
\end{align*}
}
which can be rewritten as
{
\begin{align*}
h&=h'[l \mapsto \bot]_
 { l = s'(x) }
 ^{ [\![ {t'}_j ]\!]_{s, B} - 1 }
  [l \mapsto \ \uparrow ]_
 { l = [\![ {{t'}}_j ]\!]_{s, B}  }
 ^{ [\![ {t}_{j+1} ]\!]_{s, B} - 1 } \\
  &\Bigl(
 [l \mapsto \bot]_
 { l = [\![ {t}_i ]\!]_{s, B}   }
 ^{ [\![ {{t'}}_i ]\!]_{s, B} - 1   }
  [l \mapsto \ \uparrow ]_
 { l = [\![ {{t'}}_i ]\!]_{s, B}   }
 ^{ [\![ {t}_{i+1} ]\!]_{s, B}  - 1  }
 \Bigr)_{i=j+1}^{k-1}\\
  & [l \mapsto \bot]_
 { l =  [\![ {t}_k ]\!]_{s, B}    }
 ^{ s'(x)+[\![ {t} ]\!]_{s, B} - 1   }.
\end{align*}
}
Note that this is well-defined since $(s'',B')\models x<\tilde{t}'_j
* \tilde{t}_k <x +\tilde{t}$ and $(s'',B')\models \tilde{t}_i <
\tilde{t}_{i+1}$ for $j\le i<k$ by ${\psi}_{\text{pure}}\models
t_i<t_{i+1}$. 
Moreover, 
$(h,B)$ is exact since
$\textsf{dom}_+(h)=\textsf{dom}_+(h')\setminus[s'(x),s'(x)+\inter{t}_{s,B})$.

Then, we have
{
\begin{align*}
 (s'',h,B) \models
 & \tilde{\psi}'
 * \bigast_{i=1}^{j-1} \bararr(\tilde{t_i},\tilde{t}'_i)
 * \bararr(\tilde{t}_j,x)\\
 &
 * \bararr(x,\tilde{{t'}}_{j})
 * \bigast_{i=j+1}^{k-1} \bararr(\tilde{t}_i,\tilde{{t'}}_i)
 * \bararr(\tilde{{t'}}_{k},x+\tilde{t})\\
 & * \bararr(x+\tilde{t},\tilde{t}'_k)
 * \bigast_{i=k+1}^{m} \bararr(\tilde{t}_i,\tilde{t}'_i),
\end{align*}
}
and then, appending $\bararr(\tilde{t}_j,x) * \bararr(x,\tilde{{t'}}_{j})$
and $\bararr(\tilde{{t'}}_{k},x+\tilde{t}) *
\bararr(x+\tilde{t},\tilde{t}'_k)$, we have
$ (s'',h,B) \models
  \tilde{\psi}'
 * \bigast_{i=1}^{m} \bararr(\tilde{t_i},\tilde{t}'_i)$,
and hence,
$(s,h,B) \models
  \psi'
 * \bigast_{i=1}^{m} \bararr(t_i,{t'}_i)$.
That means $(s,h,B)\models \psi$.

Since $s'=s[x\mapsto s'(x)]$, we have
$((s,h,B),(s',h',B'))\in\inter{x:={\tt alloc}(t)}_{\it ok}$.
Hence, we have $(s',h',B')\in\text{WPO}[\![\psi, x := \texttt{alloc}(t), \textit{ok}]\!]$.

\subsubsection{Case 2}

We demonstrate the case of

{
\begin{align*}
 (s'',h',B')\models &
  \tilde{\psi}'
 * \bigast_{i=1}^{j-1} \bararr(\tilde{t_i},\tilde{t}'_i)
 * \arr(x,x+\tilde{t}) \\
 & * \bararr(x+\tilde{t},\tilde{t}'_k)
 * \bigast_{i=k+1}^{m} \bararr(\tilde{t}_i,\tilde{t}'_i)\\
 & * \tilde{t'}_{j-1} \leq  x \leq \tilde{t}_j * \tilde{t}_k<x+\tilde{t}< \tilde{t}'_k \\
 & * b(x) \approx x * e(x) \approx x+\tilde{t} \\
 & * \tilde{u}_\alpha < x \le \tilde{u}_{\alpha+1} *
 \tilde{u}_\beta<x+\tilde{t}\le \tilde{u}_{\beta+1},
\end{align*}
}

for some $\alpha,\beta,j,k$.
$[\inter{x}_{s'',B},\inter{x+\tilde{t}}_{s'',B})\in B'$ holds, since
$(s'',B')\models b(x)\approx x * e(x)\approx x+\tilde{t}$.
Define $s=s'[x\mapsto v']$ and $B= B' \setminus \{ [ [\![ x
]\!]_{s'',B'} , [\![ x + \tilde{t} ]\!]_{s'',B'} ) \}$.

We can say that $\inter{\tilde{t}}_{s'',B'} = \inter{t}_{s,B}$  
since the following derivation holds.  

Since $t$ is $b,e$-free, the following is true:
\[
\inter{\tilde{t}}_{s'',B'} = \inter{t \theta}_{s'',B'} = \inter{t \theta}_{s'',B}.
\]
Additionally, by using Lemma \ref{lma: Substitution for assignment} and the fact that $x'$ is fresh, the following holds:
\[
\inter{t \theta}_{s'',B} = \inter{t [x := x']}_{s'[x' \mapsto v'],B}
= \inter{t }_{s'[x' \mapsto v'][x \mapsto v'],B}
= \inter{t }_{s' [x \mapsto v'],B}
= \inter{t }_{s,B}.
\]
For any $u_i\in\termS^-(\psi)$, since $u_i$ contains neither $b$ nor $e$,
$\tilde{u_i}=u_i \theta$ and its semantics does not depend on $B'$.
Hence,
$\inter{\tilde{u_i}}_{s'',B'}=\inter{u_i}_{s,B}$ by $x'\not\in \textsf{fv}(u_i)$
similarly to the case of $\inter{\tilde{t}}_{s'',B'} = \inter{t}_{s,B}$.
Furthermore, we have $\inter{x}_{s'',B'}=s'(x)$ since $x$ and $x'$ are
distinct. Hence, $B=B'\setminus\{[s'(x),s'(x)+\inter{t}_{s,B})\}$ since
$t\in\termS^-(\psi)$.
Then, for any $u_i\in\termS^-(\psi)$, we have the following:
\begin{itemize}
 \item If $\alpha<i\le \beta$, since $(s'',B')$ satisfies $x \le
       \tilde{u_i} < x+\tilde{t}$, 
       the value
       $\inter{\tilde{u_i}}_{s'',B'}=
       \inter{u_i\theta}_{s'',B'}=\inter{u_i}_{s,B}$ 
       is in the block 
       $[s'(x),s'(x)+\inter{t}_{s,B})$, and hence
       $\inter{b(u_i)}_{s,B}=\inter{e(u_i)}_{s,B}=0$ 
       by $B=B'\setminus\{[s'(x),s'(x)+\inter{t}_{s,B})\}$.
 \item Otherwise, 
       $\inter{b(u_i\theta)}_{s'',B'}=\inter{b(u_i)}_{s,B}$ and
       $\inter{e(u_i\theta)}_{s'',B'}=\inter{e(u_i)}_{s,B}$.
\end{itemize}
Therefore, by the definition of $\textsf{rep}_\alpha^\beta$,
$\inter{\tilde{r}}_{s'',B'} = \inter{r}_{s,B}$ holds for any $r\in{\sf termS}(\psi)$.
Hence, for any pure formula $\pi$ in $\psi$, $(s'', B')\models \tilde{\pi}$ iff
$(s,B)\models \pi$.

Define $h$ as
{
\begin{align*}
h&=h'[l \mapsto \uparrow]_
 { l = [\![ x ]\!]_{s'', B'} }
 ^{ [\![ \tilde{{t}}_j ]\!]_{s'', B'} - 1 }
 \\
 &
 \Bigl(
 [l \mapsto \bot]_
 { l = [\![ \tilde{t}_i ]\!]_{s'', B'}   }
 ^{ [\![ \tilde{{t'}}_i ]\!]_{s'', B'} - 1   }
  [l \mapsto \ \uparrow ]_
 { l = [\![ \tilde{{t'}}_i ]\!]_{s'', B'}   }
 ^{ [\![ \tilde{t}_{i+1} ]\!]_{s'', B'}  - 1  }
 \Bigr)_{i=j}^{k-1} \\
 &   [l \mapsto \bot]_
 { l =  [\![ \tilde{t}_k ]\!]_{s'', B'}    }
 ^{ [\![  x + \tilde{t} ]\!]_{s'', B'} - 1   },
\end{align*}
}

which can be rewritten as
{
\begin{align*}
h&=h'[l \mapsto \uparrow]_
 { l = s'(x) }
 ^{ [\![ {t}_j ]\!]_{s, B} - 1 }
 \\
 &
 \Bigl(
 [l \mapsto \bot]_
 { l = [\![ {t}_i ]\!]_{s, B}   }
 ^{ [\![ {{t'}}_i ]\!]_{s, B} - 1   }
  [l \mapsto \ \uparrow ]_
 { l = [\![ {{t'}}_i ]\!]_{s, B}   }
 ^{ [\![ {t}_{i+1} ]\!]_{s, B}  - 1  }
 \Bigr)_{i=j}^{k-1} \\
 &   [l \mapsto \bot]_
 { l =  [\![ {t}_k ]\!]_{s, B}    }
 ^{ s'(x)+[\![ {t} ]\!]_{s, B} - 1   },
\end{align*}
}

Note that this is well-defined since $(s'',B')\models x<\tilde{t}'_j
* \tilde{t}_k <x +\tilde{t}$ and $(s'',B')\models \tilde{t}_i <
\tilde{t}_{i+1}$ for $j\le i<k$ by ${\psi}_{\text{pure}}\models
t_i<t_{i+1}$. 
Moreover, 
$(h,B)$ is exact since
$\textsf{dom}_+(h)=\textsf{dom}_+(h')\setminus[s'(x),s'(x)+\inter{t}_{s,B})$.

Then, we have
{
\begin{align*}
 (s'',h,B)\models &
  \tilde{\psi}'
 * \bigast_{i=1}^{j-1} \bararr(\tilde{t_i},\tilde{t}'_i)
 * \bigast_{i=j}^{k-1} \bararr(\tilde{t}_i,\tilde{{t'}}_i)
 \\
 &  * \bararr( \tilde{t}_k , x+\tilde{t} ) * \bararr(x+\tilde{t},\tilde{t}'_k)
 * \bigast_{i=k+1}^{m} \bararr(\tilde{t}_i,\tilde{t}'_i),
\end{align*}
}

and then, appending 
$\bararr(\tilde{{t'}}_{k},x+\tilde{t}) *
\bararr(x+\tilde{t},\tilde{t}'_k)$, we have
$ (s'',h,B) \models
  \tilde{\psi}'
 * \bigast_{i=1}^{m} \bararr(\tilde{t_i},\tilde{t}'_i)$,
and hence,
$(s,h,B) \models
  \psi'
 * \bigast_{i=1}^{m} \bararr(t_i,{t'}_i)$.
That means $(s,h,B)\models \psi$.

Since $s'=s[x\mapsto s'(x)]$, we have
$((s,h,B),(s',h',B'))\in\inter{x:={\tt alloc}(t)}_{\it ok}$.
Hence, we have $(s',h',B')\in\text{WPO}[\![\psi, x := \texttt{alloc}(t), \textit{ok}]\!]$.

\subsubsection{Case 3}

We demonstrate the case of

{
\begin{align*}
 (s'',h',B')\models &
  \tilde{\psi}'
 * \bigast_{i=1}^{j-1} \bararr(\tilde{t_i},\tilde{t}'_i)
 * \bararr(\tilde{t}_j,x)
 * \arr(x,x+\tilde{t}) \\
 & 
 * \bigast_{i=k+1}^{m} \bararr(\tilde{t}_i,\tilde{t}'_i)\\
 & * \tilde{t}_j<x<\tilde{t}'_j 
 * \tilde{t'}_k \leq x+\tilde{t} \leq  \tilde{t}_{k+1} 
 \\
 & * b(x) \approx x * e(x) \approx x+\tilde{t} \\
 & * \tilde{u}_\alpha < x \le \tilde{u}_{\alpha+1} *
 \tilde{u}_\beta<x+\tilde{t}\le \tilde{u}_{\beta+1},
\end{align*}
}

for some $\alpha,\beta,j,k$.
$[\inter{x}_{s'',B},\inter{x+\tilde{t}}_{s'',B})\in B'$ holds, since
$(s'',B')\models b(x)\approx x * e(x)\approx x+\tilde{t}$.
Define $s=s'[x\mapsto v']$ and $B= B' \setminus \{ [ [\![ x
]\!]_{s'',B'} , [\![ x + \tilde{t} ]\!]_{s'',B'} ) \}$.

We can say that $\inter{\tilde{t}}_{s'',B'} = \inter{t}_{s,B}$  
since the following derivation holds.  

Since $t$ is $b,e$-free, the following is true:
\[
\inter{\tilde{t}}_{s'',B'} = \inter{t \theta}_{s'',B'} = \inter{t \theta}_{s'',B}.
\]
Additionally, by using Lemma \ref{lma: Substitution for assignment} and the fact that $x'$ is fresh, the following holds:
\[
\inter{t \theta}_{s'',B} = \inter{t [x := x']}_{s'[x' \mapsto v'],B}
= \inter{t }_{s'[x' \mapsto v'][x \mapsto v'],B}
= \inter{t }_{s' [x \mapsto v'],B}
= \inter{t }_{s,B}.
\]
For any $u_i\in\termS^-(\psi)$, since $u_i$ contains neither $b$ nor $e$,
$\tilde{u_i}=u_i \theta$ and its semantics does not depend on $B'$.
Hence,
$\inter{\tilde{u_i}}_{s'',B'}=\inter{u_i}_{s,B}$ by $x'\not\in \textsf{fv}(u_i)$
similarly to the case of $\inter{\tilde{t}}_{s'',B'} = \inter{t}_{s,B}$.
Furthermore, we have $\inter{x}_{s'',B'}=s'(x)$ since $x$ and $x'$ are
distinct. Hence, $B=B'\setminus\{[s'(x),s'(x)+\inter{t}_{s,B})\}$ since
$t\in\termS^-(\psi)$.
Then, for any $u_i\in\termS^-(\psi)$, we have the following:
\begin{itemize}
 \item If $\alpha<i\le \beta$, since $(s'',B')$ satisfies $x \le
       \tilde{u_i} < x+\tilde{t}$, 
       the value
       $\inter{\tilde{u_i}}_{s'',B'}=
       \inter{u_i\theta}_{s'',B'}=\inter{u_i}_{s,B}$ 
       is in the block
       $[s'(x),s'(x)+\inter{t}_{s,B})$, and hence
       $\inter{b(u_i)}_{s,B}=\inter{e(u_i)}_{s,B}=0$ 
       by $B=B'\setminus\{[s'(x),s'(x)+\inter{t}_{s,B})\}$.
 \item Otherwise, 
       $\inter{b(u_i\theta)}_{s'',B'}=\inter{b(u_i)}_{s,B}$ and
       $\inter{e(u_i\theta)}_{s'',B'}=\inter{e(u_i)}_{s,B}$.
\end{itemize}
Therefore, by the definition of $\textsf{rep}_\alpha^\beta$,
$\inter{\tilde{r}}_{s'',B'} = \inter{r}_{s,B}$ holds for any $r\in{\sf termS}(\psi)$.
Hence, for any pure formula $\pi$ in $\psi$, $(s'', B')\models \tilde{\pi}$ iff
$(s,B)\models \pi$.

Define $h$ as
{
\begin{align*}
h&=h'[l \mapsto \bot]_
 { l = [\![ x ]\!]_{s'', B'} }
 ^{ [\![ \tilde{{t'}}_j ]\!]_{s'', B'} - 1 }
 [l \mapsto \ \uparrow ]_
 { l = [\![ \tilde{{t'}}_j ]\!]_{s'', B'}  }
 ^{ [\![ \tilde{t}_{j+1} ]\!]_{s'', B'} - 1 }
 \\
 &
 \Bigl(
 [l \mapsto \bot]_
 { l = [\![ \tilde{t}_i ]\!]_{s'', B'}   }
 ^{ [\![ \tilde{{t'}}_i ]\!]_{s'', B'} - 1   }
  [l \mapsto \ \uparrow ]_
 { l = [\![ \tilde{{t'}}_i ]\!]_{s'', B'}   }
 ^{ [\![ \tilde{t}_{i+1} ]\!]_{s'', B'}  - 1  }
 \Bigr)_{i=j+1}^{k-1} \\
 &
 [l \mapsto \bot]_
 { l = [\![ \tilde{t}_k ]\!]_{s'', B'}   }
 ^{ [\![ \tilde{{t'}}_k ]\!]_{s'', B'} - 1   }
 [l \mapsto \uparrow]_
 { l =  [\![ \tilde{t'}_k ]\!]_{s'', B'}    }
 ^{ [\![  x + \tilde{t} ]\!]_{s'', B'} - 1   },
\end{align*}
}
which can be rewritten as
{
\begin{align*}
h&=h'[l \mapsto \bot]_
 { l = s'(x) }
 ^{ [\![ {{t'}}_j ]\!]_{s, B} - 1 }
 [l \mapsto \ \uparrow ]_
 { l = [\![ {{t'}}_j ]\!]_{s, B}  }
 ^{ [\![ {t}_{j+1} ]\!]_{s, B} - 1 }
 \\
 &
 \Bigl(
 [l \mapsto \bot]_
 { l = [\![ {t}_i ]\!]_{s, B}   }
 ^{ [\![ {{t'}}_i ]\!]_{s, B} - 1   }
  [l \mapsto \ \uparrow ]_
 { l = [\![ {{t'}}_i ]\!]_{s, B}   }
 ^{ [\![ {t}_{i+1} ]\!]_{s, B}  - 1  }
 \Bigr)_{i=j+1}^{k-1} \\
 &
 [l \mapsto \bot]_
 { l = [\![ {t}_k ]\!]_{s, B}   }
 ^{ [\![ {{t'}}_k ]\!]_{s, B} - 1   }
 [l \mapsto \uparrow]_
 { l =  [\![ {t'}_k ]\!]_{s, B}    }
 ^{ s'(x)+[\![ {t} ]\!]_{s, B} - 1   },
\end{align*}
}
Note that this is well-defined since $(s'',B')\models x<\tilde{t}'_j
* \tilde{t}_k <x +\tilde{t}$ and $(s'',B')\models \tilde{t}_i <
\tilde{t}_{i+1}$ for $j\le i<k$ by ${\psi}_{\text{pure}}\models
t_i<t_{i+1}$. 
Moreover, 
$(h,B)$ is exact since
$\textsf{dom}_+(h)=\textsf{dom}_+(h')\setminus[s'(x),s'(x)+\inter{t}_{s,B})$.

Then, we have
{
\begin{align*}
 (s'',h,B)\models &
  \tilde{\psi}'
 * \bigast_{i=1}^{j-1} \bararr(\tilde{t_i},\tilde{t}'_i)
 * \bararr(\tilde{t}_j,x)
 * \bararr (x , \tilde{t'}_j ) \\
 & 
 * \bigast_{i=j+1}^{m} \bararr(\tilde{t}_i,\tilde{t}'_i)\\
 & * \tilde{t}_j<x<\tilde{t}'_j 
 * \tilde{t'}_k \leq x+\tilde{t} \leq  \tilde{t}_{k+1} 
 \\
 & * b(x) \approx x * e(x) \approx x+\tilde{t} \\
 & * \tilde{u}_\alpha < x \le \tilde{u}_{\alpha+1} *
 \tilde{u}_\beta<x+\tilde{t}\le \tilde{u}_{\beta+1},
\end{align*}
}

and then, appending $\bararr(\tilde{t}_j,x) * \bararr(x,\tilde{{t'}}_{j})$, 
we have
$ (s'',h,B) \models
  \tilde{\psi}'
 * \bigast_{i=1}^{m} \bararr(\tilde{t_i},\tilde{t}'_i)$,
and hence,
$(s,h,B) \models
  \psi'
 * \bigast_{i=1}^{m} \bararr(t_i,{t'}_i)$.
That means $(s,h,B)\models \psi$.

Since $s'=s[x\mapsto s'(x)]$, we have
$((s,h,B),(s',h',B'))\in\inter{x:={\tt alloc}(t)}_{\it ok}$.
Hence, we have $(s',h',B')\in\text{WPO}[\![\psi, x := \texttt{alloc}(t), \textit{ok}]\!]$.

\subsubsection{Case 4}

We demonstrate the case of

{
\begin{align*}
 (s'',h',B')\models &
  \tilde{\psi}'
 * \bigast_{i=1}^{j-1} \bararr(\tilde{t_i},\tilde{t}'_i)
 * \arr(x,x+\tilde{t}) \\
 & 
 * \bigast_{i=k+1}^{m} \bararr(\tilde{t}_i,\tilde{t}'_i)\\
 & * \tilde{t'}_{j-1} \leq  x \leq \tilde{t}_j  
 * \tilde{t'}_k \leq x+\tilde{t} \leq  \tilde{t}_{k+1} 
 \\
 & * b(x) \approx x * e(x) \approx x+\tilde{t} \\
 & * \tilde{u}_\alpha < x \le \tilde{u}_{\alpha+1} *
 \tilde{u}_\beta<x+\tilde{t}\le \tilde{u}_{\beta+1},
\end{align*}
}

for some $\alpha,\beta,j,k$.
$[\inter{x}_{s'',B},\inter{x+\tilde{t}}_{s'',B})\in B'$ holds, since
$(s'',B')\models b(x)\approx x * e(x)\approx x+\tilde{t}$.
Define $s=s'[x\mapsto v']$ and $B= B' \setminus \{ [ [\![ x
]\!]_{s'',B'} , [\![ x + \tilde{t} ]\!]_{s'',B'} ) \}$.

We can say that $\inter{\tilde{t}}_{s'',B'} = \inter{t}_{s,B}$  
since the following derivation holds.  

Since $t$ is $b,e$-free, the following is true:
\[
\inter{\tilde{t}}_{s'',B'} = \inter{t \theta}_{s'',B'} = \inter{t \theta}_{s'',B}.
\]
Additionally, by using Lemma \ref{lma: Substitution for assignment} and the fact that $x'$ is fresh, the following holds:
\[
\inter{t \theta}_{s'',B} = \inter{t [x := x']}_{s'[x' \mapsto v'],B}
= \inter{t }_{s'[x' \mapsto v'][x \mapsto v'],B}
= \inter{t }_{s' [x \mapsto v'],B}
= \inter{t }_{s,B}.
\]
For any $u_i\in\termS^-(\psi)$, since $u_i$ contains neither $b$ nor $e$,
$\tilde{u_i}=u_i \theta$ and its semantics does not depend on $B'$.
Hence,
$\inter{\tilde{u_i}}_{s'',B'}=\inter{u_i}_{s,B}$ by $x'\not\in \textsf{fv}(u_i)$
similarly to the case of $\inter{\tilde{t}}_{s'',B'} = \inter{t}_{s,B}$.
Furthermore, we have $\inter{x}_{s'',B'}=s'(x)$ since $x$ and $x'$ are
distinct. Hence, $B=B'\setminus\{[s'(x),s'(x)+\inter{t}_{s,B})\}$ since
$t\in\termS^-(\psi)$.
Then, for any $u_i\in\termS^-(\psi)$, we have the following:
\begin{itemize}
 \item If $\alpha<i\le \beta$, since $(s'',B')$ satisfies $x \le
       \tilde{u_i} < x+\tilde{t}$, 
       the value
       $\inter{\tilde{u_i}}_{s'',B'}=
       \inter{u_i\theta}_{s'',B'}=\inter{u_i}_{s,B}$ 
       is in the block
       $[s'(x),s'(x)+\inter{t}_{s,B})$, and hence
       $\inter{b(u_i)}_{s,B}=\inter{e(u_i)}_{s,B}=0$ 
       by $B=B'\setminus\{[s'(x),s'(x)+\inter{t}_{s,B})\}$.
 \item Otherwise, 
       $\inter{b(u_i\theta)}_{s'',B'}=\inter{b(u_i)}_{s,B}$ and
       $\inter{e(u_i\theta)}_{s'',B'}=\inter{e(u_i)}_{s,B}$.
\end{itemize}
Therefore, by the definition of $\textsf{rep}_\alpha^\beta$,
$\inter{\tilde{r}}_{s'',B'} = \inter{r}_{s,B}$ holds for any $r\in{\sf termS}(\psi)$.
Hence, for any pure formula $\pi$ in $\psi$, $(s'', B')\models \tilde{\pi}$ iff
$(s,B)\models \pi$.

Define $h$ as
{
\begin{align*}
h&=
h'[l \mapsto \uparrow]_
 { l = [\![ x ]\!]_{s'', B'} }
 ^{ [\![ \tilde{{t}}_j ]\!]_{s'', B'} - 1 }
 \\
 &
 \Bigl(
 [l \mapsto \bot]_
 { l = [\![ \tilde{t}_i ]\!]_{s'', B'}   }
 ^{ [\![ \tilde{{t'}}_i ]\!]_{s'', B'} - 1   }
  [l \mapsto \ \uparrow ]_
 { l = [\![ \tilde{{t'}}_i ]\!]_{s'', B'}   }
 ^{ [\![ \tilde{t}_{i+1} ]\!]_{s'', B'}  - 1  }
 \Bigr)_{i=j}^{k-1} \\
 &
 [l \mapsto \bot]_
 { l = [\![ \tilde{t}_k ]\!]_{s'', B'}   }
 ^{ [\![ \tilde{{t'}}_k ]\!]_{s'', B'} - 1   }
 [l \mapsto \uparrow]_
 { l =  [\![ \tilde{t'}_k ]\!]_{s'', B'}    }
 ^{ [\![  x + \tilde{t} ]\!]_{s'', B'} - 1   },
\end{align*}
}
which can be rewritten as

{
\begin{align*}
h&=
h'[l \mapsto \uparrow]_
 { l = s'(x) }
 ^{ [\![ {{t}}_j ]\!]_{s, B} - 1 }
 \\
 &
 \Bigl(
 [l \mapsto \bot]_
 { l = [\![ {t}_i ]\!]_{s, B}   }
 ^{ [\![ {{t'}}_i ]\!]_{s, B} - 1   }
  [l \mapsto \ \uparrow ]_
 { l = [\![ {{t'}}_i ]\!]_{s, B}   }
 ^{ [\![ {t}_{i+1} ]\!]_{s, B}  - 1  }
 \Bigr)_{i=j}^{k-1} \\
 &
 [l \mapsto \bot]_
 { l = [\![ {t}_k ]\!]_{s, B}   }
 ^{ [\![ {{t'}}_k ]\!]_{s, B} - 1   }
 [l \mapsto \uparrow]_
 { l =  [\![ {t'}_k ]\!]_{s, B}    }
 ^{ s'(x)+[\![ {t} ]\!]_{s, B} - 1   },
\end{align*}
}
Note that this is well-defined since $(s'',B')\models x<\tilde{t}'_j
* \tilde{t}_k <x +\tilde{t}$ and $(s'',B')\models \tilde{t}_i <
\tilde{t}_{i+1}$ for $j\le i<k$ by ${\psi}_{\text{pure}}\models
t_i<t_{i+1}$. 
Moreover, 
$(h,B)$ is exact since
$\textsf{dom}_+(h)=\textsf{dom}_+(h')\setminus[s'(x),s'(x)+\inter{t}_{s,B})$.

Then, we have
{
\begin{align*}
 (s'',h,B)\models &
  \tilde{\psi}'
 * \bigast_{i=1}^{j-1} \bararr(\tilde{t_i},\tilde{t}'_i)
 * \bigast_{i=j}^{k} \bararr(\tilde{t}_i,\tilde{{t'}}_i)
 * \bigast_{i=k+1}^{m} \bararr(\tilde{t}_i,\tilde{t}'_i),
\end{align*}
}

and then, 
we have
$ (s'',h,B) \models
  \tilde{\psi}'
 * \bigast_{i=1}^{m} \bararr(\tilde{t_i},\tilde{t}'_i)$,
and hence,
$(s,h,B) \models
  \psi'
 * \bigast_{i=1}^{m} \bararr(t_i,{t'}_i)$.
That means $(s,h,B)\models \psi$.

Since $s'=s[x\mapsto s'(x)]$, we have
$((s,h,B),(s',h',B'))\in\inter{x:={\tt alloc}(t)}_{\it ok}$.
Hence, we have $(s',h',B')\in\text{WPO}[\![\psi, x := \texttt{alloc}(t), \textit{ok}]\!]$.

\subsubsection{Case 5}

We demonstrate the case of

{
\begin{align*}
 (s'',h',B')\models &
  \tilde{\psi} * \arr(x,x+\tilde{t}) \\
 & * b(x) \approx x * e(x) \approx x+\tilde{t} \\
 & * \tilde{u}_\alpha < x \le \tilde{u}_{\alpha+1} *
 \tilde{u}_\beta<x+\tilde{t}\le \tilde{u}_{\beta+1},
\end{align*}
}

for some $\alpha,\beta$.
$[\inter{x}_{s'',B},\inter{x+\tilde{t}}_{s'',B})\in B'$ holds, since
$(s'',B')\models b(x)\approx x * e(x)\approx x+\tilde{t}$.
Define $s=s'[x\mapsto v']$ and $B= B' \setminus \{ [ [\![ x
]\!]_{s'',B'} , [\![ x + \tilde{t} ]\!]_{s'',B'} ) \}$.

We can say that $\inter{\tilde{t}}_{s'',B'} = \inter{t}_{s,B}$  
since the following derivation holds.  

Since $t$ is $b,e$-free, the following is true:
\[
\inter{\tilde{t}}_{s'',B'} = \inter{t \theta}_{s'',B'} = \inter{t \theta}_{s'',B}.
\]
Additionally, by using Lemma \ref{lma: Substitution for assignment} and the fact that $x'$ is fresh, the following holds:
\[
\inter{t \theta}_{s'',B} = \inter{t [x := x']}_{s'[x' \mapsto v'],B}
= \inter{t }_{s'[x' \mapsto v'][x \mapsto v'],B}
= \inter{t }_{s' [x \mapsto v'],B}
= \inter{t }_{s,B}.
\]
For any $u_i\in\termS^-(\psi)$, since $u_i$ contains neither $b$ nor $e$,
$\tilde{u_i}=u_i \theta$ and its semantics does not depend on $B'$.
Hence,
$\inter{\tilde{u_i}}_{s'',B'}=\inter{u_i}_{s,B}$ by $x'\not\in \textsf{fv}(u_i)$
similarly to the case of $\inter{\tilde{t}}_{s'',B'} = \inter{t}_{s,B}$.
Furthermore, we have $\inter{x}_{s'',B'}=s'(x)$ since $x$ and $x'$ are
distinct. Hence, $B=B'\setminus\{[s'(x),s'(x)+\inter{t}_{s,B})\}$ since
$t\in\termS^-(\psi)$.
Then, for any $u_i\in\termS^-(\psi)$, we have the following:
\begin{itemize}
 \item If $\alpha<i\le \beta$, since $(s'',B')$ satisfies $x \le
       \tilde{u_i} < x+\tilde{t}$, 
       the value
       $\inter{\tilde{u_i}}_{s'',B'}=
       \inter{u_i\theta}_{s'',B'}=\inter{u_i}_{s,B}$ 
       is in the block
       $[s'(x),s'(x)+\inter{t}_{s,B})$, and hence
       $\inter{b(u_i)}_{s,B}=\inter{e(u_i)}_{s,B}=0$ 
       by $B=B'\setminus\{[s'(x),s'(x)+\inter{t}_{s,B})\}$.
 \item Otherwise, 
       $\inter{b(u_i\theta)}_{s'',B'}=\inter{b(u_i)}_{s,B}$ and
       $\inter{e(u_i\theta)}_{s'',B'}=\inter{e(u_i)}_{s,B}$.
\end{itemize}
Therefore, by the definition of $\textsf{rep}_\alpha^\beta$,
$\inter{\tilde{r}}_{s'',B'} = \inter{r}_{s,B}$ holds for any $r\in{\sf termS}(\psi)$.
Hence, for any pure formula $\pi$ in $\psi$, $(s'', B')\models \tilde{\pi}$ iff
$(s,B)\models \pi$.

Define $h$ as

{
\begin{align*}
h&=
h'[l \mapsto \uparrow]_
 { l = [\![ x ]\!]_{s'', B'} }
^{ [\![  x + \tilde{t} ]\!]_{s'', B'} - 1   }
\end{align*}
}
which can be rewritten as

{
\begin{align*}
h&=
h'[l \mapsto \uparrow]_
 { l = s'(x) }
^{ s'(x)+[\![ {t} ]\!]_{s, B} - 1   }
\end{align*}
}
Moreover, 
$(h,B)$ is exact since
$\textsf{dom}_+(h)=\textsf{dom}_+(h')\setminus[s'(x),s'(x)+\inter{t}_{s,B})$.

Then, we have
{
\begin{align*}
 (s'',h,B)\models 
  \tilde{\psi}
\end{align*}
}

and then,  we have $(s,h,B) \models  \psi$.

Since $s'=s[x\mapsto s'(x)]$, we have
$((s,h,B),(s',h',B'))\in\inter{x:={\tt alloc}(t)}_{\it ok}$.
Hence, we have $(s',h',B')\in\text{WPO}[\![\psi, x := \texttt{alloc}(t), \textit{ok}]\!]$.

\subsection{$(s', h', B') \in \text{WPO} [\![\psi, x := \texttt{alloc($t$)},
\textit{ok}]\!] \Rightarrow
(s', h', B') \models \textsf{wpo}(\psi, x := \texttt{alloc($t$)}, \textit{ok})$}

Next, we prove the converse:  
$(s', h', B') \in \text{WPO} [\![\psi, x := \texttt{alloc($t$)},
\textit{ok}]\!] \Rightarrow
(s', h', B') \models \textsf{wpo}(\psi, x := \texttt{alloc($t$)}, \textit{ok})$.
We frequently use the notation $h[l \mapsto -]$ to denote that $l$ maps
to some value.

Assume $(s', h', B') \in \text{WPO} [\![\psi, x := \texttt{alloc($t$)},
\textit{ok}]\!]$, and then there exists $(s,h,B)$ and $l\in\textsc{Loc}$
such that $(s,h,B)\models \psi$, $(\bigcup B) \cap [ l
, l + [\![ t ]\!]_{s,B} ) = \emptyset$, and $(s', h', B') = (s [x
\mapsto l] , h[l + i \mapsto - ]_{i=0}^{[\![ t ]\!]_{s,B} - 1 } , B \cup
\{ [l , l + [\![ t ]\!]_{s,B} ) \} )$.  Then, $l=s'(x)$ holds. Since
$x'$ is fresh, these do not depend on the value of $s(x')$, so we can
assume $s(x')=s'(x)$. Then, $s'=s[x'\mapsto s(x)]$ holds.  

Since $(s,h,B) \models \psi$ holds, $h$ is of the form  
$$h = h_1 \circ \cdots \circ h_m \circ h'',$$
where $(s,h_i, B) \models \bararr(t_i, {t'}_i)$ for $1 \leq i \leq m$ and $(s,h'',B) \models \psi'$.
That is, $ \textsf{dom}(h_i) = \textsf{dom}_{-}(h_i) = 
[ [\![t_i]\!]_{s,B}, \inter{{t'}_i}_{s,B})$ for each $i$.

Moreover, we have $(s,B)\models t_i<{t'}_i\le t_{i+1}$ for each $i$
and $(s,B)\models{\tt null} R_1 u_1 \cdots R_N u_N$ for
$u_i\in\termS^-(\psi)$.  Without
loss of generality, we assume
$\inter{u_\alpha}_{s,B}<s'(x)\le\inter{u_{\alpha+1}}_{s,B}$ and
$\inter{u_\beta}_{s,B}<s'(x)+\inter{t}_{s,B}\le
\inter{u_{\beta+1}}_{s,B}$ for some $\alpha,\beta$.  If $\beta=N$,
ignore $\inter{u_{\beta+1}}_{s,B}$.
Then, we can show $\inter{r}_{s,B}=\inter{\tilde{r}}_{s',B'}$ as in the
previous case where $\tilde{r}=\textsf{rep}_\alpha^\beta(r\theta)$.
Hence, by $(s,h,B)\models\psi$, we have $(s',B')$ satisfies
$\tilde{\psi}_{\text{pure}}$,
$b(x) \approx x * e(x) \approx x+\tilde{t}$,
$\tilde{u}_\alpha < x \le \tilde{u}_{\alpha+1}$, and
$\tilde{u}_\beta<x+\tilde{t}\le \tilde{u}_{\beta+1}$.

In the following, we divide the cases depending on where the value of $s'(x)$
and $s'(x)+[\![t]\!]_{s,B}$ are in the sequence of $\inter{t_i}_{s,B}$
and $\inter{{t'}_i}_{s,B}$ for $1\le i\le m$.

\subsubsection{Case 1}
If $ [\![t_j]\!]_{s,B}<s'(x)<[\![{t'}_j]\!]_{s,B}$ and
$ [\![t_k]\!]_{s,B}<s'(x)+[\![t]\!]_{s,B}<[\![{t'}_k]\!]_{s,B}$ holds for
some $1\le j\le k\le m$.  In this case, in the heap $h'=h[s'(x) + i
\mapsto - ]_{i=0}^{[\![ t ]\!]_{s,B} - 1 }$, $h_j$ is divided into two
parts $h'^1_j$ and $h'^2_j$ such that $h'^1_j$ is $h_j$ whose domain is
shorten to $[[\![t_j]\!]_{s,B},s'(x))$ and
$\textsf{dom}(h'^2_j)=\textsf{dom}_{+}(h'^2_j)=[s'(x),[\![{t'}_j]\!]_{s,B})$.
Similarly, $h_k$ is divided into $h'^1_k$ and $h'^2_k$, where
$\textsf{dom}(h'^1_k)=\textsf{dom}_{+}(h'^1_k) = 
[[\![t_k]\!]_{s,B},s'(x)+[\![t]\!]_{s,B})$
and
$h'^2_k$ is $h_k$ whose domain is shorten to
$[s'(x)+[\![t]\!]_{s,B},[\![{t'}_k]\!]_{s,B})$.
Then, $h'$ can be written
as $h'= h_1\circ\cdots \circ h_{j-1} \circ h'^1_{j}\circ h''' \circ h'^2_k\circ
h_{k+1}\cdots \circ  h_{m}$, 
where 
$\textsf{dom}(h''')=\textsf{dom}_{+}(h''')=[s'(x),s'(x)+[\![t]\!]_{s,B})$.

From the above and $\inter{r}_{s,B}=\inter{\tilde{r}}_{s',B'}$, we have
{
\begin{align*}
 (s',h',B')\models &
  \tilde{\psi}'
 * \bigast_{i=1}^{j-1} \bararr(\tilde{t_i},\tilde{t}'_i)
 * \bararr(\tilde{t}_j,x)
 * \arr(x,x+\tilde{t}) \\
 & * \bararr(x+\tilde{t},\tilde{t}'_k)
 * \bigast_{i=k+1}^{m} \bararr(\tilde{t}_i,\tilde{t}'_i)\\
 & * \tilde{t}_j<x<\tilde{t}'_j * \tilde{t}_k<x+\tilde{t}< \tilde{t}'_k \\
 & * b(x) \approx x * e(x) \approx x+\tilde{t} \\
 & * \tilde{u}_\alpha < x \le \tilde{u}_{\alpha+1} *
 \tilde{u}_\beta<x+\tilde{t}\le \tilde{u}_{\beta+1},
\end{align*}
}
which is one of the disjuncts in  $\chi$ such that ${\sf
wpo}(\psi,x:={\tt alloc}(t),{\it ok})=\exists x'.\chi$.

\subsubsection{Case 2}

If $ [\![{t'}_{j-1}]\!]_{s,B} \leq s'(x) \leq [\![{t}_j]\!]_{s,B}$ and
$ [\![t_k]\!]_{s,B} < s'(x)+[\![t]\!]_{s,B} < [\![{t'}_k]\!]_{s,B}$ holds for
some $1\le j\le k\le m$.  
In this case, in the heap $h'=h[s'(x) + i
\mapsto - ]_{i=0}^{[\![ t ]\!]_{s,B} - 1 }$, 
$h_k$ is divided into $h'^1_k$ and $h'^2_k$, where
$\textsf{dom}(h'^1_k)=\textsf{dom}_{+}(h'^1_k) = 
[[\![t_k]\!]_{s,B},s'(x)+[\![t]\!]_{s,B})$
and
$h'^2_k$ is $h_k$ whose domain is shorten to
$[s'(x)+[\![t]\!]_{s,B},[\![{t'}_k]\!]_{s,B})$.
Then, $h'$ can be written
as $h'= h_1\circ\cdots \circ h_{j-1} \circ   h''' \circ h'^2_k\circ
h_{k+1}\cdots \circ  h_{m}$, 
where 
$\textsf{dom}(h''')=\textsf{dom}_{+}(h''')=[s'(x),s'(x)+[\![t]\!]_{s,B})$.

From the above and $\inter{r}_{s,B}=\inter{\tilde{r}}_{s',B'}$, we have
{
\begin{align*}
 (s',h',B')\models &
  \tilde{\psi}'
 * \bigast_{i=1}^{j-1} \bararr(\tilde{t_i},\tilde{t}'_i)
 * \arr(x,x+\tilde{t}) \\
 & * \bararr(x+\tilde{t},\tilde{t}'_k)
 * \bigast_{i=k+1}^{m} \bararr(\tilde{t}_i,\tilde{t}'_i)\\
 & * \tilde{t'}_{j-1} \leq  x \leq \tilde{t}_j * \tilde{t}_k<x+\tilde{t}< \tilde{t}'_k \\
 & * b(x) \approx x * e(x) \approx x+\tilde{t} \\
 & * \tilde{u}_\alpha < x \le \tilde{u}_{\alpha+1} *
 \tilde{u}_\beta<x+\tilde{t}\le \tilde{u}_{\beta+1},
\end{align*}
}
which is one of the disjuncts in  $\chi$ such that ${\sf
wpo}(\psi,x:={\tt alloc}(t),{\it ok})=\exists x'.\chi$.


\subsubsection{Case 3}

If $ [\![t_j]\!]_{s,B}<s'(x)<[\![{t'}_j]\!]_{s,B}$ and
$ [\![{t'}_k]\!]_{s,B} \leq s'(x)+[\![t]\!]_{s,B} \leq [\![{t'}_{k+1}]\!]_{s,B}$ holds for
some $1\le j\le k\le m$.  
In this case, in the heap $h'=h[s'(x) + i
\mapsto - ]_{i=0}^{[\![ t ]\!]_{s,B} - 1 }$, $h_j$ is divided into two
parts $h'^1_j$ and $h'^2_j$ such that $h'^1_j$ is $h_j$ whose domain is
shorten to $[[\![t_j]\!]_{s,B},s'(x))$ and
$\textsf{dom}(h'^2_j)=\textsf{dom}_{+}(h'^2_j)=[s'(x),[\![{t'}_j]\!]_{s,B})$.
Then, $h'$ can be written
as $h'= h_1\circ\cdots \circ h_{j-1} \circ h'^1_{j}\circ h''' \circ h'_{k+1} \circ
\cdots \circ  h_{m}$, 
where 
$\textsf{dom}(h''')=\textsf{dom}_{+}(h''')=[s'(x),s'(x)+[\![t]\!]_{s,B})$.

From the above and $\inter{r}_{s,B}=\inter{\tilde{r}}_{s',B'}$, we have
{
\begin{align*}
 (s',h',B')\models &
  \tilde{\psi}'
 * \bigast_{i=1}^{j-1} \bararr(\tilde{t_i},\tilde{t}'_i)
 * \bararr(\tilde{t}_j,x)
 * \arr(x,x+\tilde{t}) \\
 & 
 * \bigast_{i=k+1}^{m} \bararr(\tilde{t}_i,\tilde{t}'_i)\\
 & * \tilde{t}_j<x<\tilde{t}'_j * \tilde{t}_k<x+\tilde{t}< \tilde{t}'_k \\
 & * b(x) \approx x * e(x) \approx x+\tilde{t} \\
 & * \tilde{u}_\alpha < x \le \tilde{u}_{\alpha+1} *
 \tilde{u}_\beta<x+\tilde{t}\le \tilde{u}_{\beta+1},
\end{align*}
}
which is one of the disjuncts in  $\chi$ such that ${\sf
wpo}(\psi,x:={\tt alloc}(t),{\it ok})=\exists x'.\chi$.


\subsubsection{Case 4}
If $ [\![{t'}_{j-1}]\!]_{s,B} \leq s'(x) \leq [\![{t}_j]\!]_{s,B}$ and
$ [\![{t'}_k]\!]_{s,B} \leq s'(x)+[\![t]\!]_{s,B} \leq [\![{t}_{k+1}]\!]_{s,B}$ holds for
some $1\le j\le k\le m$.  
Then, $h'$ can be written
as $h'= h_1\circ\cdots \circ h_{j-1} \circ   h''' \circ  
h_{k+1}\cdots \circ  h_{m}$, 
where 
$\textsf{dom}(h''')=\textsf{dom}_{+}(h''')=[s'(x),s'(x)+[\![t]\!]_{s,B})$.

From the above and $\inter{r}_{s,B}=\inter{\tilde{r}}_{s',B'}$, we have
{
\begin{align*}
 (s',h',B')\models &
  \tilde{\psi}'
 * \bigast_{i=1}^{j-1} \bararr(\tilde{t_i},\tilde{t}'_i)
 * \arr(x,x+\tilde{t}) 
 * \bigast_{i=k+1}^{m} \bararr(\tilde{t}_i,\tilde{t}'_i)\\
 & * \tilde{t}_j<x<\tilde{t}'_j * \tilde{t}_k<x+\tilde{t}< \tilde{t}'_k \\
 & * b(x) \approx x * e(x) \approx x+\tilde{t} \\
 & * \tilde{u}_\alpha < x \le \tilde{u}_{\alpha+1} *
 \tilde{u}_\beta<x+\tilde{t}\le \tilde{u}_{\beta+1},
\end{align*}
}
which is one of the disjuncts in  $\chi$ such that ${\sf
wpo}(\psi,x:={\tt alloc}(t),{\it ok})=\exists x'.\chi$.

\subsubsection{Case 5}

{
Otherwise, one of the following conditions holds:
\begin{itemize}
    \item $[\![{t'}_{i}]\!]_{s,B} \leq s'(x) < s'(x) + [\![{t}]\!]_{s,B} \leq [\![{t}_{i+1}]\!]_{s,B}$ holds 
    for some $i$.
    \item $ s'(x) + [\![{t}]\!]_{s,B} < [\![{t}_{1}]\!]_{s,B}$ or
    $s'(x) \geq [\![{t'}_m]\!]_{s,B} $ hold.
\end{itemize}
In either case, $h'$ can be written as  
$ h' = h_1 \circ \cdots \circ h_{m} \circ h'''$ ,
where  
$ \textsf{dom}(h''') = \textsf{dom}_{+}(h''') = [s'(x), s'(x) + [\![t]\!]_{s,B})$.
}

From the above and $\inter{r}_{s,B}=\inter{\tilde{r}}_{s',B'}$, we have
{
\begin{align*}
 (s',h',B')\models &
  \tilde{\psi} * \arr(x,x+\tilde{t}) \\
 & * b(x) \approx x * e(x) \approx x+\tilde{t} \\
 & * \tilde{u}_\alpha < x \le \tilde{u}_{\alpha+1} *
 \tilde{u}_\beta<x+\tilde{t}\le \tilde{u}_{\beta+1},
\end{align*}
}
which is one of the disjuncts in  $\chi$ such that ${\sf
wpo}(\psi,x:={\tt alloc}(t),{\it ok})=\exists x'.\chi$.

%

%
\section{Proof of Proposition \ref{prop: expressiveness of WPO calculus} for $\normalfont\texttt{free($t$)}$}
\label{sec: expressiveness proof of atomic command in appendix (free)}

We prove
$(s', h', B') \models \textsf{wpo}(  \psi , \texttt{free($t$)}  , \textit{ok} ) \iff 
(s', h', B') \in \text{WPO} [\![  \psi , \texttt{free($t$)}  , \textit{ok} ]\!]  $ holds.
By definition, $(s', h', B') \in \text{WPO} [\![  \psi , \texttt{free($t$)}  , \textit{ok} ]\!]  $
is equivalent to
$
\exists (s,h,B) \models \psi .
[\![ t ]\!]_{s,B} = b_B ([\![ t ]\!]_{s,B}) \land b_B ([\![ t ]\!]_{s,B}) > 0 \land
(s', h', B')  = 
(s, h [i \mapsto \bot]_{i= b_B ([\![ t ]\!]_{s,B})}^{e_B ([\![ t ]\!]_{s,B}) - 1} , 
B \setminus \{  [ b_B ([\![ t ]\!]_{s,B})  ,  e_B ([\![ t ]\!]_{s,B})   ) \}  )  $.

\subsection{$(s', h', B') \models \textsf{wpo}( \psi , \texttt{free($t$)} ,
\textit{ok} ) \Rightarrow (s', h', B') \in \text{WPO} [\![ \psi ,
\texttt{free($t$)} , \textit{ok} ]\!]  $}
Firstly, we prove $(s', h', B') \models \textsf{wpo}( \psi , \texttt{free($t$)} ,
\textit{ok} ) \Rightarrow (s', h', B') \in \text{WPO} [\![ \psi ,
\texttt{free($t$)} , \textit{ok} ]\!]  $.


\subsubsection{Case 1}

{
We  prove the case for 
$\psi = \arr (t_1, {{t'}}_1 ) *  \bigast_{i=2}^{k-1} \Arr (t_i, {{t'}}_i)_{\alpha_i}
 *\arr (t_k, {{t'}}_k ) *  \psi'$
 and  $\psi_{\text{pure}} \models 
t_1 < b(t) < {{t'}}_1 * t_k < e(t) < {{t'}}_k 
*
b(t) \approx t *  
\bigast_{i=1}^{k-1} {{t'}}_i \approx t_{i+1} * \bigast_{i=1}^{k-1} {t}_i <
t_{i+1} $  for some  $k, \psi'$.
}
We use $\tau$ as an abbreviation for 
$[T_b(\pi, t) := t] [T_e(\pi, t) := y] $  for brevity.

By the assumption, we have
$
(s', h', B') \models \exists y .  (\arr (t_1 , t) * \bararr ( t , y ) *
\arr(y, {{t'}}_k ) * \psi' ) \tau$, and then 
$(s'[y\mapsto v], h', B') \models (\arr (t_1 , t) * \bararr ( t , y ) *
\arr(y, {{t'}}_k ) * \psi' ) \tau$ for some $v\in\textsc{Loc}$. Let
$s''=s'[y\mapsto v]$.

Define
 \begin{align*}
  s= & s',\\
  B= & B'\cup\{[\inter{t}_{s'',B'},v)\}=B'\cup\{[\inter{t}_{s,B},v)\},\\
  h= &  h'
  [l \mapsto -  ]_{l= [\![ t ]\!]_{s,B}  }^{\inter{{t'}_1}_{s,B} - 1}
  \left([l \mapsto v_l  ]_{l= \inter{t_i}_{s,B}  }^{\inter{{t'}_i}_{s,B} - 1}
  \right)_{i=2}^{k-1}
  [l \mapsto -  ]_{l= \inter{t_k}_{s,B}  }^{v-1},
 \end{align*}
 where $v_{\inter{t_i}_{s,B}}=\inter{\alpha_i}_{s,B}$ 
 if  $\Arr(t_i,{t'}_i)_{\alpha_i}=t_i\mapsto\alpha_i$.
Otherwise, we do not care about the value of $v_l$, i.e., we simply consider $v_l$ as $-$.

 Since $t$ contains neither
 $y,b$, nor $e$, and hence we have $t\tau=t$ and
 $\inter{t}_{s',B'}=\inter{t}_{s,B}$. Since $(s'',h',B')$ contains a
 portion satisfying $\bararr(t,y)$, and hence $(\bigcup
 B')\cap[\inter{t}_{s'',B'},v)=\emptyset$.

 For any term $r\in\termS^-(\psi)$, if $\psi_{\text{pure}}\models
 b(r)\approx b(t)$, we have
 $\inter{b(r)}_{s,B}=\inter{b(t)}_{s,B}=\inter{t}_{s,B}=\inter{t}_{s',B'}$,
 and otherwise,
 we have $\inter{b(r)}_{s,B}=\inter{b(r)}_{s',B'}$.  If
 $\psi_{\text{pure}}\models e(r)\approx e(t)$, we have
 $\inter{e(r)}_{s,B}=v=\inter{y}_{s',B'}$, and otherwise, we have
 $\inter{e(r)}_{s,B}=\inter{e(r)}_{s',B'}$. Hence, we have
 $\inter{r}_{s,B}=\inter{r\tau}_{s',B'}$ for any term
 $r\in\termS(\psi)$.

 Since $\psi_{\rm pure}\models t<{t'}_1 * \bigast_{i=1}^{k-1}({t'}_i\approx
 t_{i+1} * t_i<t_{i+1}) * t_k\le e(t)<{t'}_k$, we have
 $\inter{t}_{s,B}<\inter{{t'}_1}_{s,B}=\inter{t_2}_{s,B}<\inter{{t'}_2}_{s,B}=\cdots\inter{t_k}_{s,B}<v<\inter{{t'}_k}_{s,B}$. Hence,
 $h$ is well-defined, and we have $(s,h,B)\models \psi$.
 Furthermore, by definition, we have $((s,h,B),(s',h',B'))\in\inter{{\tt free}(t)}_{\it ok}$.


\subsubsection{Case 2}

{
We  prove the case for 
$\psi =   \bigast_{i=1}^{k-1} \Arr (t_i, {{t'}}_i)_{\alpha_i}
 *\arr (t_k, {{t'}}_k ) *  \psi'$
 and  $\psi_{\text{pure}} \models 
t_1 \approx b(t)
* t_k < e(t) < {{t'}}_k 
*
b(t) \approx t *  
\bigast_{i=1}^{k-1} {{t'}}_i \approx t_{i+1} * \bigast_{i=1}^{k-1} {t}_i <
t_{i+1} $  for some  $k, \psi'$.
}
We use $\tau$ as an abbreviation for 
$[T_b(\pi, t) := t] [T_e(\pi, t) := y] $  for brevity.

We prove $(s', h', B') \models \textsf{wpo}( \psi , \texttt{free($t$)} ,
\textit{ok} ) \Rightarrow (s', h', B') \in \text{WPO} [\![ \psi ,
\texttt{free($t$)} , \textit{ok} ]\!]  $.

By the assumption, we have
$
(s', h', B') \models
 \exists y . 
 ( \bararr ( t , y ) *
\arr(y, {{t'}}_k ) * \psi' ) \tau$, and then 
$(s'[y\mapsto v], h', B') \models
( \bararr ( t , y ) *
\arr(y, {{t'}}_k ) * \psi' ) \tau$ for some $v\in\textsc{Loc}$. Let
$s''=s'[y\mapsto v]$.

Define
 \begin{align*}
  s= & s',\\
  B= & B'\cup\{[\inter{t}_{s'',B'},v)\}=B'\cup\{[\inter{t}_{s,B},v)\},\\
  h= &  h'
  \left([l \mapsto v_l  ]_{l= \inter{t_i}_{s,B}  }^{\inter{{t'}_i}_{s,B} - 1}
  \right)_{i=1}^{k-1}
  [l \mapsto -  ]_{l= \inter{t_k}_{s,B}  }^{v-1},
 \end{align*}
 where $v_{\inter{t_i}_{s,B}}=\inter{\alpha_i}_{s,B}$ 
 if  $\Arr(t_i,{t'}_i)_{\alpha_i}=t_i\mapsto\alpha_i$.
Otherwise, we do not care about the value of $v_l$, i.e., we simply consider $v_l$ as $-$.

 Since $t$ contains neither
 $y,b$, nor $e$, and hence we have $t\tau=t$ and
 $\inter{t}_{s',B'}=\inter{t}_{s,B}$. Since $(s'',h',B')$ contains a
 portion satisfying $\bararr(t,y)$, and hence $(\bigcup
 B')\cap[\inter{t}_{s'',B'},v)=\emptyset$.

 For any term $r\in\termS^-(\psi)$, if $\psi_{\text{pure}}\models
 b(r)\approx b(t)$, we have
 $\inter{b(r)}_{s,B}=\inter{b(t)}_{s,B}=\inter{t}_{s,B}=\inter{t}_{s',B'}$,
 and otherwise,
 we have $\inter{b(r)}_{s,B}=\inter{b(r)}_{s',B'}$.  If
 $\psi_{\text{pure}}\models e(r)\approx e(t)$, we have
 $\inter{e(r)}_{s,B}=v=\inter{y}_{s',B'}$, and otherwise, we have
 $\inter{e(r)}_{s,B}=\inter{e(r)}_{s',B'}$. Hence, we have
 $\inter{r}_{s,B}=\inter{r\tau}_{s',B'}$ for any term
 $r\in\termS(\psi)$.

{ 
Since $\psi_{\rm pure}\models t<{t'}_1 * \bigast_{i=1}^{k-1}({t'}_i\approx
 t_{i+1} * t_i<t_{i+1}) * t_k\le e(t)<{t'}_k$, we have
 $\inter{t}_{s,B}<\inter{{t'}_1}_{s,B}=\inter{t_2}_{s,B}<\inter{{t'}_2}_{s,B}=\cdots\inter{t_k}_{s,B}<v<\inter{{t'}_k}_{s,B}$.}
 Hence,
 $h$ is well-defined, and we have $(s,h,B)\models \psi$.
 Furthermore, by definition, we have $((s,h,B),(s',h',B'))\in\inter{{\tt free}(t)}_{\it ok}$.

\subsubsection{Case 3}

{
We  prove the case for 
$\psi =   \bigast_{i=1}^{k} \Arr (t_i, {{t'}}_i)_{\alpha_i} *  \psi'$
 and  $\psi_{\text{pure}} \models 
t_1 \approx b(t)
*  e(t) \approx {{t'}}_k 
*
b(t) \approx t *  
\bigast_{i=1}^{k-1} {{t'}}_i \approx t_{i+1} * \bigast_{i=1}^{k-1} {t}_i <
t_{i+1} $  for some  $k, \psi'$.
}
We use $\tau$ as an abbreviation for 
$[T_b(\pi, t) := t] [T_e(\pi, t) := y] $  for brevity.

We prove $(s', h', B') \models \textsf{wpo}( \psi , \texttt{free($t$)} ,
\textit{ok} ) \Rightarrow (s', h', B') \in \text{WPO} [\![ \psi ,
\texttt{free($t$)} , \textit{ok} ]\!]  $.

By the assumption, we have
$
(s', h', B') \models \exists y . 
( \bararr ( t , y ) * \psi' ) \tau$, and then 
$(s'[y\mapsto v], h', B') \models
( \bararr ( t , y ) * \psi' ) \tau$ for some $v\in\textsc{Loc}$. Let
$s''=s'[y\mapsto v]$.

Define
 \begin{align*}
  s= & s',\\
  B= & B'\cup\{[\inter{t}_{s'',B'},v)\}=B'\cup\{[\inter{t}_{s,B},v)\},\\
  h= &  h'
  \left([l \mapsto v_l  ]_{l= \inter{t_i}_{s,B}  }^{\inter{{t'}_i}_{s,B} - 1}
  \right)_{i=1}^{k},
 \end{align*}
 where $v_{\inter{t_i}_{s,B}}=\inter{\alpha_i}_{s,B}$ 
 if  $\Arr(t_i,{t'}_i)_{\alpha_i}=t_i\mapsto\alpha_i$.
Otherwise, we do not care about the value of $v_l$, i.e., we simply consider $v_l$ as $-$.

 Since $t$ contains neither
 $y,b$, nor $e$, and hence we have $t\tau=t$ and
 $\inter{t}_{s',B'}=\inter{t}_{s,B}$. Since $(s'',h',B')$ contains a
 portion satisfying $\bararr(t,y)$, and hence $(\bigcup
 B')\cap[\inter{t}_{s'',B'},v)=\emptyset$.

 For any term $r\in\termS^-(\psi)$, if $\psi_{\text{pure}}\models
 b(r)\approx b(t)$, we have
 $\inter{b(r)}_{s,B}=\inter{b(t)}_{s,B}=\inter{t}_{s,B}=\inter{t}_{s',B'}$,
 and otherwise,
 we have $\inter{b(r)}_{s,B}=\inter{b(r)}_{s',B'}$.  If
 $\psi_{\text{pure}}\models e(r)\approx e(t)$, we have
 $\inter{e(r)}_{s,B}=v=\inter{y}_{s',B'}$, and otherwise, we have
 $\inter{e(r)}_{s,B}=\inter{e(r)}_{s',B'}$. Hence, we have
 $\inter{r}_{s,B}=\inter{r\tau}_{s',B'}$ for any term
 $r\in\termS(\psi)$.

{ Since $\psi_{\rm pure}\models t<{t'}_1 * \bigast_{i=1}^{k-1}({t'}_i\approx
 t_{i+1} * t_i<t_{i+1}) * t_k < e(t) \approx {t'}_k$, we have
 $\inter{t}_{s,B}<\inter{{t'}_1}_{s,B}=\inter{t_2}_{s,B}<\inter{{t'}_2}_{s,B}=\cdots\inter{t_k}_{s,B}< v \approx \inter{{t'}_k}_{s,B}$.}
 Hence,
 $h$ is well-defined, and we have $(s,h,B)\models \psi$.
 Furthermore, by definition, we have $((s,h,B),(s',h',B'))\in\inter{{\tt free}(t)}_{\it ok}$.

\subsubsection{Case 4}

{
We  prove the case for 
$\psi = \arr (t_1, {{t'}}_1 ) *  \bigast_{i=2}^{k} \Arr (t_i, {{t'}}_i)_{\alpha_i}  *  \psi'$
 and  $\psi_{\text{pure}} \models 
t_1 < b(t) < {{t'}}_1 *  e(t) \approx {{t'}}_k 
*
b(t) \approx t *  
\bigast_{i=1}^{k-1} {{t'}}_i \approx t_{i+1} * \bigast_{i=1}^{k-1} {t}_i <
t_{i+1} $  for some  $k, \psi'$.
}
We use $\tau$ as an abbreviation for 
$[T_b(\pi, t) := t] [T_e(\pi, t) := y] $  for brevity.

We prove $(s', h', B') \models \textsf{wpo}( \psi , \texttt{free($t$)} ,
\textit{ok} ) \Rightarrow (s', h', B') \in \text{WPO} [\![ \psi ,
\texttt{free($t$)} , \textit{ok} ]\!]  $.

By the assumption, we have
$
(s', h', B') \models \exists y .  (\arr (t_1 , t) * \bararr ( t , y )  * \psi' ) \tau$, and then 
$(s'[y\mapsto v], h', B') \models (\arr (t_1 , t) * \bararr ( t , y )  * \psi' ) \tau$ for some $v\in\textsc{Loc}$. Let
$s''=s'[y\mapsto v]$.

Define
 \begin{align*}
  s= & s',\\
  B= & B'\cup\{[\inter{t}_{s'',B'},v)\}=B'\cup\{[\inter{t}_{s,B},v)\},\\
  h= &  h'
  [l \mapsto -  ]_{l= [\![ t ]\!]_{s,B}  }^{\inter{{t'}_1}_{s,B} - 1}
  \left([l \mapsto v_l  ]_{l= \inter{t_i}_{s,B}  }^{\inter{{t'}_i}_{s,B} - 1}
  \right)_{i=2}^{k},
 \end{align*}
 where $v_{\inter{t_i}_{s,B}}=\inter{\alpha_i}_{s,B}$ 
 if  $\Arr(t_i,{t'}_i)_{\alpha_i}=t_i\mapsto\alpha_i$.
Otherwise, we do not care about the value of $v_l$, i.e., we simply consider $v_l$ as $-$.

 Since $t$ contains neither
 $y,b$, nor $e$, and hence we have $t\tau=t$ and
 $\inter{t}_{s',B'}=\inter{t}_{s,B}$. Since $(s'',h',B')$ contains a
 portion satisfying $\bararr(t,y)$, and hence $(\bigcup
 B')\cap[\inter{t}_{s'',B'},v)=\emptyset$.

 For any term $r\in\termS^-(\psi)$, if $\psi_{\text{pure}}\models
 b(r)\approx b(t)$, we have
 $\inter{b(r)}_{s,B}=\inter{b(t)}_{s,B}=\inter{t}_{s,B}=\inter{t}_{s',B'}$,
 and otherwise,
 we have $\inter{b(r)}_{s,B}=\inter{b(r)}_{s',B'}$.  If
 $\psi_{\text{pure}}\models e(r)\approx e(t)$, we have
 $\inter{e(r)}_{s,B}=v=\inter{y}_{s',B'}$, and otherwise, we have
 $\inter{e(r)}_{s,B}=\inter{e(r)}_{s',B'}$. Hence, we have
 $\inter{r}_{s,B}=\inter{r\tau}_{s',B'}$ for any term
 $r\in\termS(\psi)$.

{ Since $\psi_{\rm pure}\models t<{t'}_1 * \bigast_{i=1}^{k-1}({t'}_i\approx
 t_{i+1} * t_i<t_{i+1}) * t_k\le e(t)<{t'}_k$, we have
 $\inter{t}_{s,B}<\inter{{t'}_1}_{s,B}=\inter{t_2}_{s,B}<\inter{{t'}_2}_{s,B}=\cdots\inter{t_k}_{s,B}<v<\inter{{t'}_k}_{s,B}$.} Hence,
 $h$ is well-defined, and we have $(s,h,B)\models \psi$.
 Furthermore, by definition, we have $((s,h,B),(s',h',B'))\in\inter{{\tt free}(t)}_{\it ok}$.

\subsubsection{Case 5}
Otherwise, 
$(s', h', B') \models \textsf{wpo}( \psi , \texttt{free($t$)} ,
\textit{ok} ) \Rightarrow (s', h', B') \in \text{WPO} [\![ \psi ,
\texttt{free($t$)} , \textit{ok} ]\!]  $ holds trivially 
since $\textsf{wpo}( \psi , \texttt{free($t$)} ,
\textit{ok} )  = \textsf{false}$.


\subsection{$(s', h', B') \in
\text{WPO} [\![ \psi , \texttt{free($t$)} , \textit{ok} ]\!] \Rightarrow
(s', h', B') \models \textsf{wpo}( \psi ,
\texttt{free($t$)} , \textit{ok} )$}

Next, we prove the converse: $(s', h', B') \in
\text{WPO} [\![ \psi , \texttt{free($t$)} , \textit{ok} ]\!] \Rightarrow
(s', h', B') \models \textsf{wpo}( \psi ,
\texttt{free($t$)} , \textit{ok} )$.
By the assumption, there exists $(s,h,B)$ such that
\begin{itemize}
 \item $(s,h,B)\models\psi$,
 \item $(s',h',B')=(s,h[l\mapsto
       \bot]_{l=\inter{t}_{s,B}}^{e_B(\inter{t}_{s,B})-1},
       B \setminus \{[\inter{t}_{s,B},e_B(\inter{t_{s,B}}))\})$,
 \item $\inter{t}_{s,B}=b_B(\inter{t}_{s,B})>0$.
\end{itemize}

\subsubsection{Case 1}
{
We  prove the case for 
$\psi = \arr (t_1, {{t'}}_1 ) *  \bigast_{i=2}^{k-1} \Arr (t_i, {{t'}}_i)_{\alpha_i}
 *\arr (t_k, {{t'}}_k ) *  \psi'$
 and  $\psi_{\text{pure}} \models 
t_1 < b(t) < {{t'}}_1 * t_k < e(t) < {{t'}}_k 
*
b(t) \approx t *  
\bigast_{i=1}^{k-1} {{t'}}_i \approx t_{i+1} * \bigast_{i=1}^{k-1} {t}_i <
t_{i+1} $  for some  $k, \psi'$.
}
We use $\tau$ as an abbreviation for 
$[T_b(\pi, t) := t] [T_e(\pi, t) := y] $  for brevity.

Since $ \psi = \arr (t_1, {{t'}}_1 ) * \bigast_{i=2}^{k-1} \Arr (t_i,
{{t'}}_i)_{\alpha_i} *\arr (t_k, {{t'}}_k ) * \psi'$,
$h$ is of the form $h_1\circ h_2\circ \cdots\circ h_k\circ h''$ such that
\begin{itemize}
 \item $ (s,h_1,B)\models \arr(t_1,{{t'}}_1)$,
 \item $ (s,h_i,B)\models \Arr(t_i,{{t'}}_i)_{\alpha_i}$ for $2\le i\le k-1$,
 \item $ (s,h_k,B)\models \arr(t_k,{{t'}}_k)$,
 \item $(s,h'',B)\models \psi'$.
\end{itemize}
{
Since $(s,B)$ satisfies $t_1<t\approx b(t)<{t'}_1$ and $t_k<e(t)<{t'}_k$, 
$h_1$ is divided into two portions $h^1_1$ and $h^2_1$,
where $h^1_1$ is $h_1$ whose domain is shorten to
$[\inter{t_1}_{s,B},\inter{t}_{s,B})$ and $h^2_1(l)=\bot$ for
$l\in[\inter{t}_{s,B},\inter{{t'}_1}_{s,B})$.  
Similarly, $h_k$ is divided
into two portions $h^1_k$ and $h^2_k$,  
where $h^2_k$ is $h_k$ whose domain is shorten to
$[ e_B(\inter{t}_{s,B}) , \inter{{t'}_k}_{s,B}  )$ and 
$h^1_k(l)=\bot$ for
$l\in[ \inter{t_k}_{s,B} , e_B(\inter{t}_{s,B}) )$.  
}
Furthermore, we have
$(s,B)\models {t'}_i\approx t_{i+1}$ for $1\le i<k$. Then, $h'$ can be
written as $  h^1_1\circ h_\bot \circ h^2_k\circ h''$, where
$\textsf{dom}(h_\bot)=\textsf{dom}_{-}(h_\bot)=[\inter{t}_{s,B}, e_B(\inter{t}_{s,B}) )$.

As in the previous case, $\inter{r}_{s,B}=\inter{r\tau}_{s',B'}$
holds. 
Therefore, we have $ (s'[y\mapsto e_B(\inter{t}_{s,B})],h',B')\models
(\arr (t_1 , t) * \bararr ( t , y ) *
\arr(y, {{t'}}_k ) * \psi' ) \tau$, and then 
$ (s', h', B') \models\exists y.(\arr (t_1 , t) * \bararr ( t , y ) *
\arr(y, {{t'}}_k ) * \psi' ) \tau$. That means $(s',h',B')$ satisfies
${\sf wpo}(\psi,{\tt free}(t),{\it ok})$.


\subsubsection{Case 2}
{
We  prove the case for  
$\psi =   \bigast_{i=1}^{k-1} \Arr (t_i, {{t'}}_i)_{\alpha_i}
 *\arr (t_k, {{t'}}_k ) *  \psi'$
 and  $\psi_{\text{pure}} \models 
t_1 \approx b(t)
* t_k < e(t) < {{t'}}_k 
*
b(t) \approx t *  
\bigast_{i=1}^{k-1} {{t'}}_i \approx t_{i+1} * \bigast_{i=1}^{k-1} {t}_i <
t_{i+1} $  for some  $k, \psi'$.
}
We use $\tau$ as an abbreviation for 
$[T_b(\pi, t) := t] [T_e(\pi, t) := y] $  for brevity.

Since $ \psi =   \bigast_{i=1}^{k-1} \Arr (t_i,
{{t'}}_i)_{\alpha_i} *\arr (t_k, {{t'}}_k ) * \psi'$,
$h$ is of the form $h_1\circ h_2\circ \cdots\circ h_k\circ h''$ such that
\begin{itemize}
 \item $ (s,h_i,B)\models \Arr(t_i,{{t'}}_i)_{\alpha_i}$ for $1\le i\le k-1$,
 \item $ (s,h_k,B)\models \arr(t_k,{{t'}}_k)$,
 \item $(s,h'',B)\models \psi'$.
\end{itemize}
{
Since $(s,B)$ satisfies $t_1 \approx t\approx b(t)<{t'}_1$ and $t_k<e(t)<{t'}_k$, 
$h_k$ is divided
into two portions $h^1_k$ and $h^2_k$,  
where $h^2_k$ is $h_k$ whose domain is shorten to
$[ e_B(\inter{t}_{s,B}) , \inter{{t'}_k}_{s,B}  )$ and 
$h^1_k(l)=\bot$ for
$l\in[ \inter{t_k}_{s,B} , e_B(\inter{t}_{s,B}) )$.  
}
Furthermore, we have
$(s,B)\models {t'}_i\approx t_{i+1}$ for $1\le i<k$. Then, $h'$ can be
written as $   h_\bot \circ h^2_k\circ h''$, where
$\textsf{dom}(h_\bot)=\textsf{dom}_{-}(h_\bot)=[\inter{t}_{s,B}, e_B(\inter{t}_{s,B}) )$.

As in the previous case, $\inter{r}_{s,B}=\inter{r\tau}_{s',B'}$
holds. 
Therefore, we have $ (s'[y\mapsto e_B(\inter{t}_{s,B})],h',B')\models
(  \bararr ( t , y ) *
\arr(y, {{t'}}_k ) * \psi' ) \tau$, and then 
$ (s', h', B') \models\exists y.(  \bararr ( t , y ) *
\arr(y, {{t'}}_k ) * \psi' ) \tau$. That means $(s',h',B')$ satisfies
${\sf wpo}(\psi,{\tt free}(t),{\it ok})$.


\subsubsection{Case 3}
{
We  prove the case for 
$\psi =   \bigast_{i=1}^{k} \Arr (t_i, {{t'}}_i)_{\alpha_i} *  \psi'$
 and  $\psi_{\text{pure}} \models 
t_1 \approx b(t)
*  e(t) \approx {{t'}}_k 
*
b(t) \approx t *  
\bigast_{i=1}^{k-1} {{t'}}_i \approx t_{i+1} * \bigast_{i=1}^{k-1} {t}_i <
t_{i+1} $  for some  $k, \psi'$.
}
We use $\tau$ as an abbreviation for 
$[T_b(\pi, t) := t] [T_e(\pi, t) := y] $  for brevity.

Since $ \psi =   \bigast_{i=1}^{k} \Arr (t_i,
{{t'}}_i)_{\alpha_i} * \psi'$,
$h$ is of the form $h_1\circ h_2\circ \cdots\circ h_k\circ h''$ such that
\begin{itemize}
 \item $ (s,h_i,B)\models \Arr(t_i,{{t'}}_i)_{\alpha_i}$ for $1\le i\le k$,
 \item $(s,h'',B)\models \psi'$.
\end{itemize}
{
$(s,B)$ satisfies $t_1 \approx t\approx b(t)<{t'}_1$ and 
$e(t) \approx {t'}_k$,
}
and we have
$(s,B)\models {t'}_i\approx t_{i+1}$ for $1\le i<k$. 
Then, $h'$ can be
written as $   h_\bot \circ  h''$, where
$\textsf{dom}(h_\bot)=\textsf{dom}_{-}(h_\bot)=[\inter{t}_{s,B}, e_B(\inter{t}_{s,B}) )$.

As in the previous case, $\inter{r}_{s,B}=\inter{r\tau}_{s',B'}$
holds. 
Therefore, we have $ (s'[y\mapsto e_B(\inter{t}_{s,B})],h',B')\models
(  \bararr ( t , y ) * \psi' ) \tau$, and then 
$ (s', h', B') \models\exists y.(  \bararr ( t , y ) * \psi' ) \tau$. That means $(s',h',B')$ satisfies
${\sf wpo}(\psi,{\tt free}(t),{\it ok})$.

\subsubsection{Case 4}

{
We  prove the case for 
$\psi = \arr (t_1, {{t'}}_1 ) *  \bigast_{i=2}^{k} \Arr (t_i, {{t'}}_i)_{\alpha_i}  *  \psi'$
 and  $\psi_{\text{pure}} \models 
t_1 < b(t) < {{t'}}_1 *  e(t) \approx {{t'}}_k 
*
b(t) \approx t *  
\bigast_{i=1}^{k-1} {{t'}}_i \approx t_{i+1} * \bigast_{i=1}^{k-1} {t}_i <
t_{i+1} $  for some  $k, \psi'$.
}
We use $\tau$ as an abbreviation for 
$[T_b(\pi, t) := t] [T_e(\pi, t) := y] $  for brevity.

Since $ \psi = \arr (t_1, {{t'}}_1 ) * \bigast_{i=2}^{k} \Arr (t_i,
{{t'}}_i)_{\alpha_i} * \psi'$,
$h$ is of the form $h_1\circ h_2\circ \cdots\circ h_k\circ h''$ such that
\begin{itemize}
 \item $ (s,h_1,B)\models \arr(t_1,{{t'}}_1)$,
 \item $ (s,h_i,B)\models \Arr(t_i,{{t'}}_i)_{\alpha_i}$ for $2\le i\le k$,
 \item $(s,h'',B)\models \psi'$.
\end{itemize}
{
Since $(s,B)$ satisfies $t_1<t\approx b(t)<{t'}_1$ and $e(t) \approx {t'}_k$, 
$h_k$ is divided
into two portions $h^1_k$ and $h^2_k$,  
where $h^2_k$ is $h_k$ whose domain is shorten to
$[ e_B(\inter{t}_{s,B}) , \inter{{t'}_k}_{s,B}  )$ and 
$h^1_k(l)=\bot$ for
$l\in[ \inter{t_k}_{s,B} , e_B(\inter{t}_{s,B}) )$.  
}
Furthermore, we have
$(s,B)\models {t'}_i\approx t_{i+1}$ for $1\le i<k$. Then, $h'$ can be
written as $   h_\bot \circ h^2_k\circ h''$, where
$\textsf{dom}(h_\bot)=\textsf{dom}_{-}(h_\bot)=[\inter{t}_{s,B}, e_B(\inter{t}_{s,B}) )$.

As in the previous case, $\inter{r}_{s,B}=\inter{r\tau}_{s',B'}$
holds. 
Therefore, we have $ (s'[y\mapsto e_B(\inter{t}_{s,B})],h',B')\models
(\arr (t_1 , t) * \bararr ( t , y ) * \psi' ) \tau$, and then 
$ (s', h', B') \models\exists y.(\arr (t_1 , t) * \bararr ( t , y )  * \psi' ) \tau$. That means $(s',h',B')$ satisfies
${\sf wpo}(\psi,{\tt free}(t),{\it ok})$.

\subsubsection{Case 5}

The following is the union of all cases between Case 1 and Case 4. 

\begin{align*}
&\psi = \bigast_{i=1}^{k} \Arr (t_i, {{t'}}_i)_{\alpha_i} * \psi' \text{ and } \\
&\psi_{\text{pure}} \models 
t_1 \leq b(t) < {{t'}}_1 * t_k < e(t) \leq {{t'}}_k * \\
& b(t) \approx t *  
\bigast_{i=1}^{k-1} {{t'}}_i \approx t_{i+1} * \bigast_{i=1}^{k-1} {t}_i < t_{i+1} \\
&\text{for some $k, \psi'$}
\end{align*}

The following is equivalent to the above 
since we have that $t_i < {t'}_i$ and ${{t'}}_i \approx t_{i+1}$.

\begin{align*}
&\psi = \bigast_{i=1}^{k} \Arr (t_i, {{t'}}_i)_{\alpha_i} * \psi' \text{ and } \\
&\psi_{\text{pure}} \models 
t_1 \leq b(t) < {{t'}}_1 * t_k < e(t) \leq {{t'}}_k * \\
& b(t) \approx t *  
\bigast_{i=1}^{k-1} {{t'}}_i \approx t_{i+1} 
\text{ for some $k, \psi'$}
\end{align*}

The negation of the above formula is as follows:

\begin{align*}
& \text{If } \psi = \bigast_{i=1}^{k} \Arr (t_i, {{t'}}_i)_{\alpha_i} * \psi' \text{ and} &  \\
& \psi_{\text{pure}} \models 
t_1 \leq b(t) < {{t'}}_1 * t_k < e(t) * b(t) \approx t *  
\bigast_{i=1}^{k-1} {{t'}}_i \approx t_{i+1} 
 &  \\
&\text{for some $k,\psi'$, then } \psi_{\text{pure}} \models e(t) > {t'}_k \text{ holds.} 
& \cdots (\dagger) 
\end{align*}

In this case, we prove \(\text{WPO} \llbracket \psi , \texttt{free}(t) , \textit{ok} \rrbracket = \emptyset\)  
by contradiction.
That is, we assume that \(\text{WPO} \llbracket \psi , \texttt{free}(t) , \textit{ok} \rrbracket \neq \emptyset\)  
and show that this leads to a contradiction.

Suppose $(s, h, B) \models \psi$ and 
$( (s, h, B),  \sigma ) \in \llbracket \texttt{free($t$)} \rrbracket_{\text{ok}}$ for some $\sigma$.
By the definition of $\llbracket \texttt{free($t$)} \rrbracket_{\text{ok}}$, we have:
\[ \llbracket t \rrbracket_{s, B} = b_B(\llbracket t \rrbracket_{s, B}) > 0. \]
Since we assume that $(s, h, B)$ is exact, $\llbracket t \rrbracket_{s, B} \in \textsf{dom}_{+}(h)$ holds.
$\psi$ contains $\Arr (t_1, {t'}_1)$ for some $t_1, {t'}_1$ such that
$\psi_{\text{pure}} \models t_1 \leq t < {t'}_1 * t \approx b(t)$.

Since \((s, h, B)\) is exact,  
\[
[\llbracket b(t) \rrbracket_{s, B}, \llbracket e(t) \rrbracket_{s, B})  
\subseteq \textsf{dom}_{+}(h)
\]
holds, and \(\psi\) contains \(\Arr (t_1, {t'}_1) * \dots * \Arr (t_k, {t'}_k)\)  
such that  
$\psi_{\text{pure}} \models   e(t) > t_k  * \bigast_{i=1}^{k-1} {t'}_{i} \approx t_{i+1}$
for some $k$.

Fix \( k \) as the maximum among such values of \( k \).
Since \( k \) is maximal, \(\psi_{\text{pure}} \models {t'}_k \not\approx u\)  
holds for any \(\Arr (u, u')\) in \(\psi\).
Because \( k \) is chosen to be maximal, the chain of arrays  
\(\Arr (t_1, t'_1), \dots, \Arr (t_k, {t'}_k)\)  
cannot be extended by adding another array at the end.

If there were an array \(\Arr (u, u')\) in \(\psi\) such that  
\({t'}_k \approx u\), then this array could be attached to the end of the chain,  
creating a longer sequence of arrays.

However, since \( k \) is the maximum possible length, such an extension is not possible.  
Therefore, for every array \(\Arr (u, u')\) in \(\psi\), we must have  
\({t'}_k \not\approx u\).  
This means that \({t'}_k\) does not match the starting point \( u \)  
of any array in  \(\psi\).
Therefore, the following holds.
$$\llbracket {t'}_k \rrbracket_{s, B} \notin \textsf{dom}_{+}(h)$$

By \((\dagger)\), we have \(\psi_{\text{pure}} \models e(t) > {t'}_k\),  
which means \(\llbracket e(t) \rrbracket_{s, B} > \llbracket {t'}_k \rrbracket_{s, B}\).  
Additionally, by the condition, we have  
\[
\llbracket b(t) \rrbracket_{s, B} < \llbracket {t'}_k \rrbracket_{s, B} < \llbracket e(t) \rrbracket_{s, B}.
\]  
However, we also have \(\llbracket {t'}_k \rrbracket_{s, B} \notin \textsf{dom}_{+}(h)\),  
which contradicts the property of exactness:  
$[ \llbracket b(t) \rrbracket_{s,B} , \llbracket e(t) \rrbracket_{s,B} ) \subseteq \textsf{dom}_+(h)$.

Consequently, we conclude that \((s,h,B)\) is not exact,  
and thus  
\[
\text{WPO} \llbracket \psi , \texttt{free}(t) , \textit{ok} \rrbracket = \emptyset
\]  
holds.  
Please note that in the current setting, for all \(\sigma\) and \(\sigma'\) where  
\( (\sigma, \sigma') \in  \llbracket  \texttt{free}(t)  \rrbracket_\textit{ok} \),  
\(\sigma\) and $\sigma'$ are exact.

%
%
%


\subsection{Case for \textit{er}}
We consider the case of {\it er} and $\psi_\text{pure} \models
b(t) \not\approx t \lor b(t) \approx \texttt{null}$.  In this case, we
have $\textsf{wpo}(\psi, \texttt{free($t$)}, \textit{er}) = \psi$.  By
the definition of $\inter{{\tt free}(t)}_{\it er}$, we have $(s,h,B)\in
{\rm WPO}\inter{\psi,{{\tt free}(t)},{\it er}}$ iff $(s,h,B)\models\psi$
and either $b_B(\inter{t}_{s,B})\ne \inter{t}_{s,B}$ or
$b_B(\inter{t}_{s,B})=0$, which is equivalent to $(s,h,B)\models \psi$
by the assumption $\psi_\text{pure} \models b(t) \not\approx t \lor b(t)
\approx \texttt{null}$.

Otherwise, both $\psi_{\rm pure}\models b(t)\approx t$ and
$\psi_{\rm pure}\models b(t)\not\approx {\tt null}$ hold. Hence, ${\rm
WPO}\inter{\psi,{\tt free}(t),{\it er}}=\emptyset$.


\section{Proof of Proposition \ref{prop: expressiveness of WPO calculus} for $x := [t]$}
\label{sec: expressiveness proof of atomic command in appendix (load)}

We prove
$(s', h', B') \models \textsf{wpo}(  \psi , x := [t]  , \textit{ok} ) \iff 
(s', h', B') \in \text{WPO} [\![  \psi , x := [t]  , \textit{ok} ]\!]  $ holds.
By definition, $(s', h', B') \in \text{WPO} [\![  \psi , x := [t]   , \textit{ok} ]\!]  $
is equivalent to
$ \exists (s,h,B) \models \psi .  b_B ( [\![ t ]\!]_{s,B} ) > 0 \land
(s', h', B')  = 
(s [x \mapsto h ( [\![ t ]\!]_{s,B}  ) ] , h  , B   )  $.

\subsection{$(s', h', B') \models \textsf{wpo}( \psi , x := [t] ,
\textit{ok} ) \Rightarrow (s', h', B') \in \text{WPO} [\![ \psi , x :=
[t] , \textit{ok} ]\!]  $}

\subsubsection{Case 1}
Here, we  prove the case for $ \psi=t_a\mapsto u*\psi'$ and
$\psi_{\rm pure}\models t_a\approx t$ for some $t_a,u$, and $\psi'$.

We have the following.

{
\begin{align*}
& 
(s', h', B')  \models   \exists x' .
\psi [x := x'] * x \approx u [x := x']  \\
\Leftrightarrow \ & \exists v'. 
(s' [x' \mapsto v'] , h', B')    \models 
\psi [x := x'] * x \approx u [x := x'] \\
\Rightarrow \ & \exists v'. 
(s' [x' \mapsto v'] , h', B')    \models 
\psi [x := x']  \text{ and }
s' [x' \mapsto v'] (x) = [\![ u [x := x'] ]\!]_{s' [x' \mapsto v']  , B'} \\
\Leftrightarrow \ & \exists v'. 
(s' [x' \mapsto v'] [x \mapsto  v'] , h', B')    \models \psi  \text{ and }
s' [x' \mapsto v'] (x) = [\![ u  ]\!]_{s' [x' \mapsto v'] [x \mapsto v']  , B'} \\
\text{// }  & \text{By Lemma \ref{lma: Substitution for assignment}} \\ 
\Leftrightarrow \ & \exists v'. 
(s'  [x \mapsto  v'] , h', B')    \models \psi  \text{ and }
s'  (x) = [\![ u  ]\!]_{s' [x \mapsto v']  , B'} \\
\text{// }  & \text{Since $x'$ is fresh} \\ 
\Leftrightarrow \ & \exists v'. 
(s'  [x \mapsto  v'] , h', B')    \models \psi  \text{ and }
s'  (x) = h' ( [\![ t  ]\!]_{s' [x \mapsto v']  , B'} ) \\
\text{// }  & \text{By premise, $\psi \iff t_a \approx t * t_a \mapsto u * \psi'$ holds. } \\
\text{// }  & \text{Consequently, $ [\![ u  ]\!]_{s' [x \mapsto v']  , B'} = 
h' ( [\![ t  ]\!]_{s' [x \mapsto v']  , B'} ) $ also holds.} 
\end{align*}
}
Then, fix $v'$ and let $s$ be $s'[x\mapsto v']$, and then we have
$(s,h',B')\models \psi$. Since $ \psi=t_a\mapsto u*\psi'$ and 
$ \psi_{\rm
pure}\models t_a\approx t$ hold and $(h',B')$ is exact,
$b_{B'}(\inter{t}_{s,B'})=b_{B'}(\inter{t_a}_{s,B'})>0$ holds.
Furthermore, we have $s'(x)=h'(\inter{t}_{s'[x\mapsto v'],B'})
=h'(\inter{t}_{s,B'})$, and hence $s'=s[x\mapsto
h(\inter{t}_{s,B})]$.  Therefore, we have
$((s,h',B'),(s',h',B'))\in\inter{x:=[t]}_{\it ok}$.

\subsubsection{Case 2}

Here, we  prove the case for $ \psi= \arr (t_a , {t'}_a) *\psi'$ and
$\psi_{\rm pure}\models t_a \leq  t < {t'}_a $ 
for some $t_a,{t'}_a$, and $\psi'$.

Firstly, we prove $(s', h', B') \models \textsf{wpo}( \psi , x := [t] ,
\textit{ok} ) \Rightarrow (s', h', B') \in \text{WPO} [\![ \psi , x :=
[t] , \textit{ok} ]\!]  $. We have the following.

{
\begin{align*}
& 
(s', h', B')  \models   \exists x' .
\psi [x := x']   \\
\Leftrightarrow \ & \exists v'. 
(s' [x' \mapsto v'] , h', B')    \models 
\psi [x := x']  \\
\Leftrightarrow \ & \exists v'. 
(s' [x' \mapsto v'] [x \mapsto  v'] , h', B')    \models \psi   \\
\text{// }  & \text{By Lemma \ref{lma: Substitution for assignment}} \\ 
\Leftrightarrow \ & \exists v'. 
(s'  [x \mapsto  v'] , h', B')    \models \psi   
\end{align*}
}
Then, fix $v'$ and let $s$ be $s'[x\mapsto v']$, and then we have
$(s,h',B')\models \psi$. Since $ \psi= \arr (t_a , {t'}_a) *\psi' $ and 
$\psi_{\rm pure}\models t_a \leq  t < {t'}_a $  hold and $(h',B')$ is exact,
$b_{B'}(\inter{t}_{s,B'})=b_{B'}(\inter{t_a}_{s,B'})>0$ holds.
Furthermore, we have $s'(x)=h'(\inter{t}_{s'[x\mapsto v'],B'})
=h'(\inter{t}_{s,B'})$, and hence $s'=s[x\mapsto
h(\inter{t}_{s,B})]$.  Therefore, we have
$((s,h',B'),(s',h',B'))\in\inter{x:=[t]}_{\it ok}$.

\subsubsection{Case 3}

Otherwise, 
$(s', h', B') \models \textsf{wpo}( \psi , x := [t] ,
\textit{ok} ) \Rightarrow (s', h', B') \in \text{WPO} [\![ \psi ,
x := [t], \textit{ok} ]\!]  $ holds trivially 
since $\textsf{wpo}( \psi , x := [t] ,
\textit{ok} )  = \textsf{false}$.

\subsection{$(s', h', B') \in \text{WPO} [\![  \psi , x := [t] , \textit{ok} ]\!]
\Rightarrow
(s', h', B') \models \textsf{wpo}(  \psi , x := [t] , \textit{ok} )$}
Next, we prove the converse.

\subsubsection{Case 1}
Here, we  prove the case for $ \psi=t_a\mapsto u*\psi'$ and
$\psi_{\rm pure}\models t_a\approx t$ for some $t_a,u$, and $\psi'$.

By the assumption, there exists $s$ such that
\begin{itemize}
 \item $(s,h',B')\models \psi$,
 \item $b_{B'}(\inter{t}_{s,B'})>0$, and
 \item $s'=s[x\mapsto h'(\inter{t}_{s,B'})]$.
\end{itemize}

Since $x'$ is fresh, they do not depend on the value of
$s(x')$, and hence we can assume $s(x')=s(x)$, and then we have
$ s = s'[x' \mapsto s(x)]  [x \mapsto s(x) ] $.
Since we have $(s,h',B') \models \psi$,  
we can derive  
{\begin{align*}
&(s,h',B') \models \psi \\ 
\Rightarrow \ &(s'[x' \mapsto s(x)] [x \mapsto s(x)], h', B') \models \psi \\
\Rightarrow \ & (s'[x' \mapsto s(x)], h', B') \models \psi[x := x']
\end{align*}}
by Lemma \ref{lma: Substitution for assignment} and the fact that  
$s'[x' \mapsto s(x)](x') = s(x)$.
Let $s''$ be $s'[x' \mapsto s(x)]$.
$s''(x) = s' [x' \mapsto s(x)] (x) = s'(x) = h' (\inter{t}_{s,B'})$ by the definition.
The equality  
$ h' (\inter{t}_{s,B'}) = h' (\inter{t_a}_{s,B'})$
holds by $(s,h',B) \models \psi$ and $ \psi_\text{pure} \models t \approx t_a$.  
$ h' (\inter{t_a}_{s,B'}) = \inter{u}_{s,B'}$ since $\psi$ contains 
$ t_a \mapsto u$.
Additionally, the following holds.
\begin{align*}
    \inter{u}_{s,B'} &=  \inter{u}_{s'[x' \mapsto s(x) ] [x \mapsto s(x)],B'}  
    & \text{// Definition of $s$} \\
    &= \inter{u[x:=x'] }_{s'[x' \mapsto s(x) ] ,B'}  & 
    \text{// By Lemma \ref{lma: Substitution for assignment}} \\
    &\   & 
    \text{// and $s'[x' \mapsto s(x)](x') = s(x)$} \\
        &= \inter{u [x:=x']  }_{s'' ,B'} 
\end{align*}

Hence, $ (s'',B') \models x \approx u[x := x']$, and  
$ (s'',h',B') \models \psi[x := x'] * x \approx u[x := x']$,  
since $ s''(x) = \inter{u [x := x']}_{s'',B'}$.
Then, we have $ (s',h',B')\models \exists x'.\psi[x:=x']*x\approx
u[x:=x']$,
which is ${\sf wpo}(\psi,x:=[t],{\it ok})$.

\subsubsection{Case 2}
Here, we  prove the case for $ \psi= \arr (t_a , {t'}_a) *\psi'$ and
$\psi_{\rm pure}\models t_a \leq  t < {t'}_a $ 
for some $t_a,{t'}_a$, and $\psi'$.

Since $x'$ is fresh, they do not depend on the value of
$s(x')$, and hence we can assume $s(x')=s(x)$, and then we have
$ s = s'[x' \mapsto s(x)]  [x \mapsto s(x) ] $.
Since we have $(s,h',B') \models \psi$,  
we can derive  
{\begin{align*}
&(s,h',B') \models \psi \\ 
\Rightarrow \ &(s'[x' \mapsto s(x)] [x \mapsto s(x)], h', B') \models \psi \\
\Rightarrow \ & (s'[x' \mapsto s(x)], h', B') \models \psi[x := x'] \\
\Rightarrow \ & \exists v' . (s'[x' \mapsto v'], h', B') \models \psi[x := x'] \\
\Leftrightarrow \ &  (s', h', B') \models \exists x' .  \psi[x := x'] 
\end{align*}}
by Lemma \ref{lma: Substitution for assignment} and the fact that  
$s'[x' \mapsto s(x)](x') = s(x)$.
Then, we have $ (s',h',B')\models \exists x'.\psi[x:=x']$,
which is ${\sf wpo}(\psi,x:=[t],{\it ok})$.

\subsubsection{Case 3}

The following is the union of all cases between Case 1 and Case 2. 

\begin{align*}
&\text{$\psi$ contains some $\Arr (t_a, {{t'}}_a)_{\alpha}$
such that} \\
&\text{$\psi_{\text{pure}} \models 
t_a \leq t < {t'}_a$}
\end{align*}

The negation of the above formula is as follows:

\begin{align*}
&\text{For any $\Arr (t_a, {{t'}}_a)_{\alpha}$ in $\psi$, } \\
&\text{$\psi_{\text{pure}} \not\models 
t_a \leq t < {t'}_a$}
\end{align*}

Suppose that \((s,h,B) \models \psi\).  
By the given condition, we have  
\[
\llbracket t \rrbracket_{s,B} \notin \textsf{dom}_+ (h).
\]  
Hence, there is no \(\sigma\) such that  
\[
((s,h,B) , \sigma ) \in \llbracket x := [t] \rrbracket_\textit{ok}.
\]  
Therefore, \(\text{WPO} \llbracket \psi , x := [t] , \textit{ok} \rrbracket = \emptyset\) holds.

\subsection{Case for \textit{er}}

Next, we consider the case of {\it er} and $\psi_\text{pure} \models
b(t) \approx \texttt{null}$.  In this case, we
have $\textsf{wpo}(\psi, x:=[t], \textit{er}) = \psi$.  By
the definition of $\inter{x:=[t]}_{\it er}$, we have $(s,h,B)\in
{\rm WPO}\inter{\psi,x:=[t],{\it er}}$ iff $(s,h,B)\models\psi$
and $b_B(\inter{t}_{s,B})=0$, which is equivalent to $(s,h,B)\models \psi$
by the assumption $\psi_\text{pure} \models b(t)
\approx \texttt{null}$.

Otherwise, $\psi_{\rm pure}\models
 b(t)\not\approx {\tt null}$ hold. Hence, ${\rm
WPO}\inter{\psi,x:=t,{\it er}}=\emptyset$.

\section{Proof of Proposition \ref{prop: expressiveness of WPO calculus} for $[t] := t'$}
\label{sec: expressiveness proof of atomic command in appendix (store)}

We prove
$(s', h', B') \models \textsf{wpo}(  \psi , [t] := {t'}  , \textit{ok} ) \iff 
(s', h', B') \in \text{WPO} [\![  \psi ,  [t] := {t'}  , \textit{ok} ]\!]  $ holds.
By definition, $(s', h', B') \in \text{WPO} [\![  \psi ,  [t] := {t'}   , \textit{ok} ]\!]  $
is equivalent to
$ \exists (s,h,B) \models \psi .  b_B ( [\![ t ]\!]_{s,B} ) > 0 \land
(s', h', B')  = 
(s  , h [ [\![ t ]\!]_{s,B} \mapsto [\![ {t'} ]\!]_{s,B}  ]  , B   )  $.

\subsection{$(s', h', B') \models \textsf{wpo}(  \psi , [t] := {t'}  , \textit{ok} ) \Rightarrow
(s', h', B') \in \text{WPO} [\![  \psi , [t] := {t'}  , \textit{ok} ]\!]
$}
Firstly, we prove $(s', h', B') \models \textsf{wpo}(  \psi , [t] := {t'}  , \textit{ok} ) \Rightarrow
(s', h', B') \in \text{WPO} [\![  \psi , [t] := {t'}  , \textit{ok} ]\!]
$.

\subsubsection{Case 1}
We prove the case for $ \psi= t_a \mapsto - * \psi'$ and 
$ \psi_{\rm pure}\models t_a \approx t $ for some $ t_a$, and $ \psi'$.

By the assumption $(s',h',B')$ contains a portion satisfying $t\mapsto
{t'}$, and hence $h'(\inter{t}_{s',B'})=\inter{{t'}}_{s',B'}$.

We have the following.
{ 
\begin{align*}
& 
(s', h', B')  \models    t \mapsto {t'} * \psi' \\
\Rightarrow \ & 
\exists v' . (s', h' [ [\![ t ]\!]_{s',B'} \mapsto v'  ]  , B')  \models
  t \mapsto - * \psi' \\
\Leftrightarrow \ & 
\exists v' . ( s', h' [ [\![ t ]\!]_{s',B'} \mapsto v'  ]  , B' )  \models 
  t_a \mapsto - * \psi' \\
\text{// } & 
 \text{Since $\psi_{\rm pure}\models t_a \approx t $ holds}
 \end{align*}
}
Fix $v'$ and let $h$ be $h' [ [\![ t ]\!]_{s',B'} \mapsto v'  ]$,
and then we have $(s',h,B')\models\psi$. Since
$h'=h[\inter{t}_{s',B'}\mapsto \inter{{t'}}_{s',B'}]$, we have
$((s',h,B'),(s',h',B'))\in\inter{[t]:={t'}}_{\it ok}$.

\subsubsection{Case 2}

We prove the case for $ \psi=\arr(t_a,{t'}_a)*\psi'$ and 
$ \psi_{\rm pure}\models t_a \approx t $
for some $ t_a,{t'}_a$, and $ \psi'$.

Firstly, we prove $(s', h', B') \models \textsf{wpo}(  \psi , [t] := {t'}  , \textit{ok} ) \Rightarrow
(s', h', B') \in \text{WPO} [\![  \psi , [t] := {t'}  , \textit{ok} ]\!]
$.
By the assumption $(s',h',B')$ contains a portion satisfying $t\mapsto
{t'}$, and hence $h'(\inter{t}_{s',B'})=\inter{{t'}}_{s',B'}$.

We have the following.
{ 
\begin{align*}
& 
(s', h', B')  \models   t \mapsto {t'} * \arr (t+1 , {{t'}}_a) * \psi' \\
\Rightarrow \ & 
\exists v' . (s', h' [ [\![ t ]\!]_{s',B'} \mapsto v'  ]  , B')  \models
 \arr (t, t+1)  * \arr (t+1 , {{t'}}_a) * \psi' \\
\Rightarrow \ & 
\exists v' . ( s', h' [ [\![ t ]\!]_{s',B'} \mapsto v'  ]  , B' )  \models   \arr (t_a,  {{t'}}_a) * \psi' \\
\text{// } & 
 \text{Appending arrays}
 \end{align*}
}
Fix $v'$ and let $h$ be $h' [ [\![ t ]\!]_{s',B'} \mapsto v'  ]$,
and then we have $(s',h,B')\models\psi$. Since
$h'=h[\inter{t}_{s',B'}\mapsto \inter{{t'}}_{s',B'}]$, we have
$((s',h,B'),(s',h',B'))\in\inter{[t]:={t'}}_{\it ok}$.

\subsubsection{Case 3}

We prove the case for $ \psi=\arr(t_a,{t'}_a)*\psi'$ and 
$ \psi_{\rm pure}\models t_a<t*t+1<{t'}_a$ for some $ t_a,{t'}_a$, and $ \psi'$.

Firstly, we prove $(s', h', B') \models \textsf{wpo}(  \psi , [t] := {t'}  , \textit{ok} ) \Rightarrow
(s', h', B') \in \text{WPO} [\![  \psi , [t] := {t'}  , \textit{ok} ]\!]
$.
By the assumption $(s',h',B')$ contains a portion satisfying $t\mapsto
{t'}$, and hence $h'(\inter{t}_{s',B'})=\inter{{t'}}_{s',B'}$.

We have the following.
{ 
\begin{align*}
& 
(s', h', B')  \models   \arr (t_a, t) * t \mapsto {t'} * \arr (t+1 , {{t'}}_a) * \psi' \\
\Rightarrow \ & 
\exists v' . (s', h' [ [\![ t ]\!]_{s',B'} \mapsto v'  ]  , B')  \models
\arr (t_a, t) *  \arr (t, t+1)  * \arr (t+1 , {{t'}}_a) * \psi' \\
\Rightarrow \ & 
\exists v' . ( s', h' [ [\![ t ]\!]_{s',B'} \mapsto v'  ]  , B' )  \models   \arr (t_a,  {{t'}}_a) * \psi' \\
\text{// } & 
 \text{Appending arrays}
 \end{align*}
}
Fix $v'$ and let $h$ be $h' [ [\![ t ]\!]_{s',B'} \mapsto v'  ]$,
and then we have $(s',h,B')\models\psi$. Since
$h'=h[\inter{t}_{s',B'}\mapsto \inter{{t'}}_{s',B'}]$, we have
$((s',h,B'),(s',h',B'))\in\inter{[t]:={t'}}_{\it ok}$.

\subsubsection{Case 4}

We prove the case for $ \psi=\arr(t_a,{t'}_a)*\psi'$ and 
$ \psi_{\rm pure}\models t+1 \approx {t'}_a$ for some $ t_a,{t'}_a$, and $ \psi'$.

Firstly, we prove $(s', h', B') \models \textsf{wpo}(  \psi , [t] := {t'}  , \textit{ok} ) \Rightarrow
(s', h', B') \in \text{WPO} [\![  \psi , [t] := {t'}  , \textit{ok} ]\!]
$.
By the assumption $(s',h',B')$ contains a portion satisfying $t\mapsto
{t'}$, and hence $h'(\inter{t}_{s',B'})=\inter{{t'}}_{s',B'}$.

We have the following.
{ 
\begin{align*}
& 
(s', h', B')  \models   \arr (t_a, t) * t \mapsto {t'}  * \psi' \\
\Rightarrow \ & 
\exists v' . (s', h' [ [\![ t ]\!]_{s',B'} \mapsto v'  ]  , B')  \models
\arr (t_a, t) *  \arr (t, t+1)  * \psi' \\
\Rightarrow \ & 
\exists v' . ( s', h' [ [\![ t ]\!]_{s',B'} \mapsto v'  ]  , B' )  \models   \arr (t_a,  {{t'}}_a) * \psi' \\
\text{// } & 
 \text{Appending arrays}
 \end{align*}
}
Fix $v'$ and let $h$ be $h' [ [\![ t ]\!]_{s',B'} \mapsto v'  ]$,
and then we have $(s',h,B')\models\psi$. Since
$h'=h[\inter{t}_{s',B'}\mapsto \inter{{t'}}_{s',B'}]$, we have
$((s',h,B'),(s',h',B'))\in\inter{[t]:={t'}}_{\it ok}$.

\subsubsection{Case 5}

Otherwise, 
$(s', h', B') \models \textsf{wpo}( \psi ,  [t]:={t'} ,
\textit{ok} ) \Rightarrow (s', h', B') \in \text{WPO} [\![ \psi ,
 [t]:={t'}, \textit{ok} ]\!]  $ holds trivially 
since $\textsf{wpo}( \psi ,  [t]:={t'} ,
\textit{ok} )  = \textsf{false}$.

\subsection{$(s', h', B') \in \text{WPO} [\![ \psi ,
[t] := {t'} , \textit{ok} ]\!]  \Rightarrow (s', h', B') \models
\textsf{wpo}( \psi , [t] := {t'} , \textit{ok} ) $} 
Next, we prove the converse: $(s', h', B') \in \text{WPO} [\![ \psi ,
[t] := {t'} , \textit{ok} ]\!]  \Rightarrow (s', h', B') \models
\textsf{wpo}( \psi , [t] := {t'} , \textit{ok} ) $.

\subsubsection{Case 1}
We prove the case for $ \psi= t_a \mapsto - * \psi'$ and 
$ \psi_{\rm pure}\models t_a \approx t $ for some $ t_a$, and $ \psi'$.
By the assumption,
there exists $h$ such that $(s',h,B')\models \psi$,
$b_{B'}(\inter{t}_{s',B'})>0$, and $h'=h[\inter{t}_{s',B'}\mapsto
\inter{{t'}}_{s',B'}]$.  
Since $  \psi=t_a \mapsto -  * \psi'$
and $ \psi_{\rm pure}\models t_a \approx t $, we have
$[\inter{t_a}_{s',B'},\inter{{t}_a}_{s',B'} + 1 )\subseteq\textsf{dom}_{+}(h)$ and
$  \inter{t_a}_{s',B'} = \inter{t}_{s',B'}<\inter{{t}_a}_{s',B'}+1$.
Then, since $  (s', h, B') \models t_a \mapsto -  * \psi'$, we have  
\[  
(s', h', B') \models  t \mapsto {t'} * \psi',
\]
which is ${\sf wpo}(\psi, [t] := {t'}, {\it ok})$.

\subsubsection{Case 2}

We prove the case for $ \psi=\arr(t_a,{t'}_a)*\psi'$ and 
$ \psi_{\rm pure}\models t_a \approx t $
for some $ t_a,{t'}_a$, and $ \psi'$.

 By the assumption,
there exists $h$ such that $(s',h,B')\models \psi$,
$b_{B'}(\inter{t}_{s',B'})>0$, and $h'=h[\inter{t}_{s',B'}\mapsto
\inter{{t'}}_{s',B'}]$.  
Since $  \psi=\arr(t_a,{t'}_a) * \psi'$
and $ \psi_{\rm pure}\models t_a \approx t $, we have
$[\inter{t_a}_{s',B'},\inter{{t'}_a}_{s',B'})\subseteq\textsf{dom}_{+}(h)$ and
$  \inter{t_a}_{s',B'} = \inter{t}_{s',B'}<\inter{{t'}_a}_{s',B'}$.
Then, since $  (s', h, B') \models \arr (t_a, {{t'}}_a) * \psi'$, we have  
\[  
(s', h', B') \models t \mapsto {t'} * \arr (t+1 , {{t'}}_a) * \psi',
\]
which is ${\sf wpo}(\psi, [t] := {t'}, {\it ok})$.

\subsubsection{Case 3}

We prove the case for $ \psi=\arr(t_a,{t'}_a)*\psi'$ and 
$ \psi_{\rm pure}\models t_a<t*t+1<{t'}_a$ for some $ t_a,{t'}_a$, and $ \psi'$.

By the assumption,
there exists $h$ such that $(s',h,B')\models \psi$,
$b_{B'}(\inter{t}_{s',B'})>0$, and $h'=h[\inter{t}_{s',B'}\mapsto
\inter{{t'}}_{s',B'}]$.  
Since $  \psi=\arr(t_a,{t'}_a) * \psi'$
and $ \psi_{\rm pure}\models t_a<t<{t'}_a$, we have
$[\inter{t_a}_{s',B'},\inter{{t'}_a}_{s',B'})\subseteq\textsf{dom}_{+}(h)$ and
$  \inter{t_a}_{s',B'}<\inter{t}_{s',B'}<\inter{{t'}_a}_{s',B'}$.
Then, since $  (s', h, B') \models \arr (t_a, {{t'}}_a) * \psi'$, we have  
\[  
(s', h', B') \models \arr (t_a, t) * t \mapsto {t'} * \arr (t+1 , {{t'}}_a) * \psi',
\]
which is ${\sf wpo}(\psi, [t] := {t'}, {\it ok})$.

\subsubsection{Case 4}

We prove the case for $ \psi=\arr(t_a,{t'}_a)*\psi'$ and 
$ \psi_{\rm pure}\models t+1 \approx {t'}_a$ for some $ t_a,{t'}_a$, and $ \psi'$.

By the assumption,
there exists $h$ such that $(s',h,B')\models \psi$,
$b_{B'}(\inter{t}_{s',B'})>0$, and $h'=h[\inter{t}_{s',B'}\mapsto
\inter{{t'}}_{s',B'}]$.  
Since $  \psi=\arr(t_a,{t'}_a) * \psi'$
and $ \psi_{\rm pure}\models t+1 \approx {t'}_a$, we have
$[\inter{t_a}_{s',B'},\inter{{t'}_a}_{s',B'})\subseteq\textsf{dom}_{+}(h)$ and
$  \inter{t_a}_{s',B'}<\inter{t}_{s',B'}<\inter{{t'}_a}_{s',B'}$.
Then, since $  (s', h, B') \models \arr (t_a, {{t'}}_a) * \psi'$, we have  
\[  
(s', h', B') \models \arr (t_a, t) * t \mapsto {t'}  * \psi',
\]
which is ${\sf wpo}(\psi, [t] := {t'}, {\it ok})$.

\subsubsection{Case 5}

The following is the union of all cases between Case 1 and Case 4.

\begin{align*}
&\text{$\psi$ contains some $\Arr (t_a, {{t'}}_a)_{\alpha}$
such that} \\
&\text{$\psi_{\text{pure}} \models 
t_a \leq t  * t+1 \leq {t'}_a$
}
\end{align*}

The negation of the above formula is as follows:

\begin{align*}
&\text{For any $\Arr (t_a, {{t'}}_a)_{\alpha}$ in $\psi$, } \\
&\text{$\psi_{\text{pure}} \not\models 
t_a \leq t  * t+1 \leq {t'}_a$}
\end{align*}

Suppose that \((s,h,B) \models \psi\).  
By the given condition, we have  
\[
\llbracket t \rrbracket_{s,B} \notin \textsf{dom}_+ (h).
\]  
Hence, there is no \(\sigma\) such that  
\[
((s,h,B) , \sigma ) \in \llbracket  [t] := {t'} \rrbracket_\textit{ok}.
\]  
Therefore, \(\text{WPO} \llbracket \psi ,  [t] := {t'} , \textit{ok} \rrbracket = \emptyset\) holds.

\subsection{Case for \textit{er}}

The case for {\it er} is proven in a similar way to the case of $x:=[t]$.

\end{document}